\pgfplotsset{
	compat=newest
}
\newcommand{\deriv}{\mathop{}\mathopen{}\mathrm{d}}
\newcommand{\be}[1]{\begin{equation}\label{#1}}
\newcommand{\ee}{\end{equation}}
\newcommand{\bu}{\bar u}
\newcommand{\bv}{\bar v}
\newcommand\CC{\mathbb C}
\newcommand\NN{\mathbb N}
\newcommand{\ben}{\begin{eqnarray}}
\newcommand{\een}{\end{eqnarray}}
\newtheorem{remark}{Remark}[section]
\newtheorem{remarks}{Remarks}[section]
\newtheorem{examples}{Examples}[section]
\newtheorem{prop}{Proposition}[section]
\newtheorem{Def}{Definition}[section]
\def\<{\langle}
\def\>{\rangle}
\def\ccc{\mathbb{C}}
\numberwithin{equation}{section}
\begin{document}

\date{}

\title{Modified rational six vertex model on a rectangular lattice : new formula, homogeneous and thermodynamic limits}

\author[1]{Matthieu Cornillault \thanks{matthieu.cornillault@univ-tours.fr}}
\author[1]{Samuel Belliard \thanks{samuel.belliard@univ-tours.fr}}

\affil[1]{Institut Denis Poisson, CNRS-UMR 7013, Universit\'e de Tours, Parc de Grandmont, 37200 Tours, France}

\maketitle

\begin{abstract}
We continue the work of \cite{Belliard2024} studying the modified rational six vertex model. We find another formula of the partition function for the inhomogeneous model, in terms of a determinant that mix the modified Izergin one and a Vandermonde one.
This expression enables us to compute the partition function in the homogeneous limit for the rectangular lattice, and then to study the thermodynamic limit. It leads to a new result, we obtain the first order of free energy with boundary effects in the thermodynamic limit.
\end{abstract}

\newpage

\tableofcontents


\newpage

\section*{Introduction}

\addcontentsline{toc}{section}{Introduction}

Following on from the work of one of the authors and collaborator's \cite{Belliard2024}, we study the XXX six vertex model on a rectangular lattice with general boundary conditions (GBC), called the modified rational six vertex (MR6V) model. 

For some boundary conditions, the partition function of the corresponding six-vertex model is represented by a determinant. In the case of XXX or XXZ models on the square lattice with domain wall boundary conditions (DWBC), it is the so-called {\bf Izergin determinant} \cite{Izergin1987,Izergin1992,Korepin1982,Korepin1993}. More complex boundary conditions exist : as example we quote ones related to reflection algebra, which lead to the {\bf Tsuchiya determinant} \cite{Tsuchiya1998}.

A first generalization to rectangular lattice was the so called six vertex model with partial domain wall boundary conditions (pDWBC), introduced in \cite{Foda2012}. 
This model enabled to recover the \textbf{Kostov's determinant} \cite{Kostov2012, Kostov2012a}, linked with the $\mathcal{N} = 4$ supersymmetric Yang-Mills theory.
This model combines boundaries with same spin (up or down) and others that mixed up and down spins.

The GBC are a generalization of the pDWBC : the boundaries no longer contain up or down spins but linear combinations of up and down spins. Related to quantum algebra, the MR6V model is connected to the XXX $\frac12$-spin chain with a general twist, which can be represented by \textbf{modified Bethe vectors}, constructed from the \textbf{modified algebraic Bethe ansatz} \cite{Belliard2015,Belliard2018}. Its partition function, linked with the modified Bethe vectors, is represented by the {\bf modified Izergin determinant} (MID). This determinant is similar to the Izergin determinant but it depends also on an additional parameter and a normalization factor.

In this paper, we consider in particular the thermodynamic limit of the MR6V model. This limit is studied in two cases : first, when the lattice tends to a very long lattice, as in \cite{Minin2021} with a rectangular lattice with pDWBC, and then when it tends to an infinite lattice in both directions, as in \cite{Korepin2000} with a square lattice with DWBC. 

This paper is organized as follows. In the section \ref{MR6V}, after recalling the MR6V model, the definition and the known formulae of its partition function, we present a new formula for this partition function. 
This formula is proved in the appendix \ref{appendixA}, first using a similar reasoning as in \cite{Foda2012} and then, in an alternative proof, showing that the new formulae is equal to known ones. 
Moreover, in the appendix \ref{pDWBC}, we make the link with the pDWBC partition function introduced in \cite{Foda2012}.
Then, in the section \ref{applications}, we use the new formula of the partition function to compute first the homogeneous limit of the MR6V model and, then, its thermodynamic limit. The appendix \ref{appendixC} explains in details the computations for the infinite lattice.
Finally, the appendix \ref{mathematicaltools} presents some useful mathematical tools for our computations.
 
\newpage

\paragraph{Notation} Like in \cite{Belliard2024}, we use a shorthand notation for sets of variables and products over them.
For example, we denote a set of $n$ variables $u_i$ by $\bar u = \{u_1,\dots,u_n\}$. We usually leave the cardinality implicit and note it as $\# \bar u=n$.
The removal of the $i$-th element of the set $\bar u$ is denoted $\bar u_i=\bar u \backslash u_i$. For the product of a two variable function $g(u,v)$ over the set $\bar u$ we use,
\ben\label{nota1}
g(z,\bar u) = \prod_{x\in \bar u}g(z,x),\quad
g(\bar u,z) = \prod_{x\in \bar u}g(x,z),\quad
g(\bar u,\bar v) = \prod_{x\in \bar u,y\in \bar v}g(x,y)\,.
\een
We will also use such notation for the product of commuting operators
\ben\label{nota1o}
B(\bar u)= \prod_{x\in \bar u} B(x).
\een
If no product is involved, a vertical bar is used to indicate the
multivariable function dependency, {\it e.g.},
\ben\label{nota2}
s(u|\bar u) = s(u|{u_1,\dots,u_{n}})\,,\quad r(\bar u|\bar v) = r(u_1,\dots, u_n|v_1,\dots,v_m).
\een
The following functions will be used 
\ben\label{funcs}
g(u,v)=\frac{c}{u-v},\quad f(u,v)=\frac{u-v+c}{u-v},\quad h(u,v)=\frac{u-v+c}{c}.
\een
which satisfy the relations
\ben\label{prop-funcs}
f(u,v)=g(u,v)+1=g(u,v)h(u,v).
\een
We consider also two functions 
\begin{equation}
    \phi_\beta(x) = \frac{c}{x}-\beta\frac{c}{x+c} \quad  \text{and} \quad \psi_k(x) = \left(-\frac{x}{c}\right)^k.
\end{equation}
and we denote $\phi_1 = \phi_{\beta = 1}$.
We denote also the Vandermonde determinants
\begin{equation}
    \Delta(\bu) = \prod_{1 \leqslant i < j \leqslant n} g(u_j,u_i)^{-1} \quad \text{and} \quad \Delta'(\bu) = \Delta(-\bu) = \prod_{1 \leqslant i < j \leqslant n} g(u_i,u_j)^{-1}.
\end{equation}

\newpage

\section{Modified rational six vertex (MR6V) model}\label{MR6V}

\subsection{Definition}

First, we recall the description of the inhomogeneous six-vertex model on the $n \times m$ rectangular lattice in the framework of the Quantum Inverse Scattering Method (QISM).
According to the figure \ref{fig:reseau_inhomogene}, we assign to each line and to each column a vector space, $V_{a_i}$ for the lines $i \in [\![1,n]\!]$ and $V_{b_j}$ for columns $j \in [\![1,m]\!]$, and a free parameter, $u_i$ for rows and $v_j$ for columns, also called inhomogeneity. We order the lines from bottom to top and orient them from right to left. And we order the columns from right to left and orient them from bottom to top.

We assign to each segment of the lattice a spin which can be up or down. Each spin can have only two possible states so the vector spaces $V_{a_i}$ and $V_{b_j}$ are isomorphic to $\ccc^2$.
Each vertex is labeled by a pair of variable $(u,v)$, linked with a pair of vector spaces $(V_a,V_b)$ and is described by a $R$-matrix $R_{ab}(u,v)$ (see the figure \ref{fig:matrice_R}) whose entries are the statistical weights of the model. These entries are determined using the orientation of the line and the column.

\begin{figure}[ht]
    \centering
    \begin{subfigure}[h]{0.5\textwidth}
        \centering
        \includegraphics[width=\textwidth]{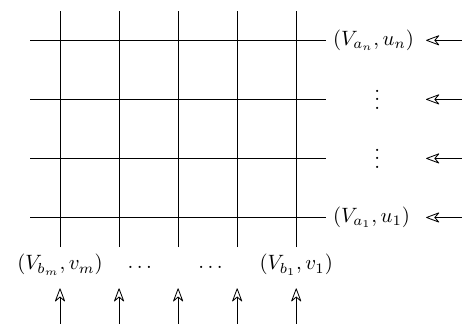}
        \caption{\centering The inhomogeneous six-vertex model on the $n \times m$ rectangular lattice.}
        \label{fig:reseau_inhomogene}
    \end{subfigure}
    \hspace{10mm}
    \begin{subfigure}[h]{0.35\textwidth}
        \centering
        \includegraphics[width=\textwidth]{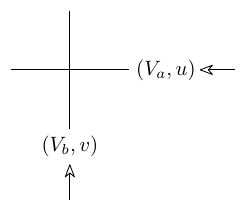}
        \caption{\centering Graphical picture of a vertex and the $R$-matrix.}
        \label{fig:matrice_R}
    \end{subfigure}   
    \caption{\centering Modelisation of the lattice and vertices. The open arrows indicate the orientation of the lines.}
    \label{fig:modelisation_reseau}
\end{figure}

In the case of the six-vertex model, the vertices of the lattice can have only six possible forms (see the figure \ref{fig:six_vertex}). We denote the energy of each possible vertex $\epsilon_i$, $i \in [\![1,6]\!]$, and their statistical weights $\omega_i = \exp(-\epsilon_i/(k_BT))$. Considering the zero field assumption (see \cite{Baxter1982} for details), we have
\begin{equation}
    a := \omega_1 = \omega_2, \quad b := \omega_3 = \omega_4 \quad \text{and} \quad c := \omega_5 = \omega_6.
\end{equation}

\begin{figure}[ht]
    \centering
    \includegraphics[width=\textwidth]{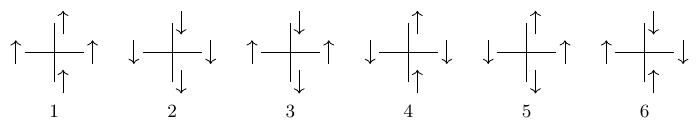}
    \caption{\centering The six different vertices in the QISM.}
    \label{fig:six_vertex}
\end{figure}

For the rational case, using the inhomogeneities $u$ and $v$, the statistical weight of the vertex labeled by $(u,v)$ can be equal to
\begin{equation}
    a(u,v) = \frac{u-v+c}{c}, \quad b(u,v) = \frac{u-v}{c} \quad \text{or} \quad c(u,v) = 1.
\end{equation}
The corresponding $R$-matrix is given by the formula
\be{RXXX}
    R_{ab}(u,v) = R_{ab}(u-v) = \frac{u-v}{c}I + P_{ab}
\end{equation}
where $I$ is the identity matrix and $P_{ab}$ the permutation matrix. $R_{ab}(u,v)$ is an operator acting on $V_a \otimes V_b$ so we can represent it by a $4 \times 4$ matrix :
\begin{equation}
    R_{ab}(u,v) = \begin{pmatrix}
                    \frac{u-v+c}{c} & 0 & 0 & 0 \\
                    0 & \frac{u-v}{c} & 1 & 0 \\
                    0 & 1 & \frac{u-v}{c} & 0 \\
                    0 & 0 & 0 &\frac{u-v+c}{c}
                  \end{pmatrix}.
\end{equation}
It is one of the simplest solution of the Yang-Baxter equation
\begin{equation}\label{eqYB}
    R_{ab}(u,v)R_{ac}(u,w)R_{bc}(v,w) = R_{bc}(v,w)R_{ac}(u,w)R_{ab}(u,v)
\end{equation}
that ensures the integrability of the model (cf definition 7.5.2 in \cite{Chari2000}). 
This $R$-matrix has the property 
\begin{equation}
    [R_{ab}(u,v), B_aB_b] = [R_{ab}(u,v), \hat C_a \hat C_b] = 0
\end{equation}
with $B$ and $\hat C$ are $2 \times 2$ complex matrices. This property has motivated the following definition of the twisted inhomogeneous partition function.

\begin{Def}
    \label{defZ1}
    The partition function of the inhomogeneous six vertex model on the $n \times m$ rectangular lattice with general boundary conditions is
    \begin{equation}
        Z_{nm}(\bu|\bv|B|\hat C) = \underset{\bar a, \bar b}{\mathrm{tr}}\left( \prod_{i=1}^n B_{a_i} \prod_{j=1}^m \hat C_{b_j} \prod_{i=1}^n \prod_{j=1}^m R_{a_ib_j}(u_i-v_j) \right).
    \end{equation}
\end{Def}

\begin{remark}
    The first subscript of $Z_{nm}(\bu|\bv|B|\hat C)$ indicates the cardinal of $\bu$ and the second one the cardinal of $\bv$. It will be useful in the following part.
    
    We recall that $Z_{nm}(\bu|\bv|B|\hat C)$ is symmetric on the elements of $\bu$ and on ones of $\bv$.
\end{remark}

Depending on the nature of the twists $\{B,\hat C\}$, we have different boundary conditions for the partition function. If both $B$ and $\hat C$ are invertible, the lattice is a torus. If only one of the twists is invertible, we have a cylinder. Finally, if $B$ and $\hat C$ are not invertible, we have a plane. From now on, we restrict ourselves on this last case in this paper.

We consider a plane with general boundary conditions (see the figure \ref{fig:bords_generaux}), defined by the four vectors
\begin{equation}
\ket{x} = x_1 \ket{1} + x_2 \ket{2} = \begin{pmatrix} x_1 \\ x_2 \end{pmatrix}
\end{equation}
with $x \in \{n,s,e,w\}$ some free set of complex vectors. $(\ket{1},\ket{2})$ forms the canonical basis of $\ccc^2$, the first vector representing the up spin and the second one the down spin. We denote the dual vector as $\bra{x}$ and we have $\braket{i|j} = \delta_{ij}$ for $i,j \in \{1,2\}$.

\begin{Def}
    \label{defZ2}
    We define the following twist
    \begin{equation}
        B = \ket{e} \otimes \bra{w} \quad \text{and} \quad \hat C = \ket{s} \otimes \bra{n}.
    \end{equation}
    We can rewrite the partition function with the following formula
    \begin{equation}
        Z_{nm}(\bu|\bv|B|\hat C) = \bra{W} \otimes \bra{N} \left( \prod_{i=1}^n \prod_{j=1}^m R_{a_ib_j}(u_i-v_j) \right) \ket{E} \otimes \ket{S}
    \end{equation}
    with 
    \begin{equation}
        \bra{N} = \bigotimes_{j=1}^m \bra{n}_{b_j}, \quad \bra{W} = \bigotimes_{i=1}^n \bra{w}_{a_i}, \quad \ket{S} = \bigotimes_{j=1}^m \ket{s}_{b_j} \quad \text{and} \quad \ket{E} = \bigotimes_{i=1}^n \ket{e}_{a_i}.
    \end{equation}
\end{Def}

\begin{figure}[ht]
    \centering
    \includegraphics[width=0.4\textwidth]{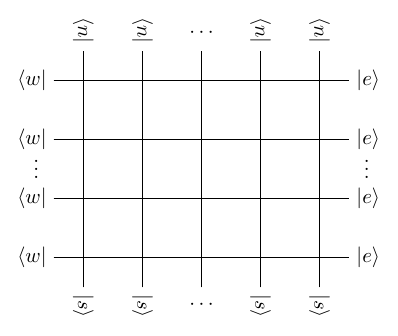}
    \caption{\centering The $n \times m$ rectangular lattice with general boundary conditions.}
    \label{fig:bords_generaux}
\end{figure}

\subsection{Known results}

In \cite{Belliard2024}, using a linear system approach, the authors constructed a formula for the partition function with the modified Izergin determinant $K_{nm}^{(\beta)}(\bu|\bv)$ (see \cite{Belliard2018} for more details).

\begin{prop}
    \cite{Belliard2024} The partition function of the inhomogeneous rational six-vertex model with general boundary conditions is 
    \begin{equation}\label{ZwithK}
        Z_{nm}(\bu|\bv|B|\hat C) = \frac{\mathrm{tr}(B)^n\mathrm{tr}(\hat C)^m}{(1-\beta)^m} \frac{1}{g(\bu,\bv)}K_{nm}^{(\beta)}(\bu|\bv) 
    \end{equation}
    with 
    \begin{equation}\label{defbeta}
        \beta = 1-\frac{\mathrm{tr}(B)\mathrm{tr}(\hat C)}{\mathrm{tr}(B \hat C)}, \quad \mathrm{tr}(B) = \braket{w|e}, \quad \mathrm{tr}(\hat C) = \braket{n|s} \quad \text{and} \quad \mathrm{tr}(B \hat C) = \braket{n|e}\braket{w|s}.
    \end{equation}
    The modified Izergin determinant can be expressed with two formulae \cite{Belliard2018} 
    \begin{align}
        K_{nm}^{(\beta)}(\bu|\bv) &= \underset{1 \leqslant k,l \leqslant m}{\mathrm{det}}\left(-\beta \delta_{kl} + \frac{f(\bu,v_k)f(v_k,\bv_k)}{h(v_k,v_l)}\right) \label{K1} \\ 
        &= (1-\beta)^{m-n} \underset{1 \leqslant k,l \leqslant n}{\mathrm{det}}\left( f(u_k,\bv) \delta_{ij} -\beta \frac{f(u_k,\bu_k)}{h(u_k,u_l)}\right). \label{K2}
    \end{align}
    For the special case of a square lattice ($m = n$), we have an additional formula \cite{Belliard2024}
    \begin{equation}\label{K3}
        K_n^{(\beta)}(\bu|\bv) = \frac{h(\bu,\bv)}{\Delta(\bu)\Delta'(\bv)} \underset{1 \leqslant i,j \leqslant n}{\mathrm{det}} \left( \phi_\beta(u_i-v_j) \right).
    \end{equation}
\end{prop}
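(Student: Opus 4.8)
The plan is to organise the whole Proposition around a single Izergin--Korepin uniqueness principle for the modified determinant, and to obtain \eqref{ZwithK} as the bridge between the lattice model and that determinant. First I would prove \eqref{ZwithK} by reducing the boundary overlap of Definition \ref{defZ2} to the modified algebraic Bethe ansatz. Writing each of the four boundary states $\bra{w},\bra{n},\ket{e},\ket{s}$ in the canonical spin basis and grouping the $R$-matrices into row monodromy matrices, one uses the commutation $[R_{ab}(u,v),B_aB_b]=[R_{ab}(u,v),\hat C_a\hat C_b]$ together with the Yang--Baxter equation \eqref{eqYB} to push all the off-diagonal twist content onto a single effective parameter. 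This is precisely the mechanism that forces $\beta$ as defined in \eqref{defbeta} to appear, while normalising the four boundary vectors produces the scalar prefactor $\mathrm{tr}(B)^n\mathrm{tr}(\hat C)^m/(1-\beta)^m$; the remaining overlap is, by definition, $K_{nm}^{(\beta)}(\bu|\bv)$ up to the explicit factor $g(\bu,\bv)^{-1}$.

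Next I would characterise $K_{nm}^{(\beta)}(\bu|\bv)$ by functional conditions in the style of Izergin and Korepin: after clearing the $g(\bu,\bv)$ factor, it is a symmetric rational function of the $\bu$ and of the $\bv$ of prescribed degree in each variable, it obeys reduction relations that lower its size when inhomogeneities collide (for instance the residue at a pole of an $h^{-1}$ factor collapses it multiplicatively to a smaller modified determinant), and it takes a known value in the base case. These requirements determine $K_{nm}^{(\beta)}$ uniquely, so it suffices to check that the right-hand sides of \eqref{K1}, \eqref{K2} and \eqref{K3} satisfy them. For \eqref{K1} the reduction is read off from the residue at $v_k\to v_l-c$, the pole of $h(v_k,v_l)^{-1}$, which collapses the matrix to size $m-1$; symmetry is manifest and the constant $-\beta$ on the diagonal carries the twist.

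The delicate point, which I expect to be the main obstacle, is the equivalence of \eqref{K1} and \eqref{K2}, since it relates an $m\times m$ determinant to an $n\times n$ one. A naive size reduction by a Sylvester / matrix-determinant-lemma argument fails here, because the Cauchy-type matrix $f(\bu,v_k)f(v_k,\bv_k)/h(v_k,v_l)$ is generically of full rank $m$ rather than of rank $n$. The clean route is therefore to verify that \eqref{K2} independently satisfies the same uniqueness conditions: \eqref{K2} is exactly the representation obtained by performing the modified-ABA scalar product over the set $\bu$ rather than over $\bv$, so it necessarily computes the same object, and the factor $(1-\beta)^{m-n}$ is merely the bookkeeping of the different normalisations in the two summations. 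Establishing the reduction relations and degree bounds for \eqref{K2}, with its diagonal $f(u_k,\bv)$ replacing the constant $-\beta$, is where the verification is heaviest.

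Finally, for the square lattice \eqref{K3} I would specialise $m=n$ and evaluate $\det_{1\le i,j\le n}(\phi_\beta(u_i-v_j))$ directly. Writing $\phi_\beta(u_i-v_j)=g(u_i,v_j)-\beta\,g(u_i,v_j-c)$ exhibits the entries as a $\beta$-deformation of a Cauchy kernel, and a generalised Cauchy evaluation of the resulting double alternant yields the Vandermonde factors $\Delta(\bu)^{-1}$ and $\Delta'(\bv)^{-1}$ together with the product $h(\bu,\bv)$, matching \eqref{K3}; consistency with \eqref{K1} at $m=n$ then follows at once from the uniqueness already established.
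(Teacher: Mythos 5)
First, a structural point: the paper does not prove this Proposition at all --- it is imported wholesale from \cite{Belliard2024} and \cite{Belliard2018} as known input, so there is no in-paper proof to compare against. The nearest relatives are Appendix \ref{appendixA}, where the equivalence between the paper's \emph{new} rectangular formula \eqref{newformulaZ} and the quoted representations \eqref{K1}--\eqref{K2} is established by explicitly inverting a partial Cauchy matrix and conjugating the resulting kernel, and the pointer to Proposition 4.1 of \cite{Gorsky2014} for the remaining arrow in the diagram of equivalences. None of this uses an Izergin--Korepin uniqueness characterisation, and \eqref{ZwithK} itself is obtained in \cite{Belliard2024} by a linear-system argument rather than by the modified-ABA reduction you sketch. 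Your route is therefore genuinely different from everything on record here, which is fine in principle, but it then has to stand entirely on its own.

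As written it does not yet. (i) The uniqueness principle that carries all the weight is never formulated: you must state the exact degree in each variable after clearing $g(\bu,\bv)$, the precise reduction relations, and the base case. For the \emph{modified} determinant the standard Korepin recursion is not available in its naive form: as $u_n\to v_m$ every entry of the $m$-th row of \eqref{K1} develops a pole through $f(\bu,v_m)$, so the residue is a full Laplace expansion rather than a single smaller modified determinant, and the diagonal term $-\beta\delta_{kl}$ contributes a remainder. Until these relations are pinned down, ``it necessarily computes the same object'' and ``follows at once from uniqueness'' are assertions, and the factor $(1-\beta)^{m-n}$ in \eqref{K2}, which you dismiss as bookkeeping, is precisely the kind of constant such an argument must produce. (ii) The treatment of \eqref{K3} contains an error: the identity $\phi_\beta(u_i-v_j)=g(u_i,v_j)-\beta\,g(u_i,v_j-c)$ is correct, but the resulting double alternant does \emph{not} admit a product (``generalised Cauchy'') evaluation for generic $\beta$; it factors only at $\beta=0$. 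Equation \eqref{K3} is an identity between two determinants, not an evaluation, and proving it requires something like the manoeuvre of Appendix \ref{equalitynewknown}: peel off $\det\left(g(u_i,v_j)\right)=g(\bu,\bv)\Delta(\bu)\Delta'(\bv)$ by multiplying with the inverse Cauchy matrix and identify what remains with the kernel of \eqref{K1} or \eqref{K2}. On the positive side, your observation that the kernel of \eqref{K1} is generically of full rank $m$, so that no Sylvester-type reduction can link it directly to the $n\times n$ form, is correct and worth keeping.
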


\begin{remark}
    For the specific values of the parameters $m=n$, $\bra{n} = \bra{2}$, $\bra{w} = \bra{1}$, $\ket{e} = \ket{2}$ and $\ket{s} = \ket{1}$ corresponding to domain wall boundary conditions, we have $\mathrm{tr}(B) = \mathrm{tr}(\hat C) = 0$ and $\mathrm{tr}(B\hat C) = 1$. It follows that $\beta = 1$ and one recover the usual Izergin determinant. 
\end{remark}

\begin{remark}
   Here, the same twist is used for all rows and the same for all columns. It is possible to define local twists $B_i$ and $\hat{C}_j$ (not proportional on a same boundary, else we return to our original problem).
    In that case, the formulation in terms of a single family of operators — which is crucial in our calculations of the modified Izergin determinant — fails. Such models require further research, especially if one seeks a determinant formula. This point remains unclear to us.
    
    For certain configurations involving local twists, it is the partition function of the model comprising all possible boundary conditions (meeting a certain criterion) that can be described by a determinant formula, as is the case with the pDWBC model. Indeed, we observe that, across all the configurations considered, there is symmetry in the boundary conditions that appear.
\end{remark}


\subsection{A new expression for rectangular lattice}

Using the method developed by Foda and Wheeler in \cite{Foda2012} for the partial domain wall boundary conditions, we found a new expression for the partition function for the $n \times m$ rectangular lattice that generalizes the formula \eqref{ZwithK} with \eqref{K3}.

We start from the formula \eqref{ZwithK} with \eqref{K3} considering a square lattice whose the length of the side is $\max(n,m)$. Then, we remove the extra parameters doing them tend toward infinity to obtain the desired width, so $\min(n,m)$. This operation consists in remove lines or columns in the lattice, due to the fact that 
\begin{equation}
 \frac{c}{u}R(u,v) \underset{u \rightarrow \infty}{\rightarrow} I \;\;\; \text{and} \;\;\; \frac{c}{-v}R(u,v) \underset{v \rightarrow \infty}{\rightarrow} I.
\end{equation}
Doing this for all the extra parameters, we can express the partition function for the rectangular lattice with the partition function for the larger square lattice (using also the definition \ref{defZ1} of the partition function).

\begin{prop}\label{defZt}
For $n \leqslant m$, 
\begin{equation}
    Z_{nm}(\bu|\bv|B|\hat C) = \lim\limits_{u_m, \dots , u_{n+1} \rightarrow \infty} \left( \frac{1}{\mathrm{tr}(B)^{m-n}} \prod_{i=n+1}^m \left(\frac{c}{u_i}\right)^m Z_{mm}(\bu\vert\bv|B|\hat{C}) \right).
\end{equation}
and for $n \geqslant m$,
\begin{equation}
    Z_{nm}(\bu|\bv|B|\hat C) = \lim\limits_{v_n, \dots , v_{m+1} \rightarrow \infty} \left( \frac{1}{\mathrm{tr}(\hat C)^{n-m}} \prod_{j=m+1}^n \left(\frac{c}{-v_j}\right)^n Z_{nn}(\bu\vert\bv|B|\hat{C}) \right).
\end{equation}
The limits are to be taken sequentially, starting respectively with $u_m$ and $v_n$.
\end{prop}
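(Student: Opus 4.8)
The plan is to begin from the contraction form of the square‑lattice partition function given in Definition \ref{defZ2}, namely $Z_{mm}(\bu|\bv|B|\hat C) = \bra{W}\otimes\bra{N}\left(\prod_{i=1}^m\prod_{j=1}^m R_{a_ib_j}(u_i-v_j)\right)\ket{E}\otimes\ket{S}$, and to peel off the auxiliary rows $a_m, a_{m-1},\dots,a_{n+1}$ one at a time by sending $u_i\to\infty$. The engine of the argument is the stated asymptotics $\frac{c}{u}R_{ab}(u,v)\to I$ as $u\to\infty$: since each row $i$ consists of exactly $m$ factors $R_{a_ib_j}$, multiplying by $(c/u_m)^m$ and letting $u_m\to\infty$ sends every factor $\frac{c}{u_m}R_{a_mb_j}(u_m-v_j)$ of the top row to the identity on $V_{a_m}\otimes V_{b_j}$. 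Because matrix multiplication is continuous and only finitely many factors are involved, I can pass the limit inside the product; the row‑$a_m$ factors become identities that act trivially on the column spaces $V_{b_j}$, so they factor out of the product.

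First I would make the single‑row step precise: after the scaling $(c/u_m)^m$ and the limit $u_m\to\infty$, the operator reduces to $\prod_{i=1}^{m-1}\prod_{j=1}^m R_{a_ib_j}(u_i-v_j)$ tensored with the identity on $V_{a_m}$. Contracting the surviving $a_m$ boundary vectors then yields the scalar $\braket{w|e}=\mathrm{tr}(B)$ by \eqref{defbeta}, so that $\lim_{u_m\to\infty}(c/u_m)^m Z_{mm} = \mathrm{tr}(B)\, Z_{(m-1)m}(\bu_m|\bv|B|\hat C)$, where the number of columns is still $m$. Iterating sequentially over $u_m, u_{m-1},\dots,u_{n+1}$ removes $m-n$ rows, each contributing one factor $\mathrm{tr}(B)$ and one scaling $(c/u_i)^m$; the exponent stays equal to $m$ at every step because the column count is never reduced. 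Collecting these factors and dividing by $\mathrm{tr}(B)^{m-n}$ reproduces the first formula. The case $n\geqslant m$ follows by the symmetric argument with the dual asymptotics $\frac{c}{-v}R_{ab}(u,v)\to I$: one peels the columns $b_n,\dots,b_{m+1}$ off $Z_{nn}$, each contributing $\braket{n|s}=\mathrm{tr}(\hat C)$ and a factor $(c/(-v_j))^n$, and then divides by $\mathrm{tr}(\hat C)^{n-m}$.

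The main obstacle I anticipate is the bookkeeping of the limit interchange together with the non‑commutativity of the $R$‑matrices: before the limit the top‑row operators do not commute with those of the other rows, since they share the column spaces, so I must argue carefully that it is only in the limit, where they degenerate to identities on $V_B=V_{b_1}\otimes\cdots\otimes V_{b_m}$, that they may be extracted irrespective of their position in the ordered product. This is precisely why the limits must be taken sequentially, starting with $u_m$ (resp. $v_n$): each completed limit leaves a finite, well‑defined partition function of smaller size on which the next limit acts cleanly, avoiding any ambiguity that could arise from driving several $u_i\to\infty$ simultaneously. Once this factorization is justified, what remains is a routine induction on the number of removed rows (resp. columns).
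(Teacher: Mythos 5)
Your proof is correct and follows essentially the same route as the paper, which justifies this proposition directly from the asymptotics $\frac{c}{u}R(u,v)\to I$, $\frac{c}{-v}R(u,v)\to I$ and Definition \ref{defZ1}. Your version merely makes explicit the sequential row-by-row (resp.\ column-by-column) peeling, the contraction $\braket{w|e}=\mathrm{tr}(B)$ (resp.\ $\braket{n|s}=\mathrm{tr}(\hat C)$) produced by each removed line, and the reason the scaling exponent stays fixed at $m$ (resp.\ $n$) throughout.
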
 

\begin{prop}\label{propnewZ}
Starting from the previous proposition and following the steps in section 2 of \cite{Foda2012}, we obtained a new explicit formula for $Z_{nm}(\bu|\bv|B|\hat C)$ in terms of the determinant of a block matrix :
    \begin{align}
        Z_{nm}(\bu|\bv|B|\hat C) = &\frac{\mathrm{tr}(B)^n\mathrm{tr}(\hat C)^m}{(1-\beta)^{\min(n,m)}} \frac{h(\bu,\bv)}{g(\bu,\bv)\Delta(\bu)\Delta'(\bv)} \notag \\ 
        & \times \left\{ 
            \begin{array}{cc}
                 \underset{1 \leqslant i,j \leqslant m}{\mathrm{det}} \left( \left\{ 
                 \begin{array}{cc}
                    \phi_\beta(u_i-v_j) & \text{if} \;\; i \leqslant n \\
                    \psi_{m-i}(-v_j) & \text{if} \;\; i \geqslant n+1
                \end{array} \right. \right) & \text{if} \;\; n \leqslant m \\
                \underset{1 \leqslant i,j \leqslant n}{\mathrm{det}} \left( \left\{ 
                \begin{array}{cc}
                    \phi_\beta(u_i-v_j) & \text{if} \;\; j \leqslant m \\
                    \psi_{n-j}(u_i) & \text{if} \;\; j \geqslant m+1
                \end{array} \right. \right) & \text{if} \;\; n \geqslant m
            \end{array} \right. . \label{newformulaZ}
    \end{align}
\end{prop}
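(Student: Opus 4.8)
The plan is to derive \eqref{newformulaZ} directly from the square-lattice formula obtained by combining \eqref{ZwithK} with \eqref{K3}, by carrying out the sequential limit prescribed in Proposition \ref{defZt}. I treat the case $n\leqslant m$ in detail; the case $n\geqslant m$ is identical after exchanging the roles of $(\bu,B)$ and $(\bv,\hat C)$ and producing the rows $\psi_{n-j}(u_i)$ instead. Thus I start from
\[
\prod_{i=n+1}^{m}\Bigl(\frac{c}{u_i}\Bigr)^{m}\,\frac{1}{\tr(B)^{m-n}}\,Z_{mm}(\bu|\bv|B|\hat C),
\]
with $\bu=\{u_1,\dots,u_m\}$, insert the explicit square formula, and let $u_m,\dots,u_{n+1}\to\infty$ one after another. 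Note already that dividing by $\tr(B)^{m-n}$ turns the prefactor $\tr(B)^m\tr(\hat C)^m$ of $Z_{mm}$ into the desired $\tr(B)^n\tr(\hat C)^m$, so it remains to control the scalar factors $h(\bu,\bv)$, $g(\bu,\bv)^{-1}$, $\Delta(\bu)^{-1}$, $\Delta'(\bv)^{-1}$ and the Izergin-type determinant $\det_{1\leqslant i,j\leqslant m}(\phi_\beta(u_i-v_j))$.

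The core is the behaviour of that determinant. I would expand each entry of a divergent row in powers of $1/u_i$,
\[
\phi_\beta(u_i-v_j)=\sum_{k\geqslant 0}\frac{c\bigl(v_j^{\,k}-\beta(v_j-c)^{k}\bigr)}{u_i^{\,k+1}},
\]
whose leading coefficient $c(1-\beta)$ is independent of $v_j$, and use multilinearity of the determinant in rows $n+1,\dots,m$. A term indexed by exponents $(k_{n+1},\dots,k_m)$ carries the row functions $c(v_j^{k_i}-\beta(v_j-c)^{k_i})$ and decays like $\prod_i u_i^{-(k_i+1)}$. Collecting the powers of a single divergent $u_j$ coming from $h$ ($+m$), from $g^{-1}$ ($+m$), from $\Delta(\bu)^{-1}$ ($-(j-1)$), from the external normalisation ($-m$) and from the determinant row ($-(k_j+1)$), I get a net exponent $m-j-k_j$. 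Hence the sequential limit is finite iff $k_j\leqslant m-j$, and it is \emph{nonzero} only for $k_j=m-j$; the lower choices $k_j<m-j$ would diverge, but these are exactly the ones in which two divergent rows share the same exponent, so the determinant vanishes there. This antisymmetric cancellation is what makes the limit exist, and it is the step I expect to be the main obstacle to write cleanly: one must argue, order by order along $u_m\to\infty$, then $u_{m-1}\to\infty$, and so on, that only the single surviving multi-index $k_i=m-i$ contributes.

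Once this is established, the bottom block of the limiting matrix consists of the rows $v_j\mapsto c\bigl(v_j^{\,m-i}-\beta(v_j-c)^{m-i}\bigr)$ for $i=n+1,\dots,m$, while the top $n$ rows are unchanged, $\phi_\beta(u_i-v_j)$. Since $c(v^{m-i}-\beta(v-c)^{m-i})=c(1-\beta)v^{m-i}+(\text{lower powers of }v)$ with nonvanishing leading coefficient, these functions form a triangular change of basis of the monomials $v^{0},\dots,v^{m-n-1}$, i.e. of the Vandermonde rows $\psi_{m-i}(-v_j)=(v_j/c)^{m-i}$. Writing this change of basis as an $(m-n)\times(m-n)$ matrix $M$ acting on the bottom block only, I factor it out, with $\det M=(1-\beta)^{m-n}c^{(m-n)(m-n+1)/2}$, which replaces the bottom block by the pure Vandermonde rows and yields exactly the determinant in \eqref{newformulaZ}. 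The remaining bookkeeping is to check that the $(1-\beta)^{m-n}$ produced here turns $(1-\beta)^{-m}$ into $(1-\beta)^{-\min(n,m)}=(1-\beta)^{-n}$, that the surplus powers of $c$ cancel against those hidden in $h$, $g$ and $\Delta$, and that the divergent factors of $h(\bu,\bv)$, $g(\bu,\bv)^{-1}$ and $\Delta(\bu)^{-1}$ are precisely what is needed to restrict these products from the full $m$-element set $\{u_1,\dots,u_m\}$ to the reduced set $\{u_1,\dots,u_n\}$, the factor $\Delta'(\bv)$ and the $v$'s being untouched.

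Finally, as an independent check, and as a possible alternative route avoiding the delicate limit, I would verify that \eqref{newformulaZ} agrees with \eqref{ZwithK}: a Laplace cofactor expansion of the block determinant along the $m-n$ Vandermonde rows produces a sum of $(n\times n)$ modified-Izergin minors times $(m-n)\times(m-n)$ Vandermonde minors, which can be resummed into \eqref{K1}--\eqref{K2}. The main difficulty throughout remains the first route's order-by-order analysis of the sequential limit together with the exact tracking of the $c$- and $(1-\beta)$-powers; the combinatorial identity reconciling the two sides is routine once the surviving multi-index has been identified.
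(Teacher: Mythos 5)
Your proposal follows essentially the same route as the paper's own first proof in Appendix \ref{appendixA}: sequential limits starting from the square-lattice formula via Proposition \ref{defZt}, expansion of each divergent row in inverse powers of $u_j$, cancellation of the would-be divergent terms because they reproduce rows already obtained (the paper phrases this as the term being a linear combination of the rows below), and a final triangular change of basis that extracts the factor $(1-\beta)^{m-n}$ to leave the Vandermonde rows $\psi_{m-i}(-v_j)$. The only slip is the clause \enquote{finite iff $k_j \leqslant m-j$}, which should read $k_j \geqslant m-j$, as your own next sentence makes clear.
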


\begin{remark}     
    The previous formulae are indeed a generalization of the formula \eqref{ZwithK} with \eqref{K3} : they are $\max(n,m) \times \max(n,m)$ determinant where the first $\min(n,m)$ rows or columns are of the modified Izergin determinant-type and the other are of Vandermonde determinant type.
    Moreover, we obtain a new expression for the modified Izergin determinant $K_{nm}^{(\beta)}(\bu|\bv)$.
\end{remark}

\section{Applications}\label{applications}

\subsection{Homogeneous limit}

The previous part deals with the inhomogeneous lattice. In this section, we consider the special case of the homogeneous lattice : we consider therefore the limits $u_i \rightarrow u$ and $v_j \rightarrow v$ for all $i \in [\![1,n]\!]$ and $j \in [\![1,m]\!]$. So the vertices will be parametrized by $x = u-v$. We will follow the construction made by Izergin, Coker and Korepin in \cite{Izergin1992} (the result of the computation was first written by Izergin in \cite{Izergin1987}).

We start from the formula \eqref{newformulaZ}. Doing Taylor expansions of the functions $\phi$ and $\psi$, exactly as in the section 7 of \cite{Izergin1992}, we obtain the following formulae
\begin{align}
    Z_{nm}(u|v|B|\hat C) &= \frac{\mathrm{tr}(B)^n\mathrm{tr}(\hat C)^m}{(1-\beta)^{\min(n,m)}} \frac{c^{\frac{n(n-1)+m(m-1)}{2}}}{\phi_1(u-v)^{nm}\prod_{k=1}^n(k-1)!\prod_{l=1}^m(l-1)!} \notag \\
    &\times \left\{ \begin{array}{cc}
                \underset{1 \leqslant i,j \leqslant m}{\mathrm{det}} \left( \left\{ \begin{array}{cc}
               \phi^{(i+j-2)}_\beta(u-v) & \text{if} \;\; i \leqslant n \\
                \psi^{(j-1)}_{m-i}(-v) & \text{if} \;\; i \geqslant n+1
           \end{array} \right. \right) & \text{for} \;\; n \leqslant m \\
                 \underset{1 \leqslant i,j \leqslant n}{\mathrm{det}} \left( \left\{ \begin{array}{cc}
               \phi^{(i+j-2)}_\beta(u-v) & \text{if} \;\; j \leqslant m \\
                \psi^{(i-1)}_{n-j}(u) & \text{if} \;\; j \geqslant m+1
           \end{array} \right. \right) & \text{for} \;\; n \geqslant m
            \end{array}  
        \right. .
\end{align}    
However, because $\psi_k$ is a polynomial function of degree $k$, the previous matrices contain a lot of zeros and the partition function is
\begin{itemize}
    \item for $n \leqslant m$,
    \begin{align}
        Z_{nm}(u&|v|B|\hat C) = \frac{\mathrm{tr}(B)^n\mathrm{tr}(\hat C)^m}{(1-\beta)^{n}} \frac{c^{\frac{n(n-1)+m(m-1)}{2}}}{\phi_1(u-v)^{nm}\prod_{k=1}^n(k-1)!\prod_{l=1}^m(l-1)!} \notag \\
        &\times \underset{1 \leqslant i,j \leqslant m}{\mathrm{det}} \left( \left\{ 
            \begin{array}{cc}
               \phi^{(i+j-2)}_\beta(u-v) & \text{if} \;\; i \leqslant n \\
                \psi^{(j-1)}_{m-i}(-v) & \text{if} \;\; i \geqslant n+1 \;\; \text{and} \;\; i+j < m+1 \\
                (-1)^{j-1} \frac{(j-1)!}{c^{j-1}} & \text{if} \;\; i \geqslant n+1 \;\; \text{and} \;\; i+j = m+1 \\
                0 & \text{if} \;\; i \geqslant n+1 \;\; \text{and} \;\; i+j > m+1
            \end{array} \right. \right),
    \end{align}
    \item for $n \geqslant m$,
    \begin{align}
        Z_{nm}(u&|v|B|\hat C) = \frac{\mathrm{tr}(B)^n\mathrm{tr}(\hat C)^m}{(1-\beta)^{\min(n,m)}} \frac{c^{\frac{n(n-1)+m(m-1)}{2}}}{\phi_1(u-v)^{nm}\prod_{k=1}^n(k-1)!\prod_{l=1}^m(l-1)!} \notag \\
        &\times \underset{1 \leqslant i,j \leqslant n}{\mathrm{det}} \left( \left\{ 
            \begin{array}{cc}
               \phi^{(i+j-2)}_\beta(u-v) & \text{if} \;\; j \leqslant m \\
                \psi^{(i-1)}_{n-j}(u) & \text{if} \;\; j \geqslant m+1 \;\; \text{and} \;\; i+j < n+1 \\
                (-1)^{i-1} \frac{(i-1)!}{c^{i-1}} & \text{if} \;\; j \geqslant m+1 \;\; \text{and} \;\; i+j = n+1 \\
                0 & \text{if} \;\; j \geqslant m+1 \;\; \text{and} \;\; i+j > n+1 \\
            \end{array} \right. \right).
    \end{align}
\end{itemize}
With successive developments of the determinants (along its last row for the case $n \leqslant m$ and along its last column for the case $n \geqslant m$), we obtain therefore
\begin{itemize}
    \item for $n \leqslant m$,
    \begin{align}
        Z_{nm}(x|B|\hat C) = \frac{\mathrm{tr}(B)^n\mathrm{tr}(\hat C)^m}{(1-\beta)^n} &\frac{(-1)^{(m-n)(m+1)}}{\phi_1(x)^{nm}}\frac{c^{n(m-1)}}{\prod_{k=1}^n (k-1)!(m-n+k-1)!} \notag \\ 
        & \times \underset{1\leqslant i,j \leqslant n}{\mathrm{det}} \left( \phi_\beta^{(m-n+i+j-2)}(x) \right) \label{Zhomo1}
    \end{align}
    \item and for $n \geqslant m$,
    \begin{align}
        Z_{nm}(x|B|\hat C) = \frac{\mathrm{tr}(B)^n\mathrm{tr}(\hat C)^m}{(1-\beta)^m} & \frac{(-1)^{(n-m)(n+1)}}{\phi_1(x)^{nm}}\frac{c^{m(n-1)}}{\prod_{k=1}^m (k-1)!(n-m+k-1)!} \notag \\ 
        & \times \underset{1\leqslant i,j \leqslant m}{\mathrm{det}} \left( \phi_\beta^{(n-m+i+j-2)}(x) \right). \label{Zhomo2}
    \end{align}
\end{itemize}
These expressions depend only on the parameter $x = u-v$, which is consistent with the configuration of the vertices. 
Next, we can compute the derivative of $\phi_\beta$ 
\begin{align}
 \phi_\beta^{(k)}(x) &= (-1)^k k! \left( \frac{c}{x^{k+1}} - \beta \frac{c}{(x+c)^{k+1}} \right) \\
 &= (-1)^k k! \frac{\phi_1(x)^{k+1}}{c^{k}} \left( \left(1+\frac{x}{c}\right)^{k+1} - \beta \left(\frac{x}{c}\right)^{k+1} \right)
\end{align}
and we obtain the following expression
\begin{align}
    Z_{nm}&(x|B|\hat C) = \frac{\mathrm{tr}(B)^n\mathrm{tr}(\hat C)^m}{(1-\beta)^{\min(n,m)}} \notag \\
    & \times \underset{1\leqslant i,j \leqslant \min(n,m)}{\mathrm{det}} \left( \binom{d+i+j-2}{i-1} \left( \left(1+\frac{x}{c}\right)^{d+i+j-1} - \beta \left(\frac{x}{c}\right)^{d+i+j-1} \right) \right). \label{Zhomofin}
\end{align}
with $d = |n-m|$.

\begin{Def}
    We denote $Z_{nm}^0(B|\hat C)$ the value of the partition function for $x = 0$, \textit{i.e.} when the lattice is completely homogeneous. Using the proposition \ref{propdet} and factorizing the $c$ out of the determinant, we have
\begin{equation}
    Z_{nm}^0(B|\hat C) := \frac{\mathrm{tr}(B)^n\mathrm{tr}(\hat C)^m}{(1-\beta)^{\min(n,m)}} = \mathrm{tr}(B\hat C)^{\min(n,m)} \left\{ \begin{array}{cc}
         \mathrm{tr}(\hat C)^d & \text{if} \; \; n \leqslant m  \\
         \mathrm{tr}(B)^d & \text{if} \; \; n \geqslant m 
    \end{array}\right.. 
\end{equation}
\end{Def}

\subsection{Physical interpretation for completely homogeneous finite lattice}

To have a global physical interpretation, we consider the statistical weights non-dimensionaless (only in this subsection)
\begin{equation}
 a(x) = x+c, \qquad b(x) = x \qquad \text{and} \qquad c(x) = c = \exp\left(-\frac{\epsilon}{k_BT}\right).
\end{equation}
With these statistical weights, $Z_{nm}(0|B|\hat C) = Z_{nm}^0(B|\hat C) c^{nm}$.
Assuming that we are in the canonical ensemble, the total free energy of the completely homogeneous lattice is $F_{nm}^{tot} = -k_BT\ln (Z_{nm}(0\vert B \vert \hat C))$ so
\begin{equation}
F_{nm}^{tot} = nm \epsilon - k_BT \left(\min(n,m) \ln(\mathrm{tr}(B\hat C)) + d \left\{ 
 \begin{array}{cc}
    \ln(\mathrm{tr}(\hat C)) & \text{if} \;\; n \leqslant m \\
    \ln(\mathrm{tr}(B)) & \text{if} \;\; n \geqslant m 
 \end{array}\right. \right).
\end{equation}
We see bulk and boundary effects in the expression of $F_{nm}^{tot}$ because $B$ and $\hat C$ are directly linked to the boundaries. Moreover, the boundary effects are linear in the number of rows and columns, contrary to the bulk effects which are quadratic. 

Assuming moreover that the boundaries do not depend on the temperature $T$, we can compute others physical characteristics.
The average energy $\bar{E}^c_{nm}$, the fluctuation of the energy $\Delta(E^c_{nm})$, the heat capacity at constant volume $C_{V,nm}$ and the canonical entropy $S^c_{nm}$ are respectively

\begin{equation}
 \bar{E}_{nm}^c = F_{nm}^{tot} - T \frac{\partial F_{nm}^{tot}}{\partial T} = nm \epsilon, \quad \Delta(E^c_{nm})^2 = -k_BT^3 \frac{\partial^2 F_{nm}^{tot}}{\partial T^2} = 0,
\end{equation}

\begin{equation}
 C_{V,nm} = - T \frac{\partial F_{nm}^{tot}}{\partial T} = k_B T \left( \min(n,m) \ln(\mathrm{tr}(B\hat C)) + d\left\{ 
 \begin{array}{cc}
    \ln(\mathrm{tr}(\hat C)) & \text{if} \;\; n \leqslant m \\
    \ln(\mathrm{tr}(B)) & \text{if} \;\; n \geqslant m                 
 \end{array}\right. \right),
\end{equation}
\begin{equation}
 S^c_{nm} = - \frac{\partial F_{nm}^{tot}}{\partial T} = k_B \left( \min(n,m) \ln(\mathrm{tr}(B\hat C)) + d\left\{ 
 \begin{array}{cc}
    \ln(\mathrm{tr}(\hat C)) & \text{if} \;\; n \leqslant m \\
    \ln(\mathrm{tr}(B)) & \text{if} \;\; n \geqslant m            
 \end{array}\right. \right).
\end{equation}
In the average energy, only the bulk effects appear. However, in the heat capacity and the canonical entropy, only the boundary effects appear. 


\subsection{Thermodynamic limit}

For a rectangular lattice, we can consider two different thermodynamic limits : in the first part, when $\max(n,m) \rightarrow +\infty$ and, in the second part, when $n,m \rightarrow +\infty$ where $d = |n-m|$ remains constant. We consider that $c,x > 0$ (as Boltzmann weights) in the following part.

\subsubsection{The semi-infinite lattice}

This question was treated by Pronko, Pronko and Minin in \cite{Pronko2019,Minin2021} for the partial domain wall boundary conditions. 
We consider the limit $\max(n,m) \rightarrow +\infty$ (so $d \rightarrow +\infty$ but $\min(n,m)$ is constant).
In the formula \eqref{Zhomofin}, we factorize the term $\left(1+\frac{x}{c}\right)^{d+i+j-1}$ out of the determinant and we obtain 
\begin{align}
    Z_{nm}(x|B|\hat C) = &Z_{nm}^0(B|\hat C) \left(1+\frac{x}{c}\right)^{nm} \notag \\ & \times \underset{1\leqslant i,j \leqslant \min(n,m)}{\mathrm{det}} \left( \binom{d+i+j-2}{i-1} \left( 1 - \beta \left(\frac{x}{x+c}\right)^{d+i+j-1} \right) \right). \label{Zhomofactorize}
\end{align}
Because $x,c >0$, $\frac{x}{x+c} < 1$ and $\left(\frac{x}{x+c}\right)^{d+i+j-1} = \mathcal{O}\left(\left(\frac{x}{x+c}\right)^{d}\right)$. So 
\begin{align}
    \underset{1\leqslant i,j \leqslant \min(n,m)}{\mathrm{det}} &\left( \binom{d+i+j-2}{i-1} \left( 1 - \beta \left(\frac{x}{x+c}\right)^{d+i+j-1} \right) \right) \notag \\
    &=  \underset{1\leqslant i,j \leqslant \min(n,m)}{\mathrm{det}} \left( \binom{d+i+j-2}{i-1} \left( 1 + \mathcal{O}\left(\left(\frac{x}{x+c}\right)^{d}\right) \right) \right) \\
    &= \left( 1 + \mathcal{O}\left(\left(\frac{x}{x+c}\right)^{d}\right) \right)^{\min(n,m)} \underset{1\leqslant i,j \leqslant \min(n,m)}{\mathrm{det}} \left( \binom{d+i+j-2}{i-1} \right) \\
    &=  \left( 1 + \mathcal{O}\left(\left(\frac{x}{x+c}\right)^{d} \right) \right) \times 1.
\end{align}
using the fact that $\min(n,m)$ is a constant, $0 < \frac{x}{x+c} < 1$ and the proposition \ref{propdet}. Therefore 
\begin{equation}
     Z_{nm}(x|B|\hat C) = Z_{nm}^0(B|\hat C) \left(1+\frac{x}{c}\right)^{nm} \left( 1 + \mathcal{O}\left(\frac{x}{x+c}\right)^{d} \right).
\end{equation}
This expression allows us to obtain an asymptotically equivalent (once again because $0 < \frac{x}{x+c} < 1$) :
\begin{equation}
    Z_{nm}(x|B|\hat C) \underset{\max(n,m) \rightarrow +\infty}{\sim} Z_{nm}^0(B|\hat C) \left(1+\frac{x}{c}\right)^{nm}.
\end{equation}

\begin{remark}
    Asymptotically, it is like all vertices were of type $a$. This Boltzmann weight corresponds to the lower energy level (because $a(x) = b(x) + c(x)$ and $b(x), c(x) \geqslant 0$). So, it is like the lattice was at the fundamental energy. We see also the boundary effect.
\end{remark}

\subsubsection{The infinite lattice}\label{infinitelattice}

We consider the limit $m,n \rightarrow +\infty$ with $d$ constant. The first part of the calculation is based on \cite{Korepin2000}. 
We consider the expression of $Z_{nm}(x|B|\hat C)$ \eqref{Zhomo1}-\eqref{Zhomo2}. We denote 
\begin{equation}
    \tau_{n,m}(x) = \underset{1 \leqslant i,j \leqslant \min(n,m)}{\mathrm{det}} \left( \phi_\beta^{(d+i+j-2)}(x) \right).
\end{equation}

Because $d = cste$, $\tau_{n,m}$ is a Hankël determinant and satisfies the Toda equation (in Hirota form) \cite{Sogo1993} 
\begin{equation}\label{Todaeq}
    \tau_{n+1,m+1}\tau_{n-1,m-1} = \tau_{n,m}\tau''_{n,m} - (\tau'_{n,m})^2
\end{equation}
where the prime denotes the $x$ derivative. 
When $n,m \rightarrow +\infty$ it is expected that 
\begin{equation}\label{ZetF}
    \ln(Z_{nm}(x|B|\hat C)) = -nm\frac{F(x)}{k_BT} + o(nm)
\end{equation}
where $F(x)$ is the bulk free energy. We denote also
\begin{equation}
    f(x) = -\frac{F(x)}{k_BT} + \ln\left(\frac{\phi_1(x)}{c}\right).
\end{equation}
Expressing $\tau_{n,m}$ in terms of $f(x)$ and constant factors, in the same way as in \cite{Korepin2000}, the Toda equation becomes 
\begin{equation}
    f'' = \exp(2f)
\end{equation}
and the solutions of this equation have the form
\begin{equation}
 \exp(f(x)) = \frac{\alpha}{\sinh(\alpha (x-x_0))}.
\end{equation}
So we can write the bulk free energy as
\begin{equation}
 -\frac{F(x)}{k_BT} = \ln\left(e^{f(x)}\frac{c}{\phi_1(x)}\right) = \ln\left(\frac{\alpha}{\sinh(\alpha(x-x_0))}\frac{x(x+c)}{c}\right).
\end{equation}

Now we go to determine $\alpha$ and $x_0$ with knowns values of $F$. First, we know that $\ln(Z_{nm}(0\vert B \vert \hat C)) = o(nm)$ because the constants in $Z^0_{nm}(B|\hat C)$ are at the power $n$ or $m$. So $F(0) = 0$. However, for $x_0 = 0$,
\begin{equation}
 \frac{\alpha}{\sinh(\alpha x)}\frac{x(x+c)}{c} \xrightarrow[x \rightarrow 0]{} 1
\end{equation}
and we obtain $F(0) = 0$. So we can take $x_0 = 0$. 
Now, for $\alpha$, the Taylor expansion at order 2 of $x \mapsto \ln\left(\frac{\alpha}{\sinh(\alpha x)}\frac{x(x+c)}{c}\right)$ gives us
\begin{equation}
 \ln\left(\frac{\alpha}{\sinh(\alpha x)}\frac{x(x+c)}{c}\right) = \frac{x}{c}-\frac12\left(\frac{\alpha^2}{3}+\frac{1}{c^2}\right)x^2 + o\left(x^2\right)
\end{equation}
so 
\begin{equation}
 F(0) = 0, \qquad \frac{F'(0)}{k_BT} = -\frac{1}{c} \qquad \text{and} \qquad \frac{F''(0)}{k_BT} = \frac{\alpha^2}{3}+\frac{1}{c^2}.
\end{equation}
On the other hand, deriving \eqref{ZetF} once and twice and evaluating at $0$, we obtain
\begin{equation}\label{ZFp}
 \frac{Z_{nm}'(0\vert B \vert \hat C)}{Z^0_{nm}(B \vert \hat C)} = -nm\frac{F'(0)}{k_BT} + o(nm),
\end{equation}
\begin{equation}\label{ZFs}
 \frac{Z_{nm}''(0\vert B \vert \hat C)}{Z^0_{nm}(B \vert \hat C)} - \left(\frac{Z_{nm}'(0\vert B \vert \hat C)}{Z^0_{nm}(B \vert \hat C)}\right)^2 = -nm\frac{F''(0)}{k_BT} + o(nm).
\end{equation}
Computing $Z_{nm}'(0\vert B \vert \hat C)$ and $Z_{nm}''(0\vert B \vert \hat C)$ and injecting their expression in the equations \eqref{ZFp}-\eqref{ZFs} (see the subsection \ref{derivativeZ} for details), we observe that the value of $\alpha$ depend on the one of $d$ : 
\begin{itemize}
 \item if $d > 1$, $\alpha = 0$,
 \item if $d = 1$, $\alpha^2 = \frac{3\beta}{c^2}$,
 \item if $d = 0$ (\textit{i.e.} $n = m$), $\alpha^2 = \frac{3\beta(\beta-2)}{c^2}$.
\end{itemize}
So the bulk free energy depends also on the value of $\vert n-m \vert$.
For $d > 1$, there is no boundary effect and it is like all the vertices are of type $a$ :
 \begin{equation}
  \frac{F(x)}{k_BT} = \ln\left(\frac{c}{x+c} \right).
 \end{equation}
 For $d \leqslant 1$, 
\begin{equation}
 \frac{F(x)}{k_BT} = \ln\left(\frac{\sinh(\alpha x)}{\alpha x}\frac{c}{x+c} \right) = \ln\left(\frac{\sinh(\alpha x)}{\alpha x} \right) + \ln\left(\frac{c}{x+c} \right).
\end{equation}
There is an additional term in the expression of the bulk free energy. This term comes from the boundary effects because it depends on $\beta$.

We want to simplify this expression. For simplicity, we define
\begin{equation}
    \tilde \beta = \left\{ \begin{array}{cc}
         0 &  \text{if} \;\; d > 1\\
         3\beta & \text{if} \;\; d = 1\\
         3\beta(\beta-2) & \text{if} \;\; d = 0
    \end{array} \right. 
\end{equation}
so that $\alpha^2 = \frac{\tilde \beta}{c^2}$. Moreover, we recall that $x,c > 0$ and the free energy must be real. So $\alpha$ must be real or imaginary pure and $\tilde \beta$ must be real.
In addition, the function $\alpha \mapsto \frac{\sinh(\alpha x)}{\alpha x}$ is even so we can consider only one square root of $\tilde \beta$. According to the sign of $\tilde \beta$, the free energy has different forms. Using the variable $\tilde x = \frac{x}{c}$, we obtain
\begin{equation}
    \tilde F\left(\tilde x \right) = \frac{F\left(c \tilde x \right)}{k_BT} = \left\{
    \begin{array}{cc}
        \ln\left(\frac{\sinh\left(\sqrt{\tilde\beta} \tilde x \right)}{\sqrt{\tilde\beta} \tilde x} \frac{1}{1+\tilde x} \right) & \text{if} \;\; \tilde \beta > 0 \\
        \ln\left(\frac{1}{1+\tilde x} \right) & \text{if} \;\; \tilde \beta = 0 \\ 
        \ln\left(\frac{\sin\left(\sqrt{-\tilde\beta} \tilde x \right)}{\sqrt{-\tilde\beta} \tilde x} \frac{1}{1+\tilde x} \right) & \text{if} \;\; \tilde \beta < 0 
    \end{array} \right. .
\end{equation}

We represent in the figure \ref{fig:F} the graph of $\tilde F$ for different values of $\tilde \beta$. In the figure \ref{fig:Fmid}, we have represented the three different possible forms for $\tilde F$. 


Apart from when $\tilde x \rightarrow 0$ ($\tilde F(0) = -1$), the three functions have different graphs. So the case $\tilde \beta =0$ might be correspond to a phase transition. 
For $\tilde \beta > 0$, as we see in the figure \ref{fig:Fpos}, we observe that $\tilde F(\tilde x) \xrightarrow[\tilde x \rightarrow +\infty]{} + \infty$. Moreover, we can show that $\tilde F$ has a parabolic branch of equation $y = \tilde \beta \tilde x$. Therefore, the larger $\tilde \beta$ is, the faster $\tilde F$ tends towards infinity.
For $\tilde \beta \leqslant 0$, as we see in the figure \ref{fig:Fneg}, $\tilde F(\tilde x) \xrightarrow[\tilde x \rightarrow +\infty]{} -\infty$.
Moreover, for $\tilde \beta < 0$, $\tilde F$ is well defined only on the intervals $\left[0, \frac{\pi}{\sqrt{-\tilde \beta}}\right[$ and $\left] \frac{2k\pi}{\sqrt{-\tilde \beta}},  \frac{(2k+1)\pi}{\sqrt{-\tilde \beta}} \right[$, with $k \in \NN^*$.

We can also compute the physical characteristics of the system. We recall that $\tilde x$ is a Boltzmann weight. We have $F = k_B T \tilde F \left( \tilde x \right)$ so
\begin{align}
    \frac{\partial F}{\partial T} &= k_B \tilde F(\tilde x) + k_B T \frac{\deriv \tilde x}{\deriv T} \frac{\deriv \tilde F}{\deriv \tilde x} \\
    &= k_B \tilde F(\tilde x) + k_B T \left( -\frac{\tilde x \ln\tilde x }{T} \right)\frac{\deriv \tilde F}{\deriv \tilde x} = k_B \tilde F(\tilde x) - k_B \tilde x \ln\tilde x \frac{\deriv \tilde F}{\deriv \tilde x}
\end{align}
and 
\begin{align}
    \frac{\partial^2 F}{\partial T^2} &= k_B \frac{\deriv \tilde x}{\deriv T} \frac{\deriv \tilde F}{\deriv \tilde x} - k_B (\ln\tilde x -1) \frac{\deriv \tilde x}{\deriv T}\frac{\deriv \tilde F}{\deriv \tilde x} - k_B \tilde x \ln\tilde x \frac{\deriv \tilde x}{\deriv T}\frac{\deriv^2 \tilde F}{\deriv \tilde x^2}  \\
    &= \frac{k_B}{T} \tilde x \ln\tilde x \left( \ln \tilde x - 2 \right) \frac{\deriv \tilde F}{\deriv \tilde x} + \frac{k_B}{T} \tilde x^2 \left(\ln\tilde x\right)^2 \frac{\deriv^2 \tilde F}{\deriv \tilde x^2}.
\end{align}
So the average energy, the fluctuation of the energy, the heat capacity at constant volume and the canonical entropy are respectively
\begin{equation}
    \bar{E}^c_\infty = F - T \frac{\partial F}{\partial T} = k_BT\tilde x \ln\tilde x \frac{\deriv \tilde F}{\deriv \tilde x},
\end{equation}
\begin{equation}
    \left( \Delta E^c_\infty \right)^2 = -k_BT^3  \frac{\partial^2 F}{\partial T^2} = (k_BT)^2 \tilde x \ln\tilde x \left( \left( 2-\ln \tilde x \right) \frac{\deriv \tilde F}{\deriv \tilde x} - \tilde x \ln\tilde x \frac{\deriv^2 \tilde F}{\deriv \tilde x^2} \right),
\end{equation}
\begin{equation}
    S^c_\infty = \frac{C_{V,\infty}}{T} =  - \frac{\partial F}{\partial T} =  - k_B  \tilde F \left(\tilde x\right) + k_B\tilde x \ln \tilde x \frac{\deriv \tilde F}{\deriv \tilde x}.
\end{equation}
We represent these different quantities, with $k_BT = 1$, for $\beta \geqslant 0$ in the figure \ref{fig:charac_beta_pos} and for $\tilde \beta = -1$ in the figures \ref{fig:charac_beta_neg} and \ref{fig:charac_beta_neg_near_0}. 

For $\tilde \beta > 0$, $\bar{E}^c_\infty$ has a negative part between two positive parts, with a minimum in the negative part, and tends towards infinity when $\tilde x \rightarrow + \infty$ (see figure \ref{fig:Epos}). $S^c_\infty$ has also a minimum value (positive then negative when $\tilde \beta$ increases) and tends towards infinity when $\tilde x \rightarrow + \infty$ (see figure \ref{fig:Spos}). And $(\Delta E^c_\infty)^2$ tends towards minus infinity when $\tilde x \rightarrow \infty$ and has one maximum or one minimum between two maxima, when $\tilde \beta$ increases (see figure \ref{fig:VEpos}).

\begin{figure}[h]
    \centering
    \begin{subfigure}[h]{0.5\textwidth}
        \centering
        \includegraphics[width=\textwidth]{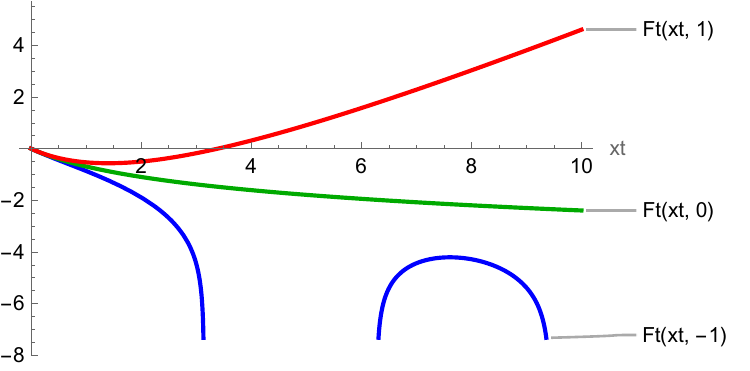}
        \caption{\centering For $\tilde \beta = -1, 0$ and $1$ (respectively in blue, green and red).}
        \label{fig:Fmid}
    \end{subfigure}
    \begin{subfigure}[h]{0.48\textwidth}
        \centering
        \includegraphics[width=\textwidth]{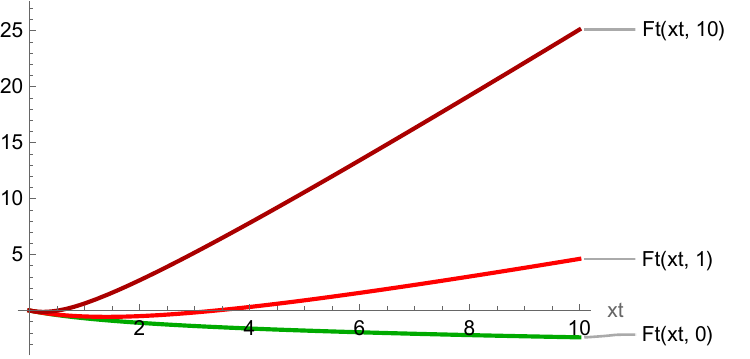}
        \caption{\centering For $\tilde \beta = 0, 1$ and $10$ (respectively in green, red and dark-red).}
        \label{fig:Fpos}
    \end{subfigure}
    \hspace{1mm}
    \begin{subfigure}[h]{0.48\textwidth}
        \centering
        \includegraphics[width=\textwidth]{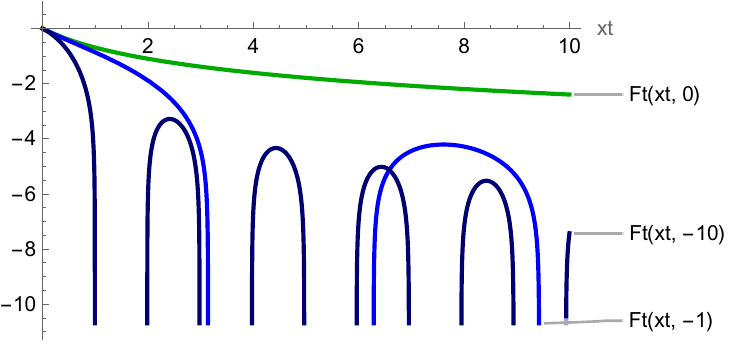}
        \caption{\centering For $\tilde \beta = 0, -1$ and $-10$ (respectively in green, blue and dark-blue).}
        \label{fig:Fneg}
    \end{subfigure}   
    \caption{\centering Graphs of $\tilde F$ for different values of $\tilde \beta$. \enquote{xt} corresponds to $\tilde x$ and \enquote{Ft} to $\tilde F$.}
    \label{fig:F}
\end{figure}

For $\tilde \beta = 0$, we can explicit the formulae of the physical characteristics
\begin{equation}
    \bar{E}^c_\infty = -\frac{\tilde x \ln \tilde x}{1+\tilde x}, \;\;\; 
\end{equation} 
\begin{equation}
    \left(\Delta E^c_\infty\right)^2 = - \frac{\tilde x\ln\tilde x}{(1+\tilde x)^2}(2+2\tilde x - \ln \tilde x),
\end{equation}
\begin{equation}
    S^c_\infty = \frac{C_{V,\infty}}{T} = -\frac{\tilde x \ln \tilde x}{1+\tilde x} + \ln\left(1+\tilde x\right).
\end{equation}
In addition, with the graphs on the figure \ref{fig:charac_beta_pos}, we see that $\bar{E}^c_\infty$ and $\left(\Delta E^c_\infty\right)^2$ have one maximum, one positive part and one negative part and tend toward minus infinity when $\tilde x \rightarrow \infty$. In contrast, $S^c_\infty$ has one maximum, it is only positive and tends toward 0 when $\tilde x \rightarrow \infty$. 

For $\tilde \beta < 0$, we have represented in the figures \ref{fig:charac_beta_neg} and \ref{fig:charac_beta_neg_near_0} the physical characteristics only on the domain of the free energy. In the figure \ref{fig:Eneg} and \ref{fig:Sneg}, we see that $\bar{E}^c_\infty$ and $S^c_\infty$ fluctuate between positive and negative values and tend toward plus or minus infinity when $\tilde x \rightarrow \frac{k\pi}{\sqrt{-\tilde \beta}}$ for $k \in \NN^*$, whereas $\left(\Delta E^c_\infty\right)^2$ seems only positive, has local minima and tends toward plus infinity when $\tilde x \rightarrow \frac{k\pi}{\sqrt{-\tilde \beta}}$ (see the figure \ref{fig:VEneg}).
Near to $\tilde x = 0$, $\bar{E}^c_\infty$ and $S^c_\infty$ have a positive and a negative part and a maximum value (see the figures \ref{fig:Eneg_near_0} and \ref{fig:Sneg_near_0}). Finally, as we see in the figure \ref{fig:VEneg_near_0}, $\left(\Delta E^c_\infty\right)^2$ has a negative part between two positive ones and a maximum and a minimum value .

\vspace{5mm}

\begin{figure}[h]
    \centering
    \begin{subfigure}[h]{0.45\textwidth}
        \centering
        \includegraphics[width=\textwidth]{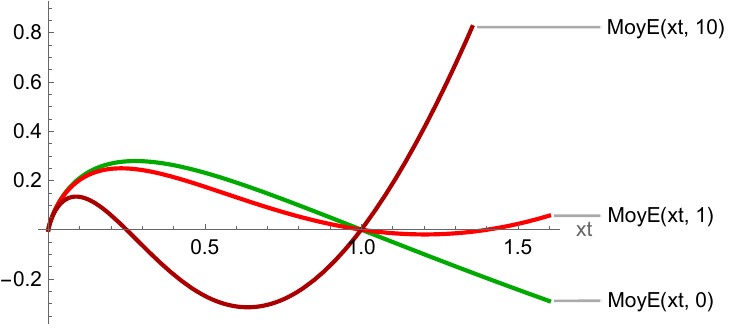}
        \caption{\centering Graph of $\bar{E}^c_\infty$.}
        \label{fig:Epos}
    \end{subfigure}
    \begin{subfigure}[h]{0.45\textwidth}
        \centering
        \includegraphics[width=\textwidth]{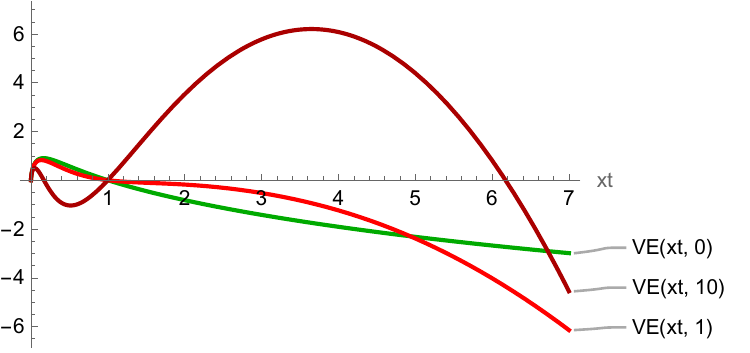}
        \caption{Graph of $(\Delta E^c_\infty)^2$.}
        \label{fig:VEpos}
    \end{subfigure}
    \hspace{1mm}
    \begin{subfigure}[h]{0.45\textwidth}
        \centering
        \includegraphics[width=\textwidth]{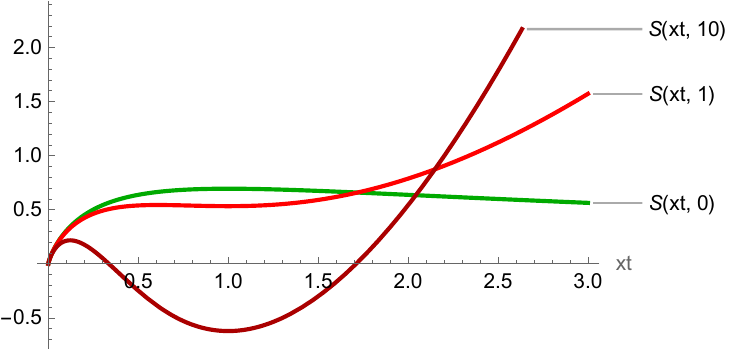}
        \caption{\centering Graph of $S^c_\infty = \frac{C_{V,\infty}}{T}$.}
        \label{fig:Spos}
    \end{subfigure}   
    \caption{\centering Graphs of the physical characteristics for $k_BT = 1$ and $\tilde \beta \geqslant 0$. The green, red and dark-red curve are respectively for $\tilde \beta = 0, 1$ and $10$.}
    \label{fig:charac_beta_pos}
\end{figure}

\vspace{5mm}

\begin{figure}[h]
    \centering
    \begin{subfigure}[h]{0.45\textwidth}
        \centering
        \includegraphics[width=\textwidth]{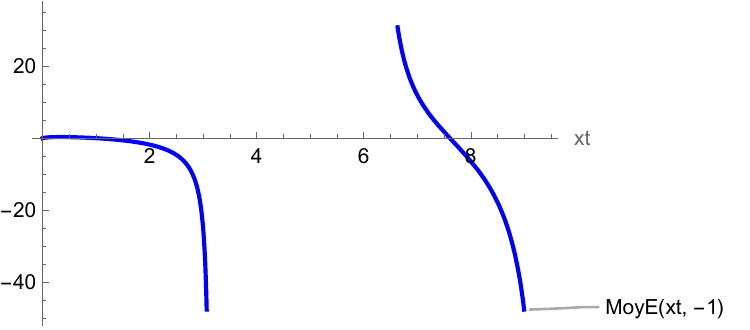}
        \caption{\centering Graph of $\bar{E}^c_\infty$.}
        \label{fig:Eneg}
    \end{subfigure}
    \begin{subfigure}[h]{0.45\textwidth}
        \centering
        \includegraphics[width=\textwidth]{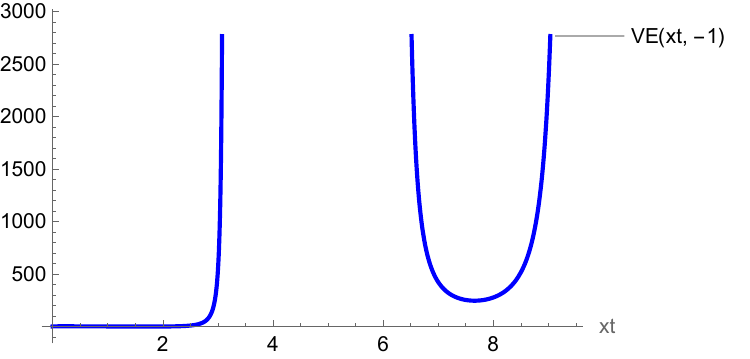}
        \caption{Graph of $(\Delta E^c_\infty)^2$.}
        \label{fig:VEneg}
    \end{subfigure}
    \hspace{1mm}
    \begin{subfigure}[h]{0.45\textwidth}
        \centering
        \includegraphics[width=\textwidth]{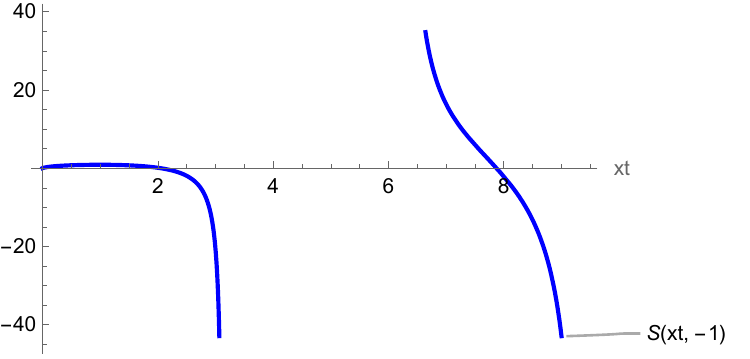}
        \caption{\centering Graph of $S^c_\infty = \frac{C_{V,\infty}}{T}$.}
        \label{fig:Sneg}
    \end{subfigure}   
    \caption{\centering Graphs of the physical characteristics for $k_BT = 1$ and $\tilde \beta =-1$.}
    \label{fig:charac_beta_neg}
\end{figure}

\newpage

\begin{figure}[h]
    \centering
    \begin{subfigure}[h]{0.45\textwidth}
        \centering
        \includegraphics[width=\textwidth]{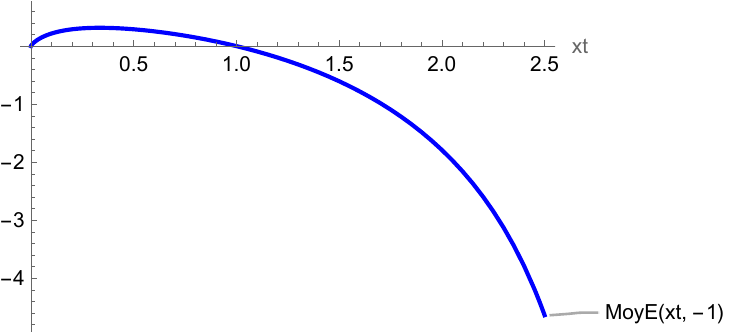}
        \caption{\centering Graph of $\bar{E}^c_\infty$.}
        \label{fig:Eneg_near_0}
    \end{subfigure}
    \begin{subfigure}[h]{0.45\textwidth}
        \centering
        \includegraphics[width=\textwidth]{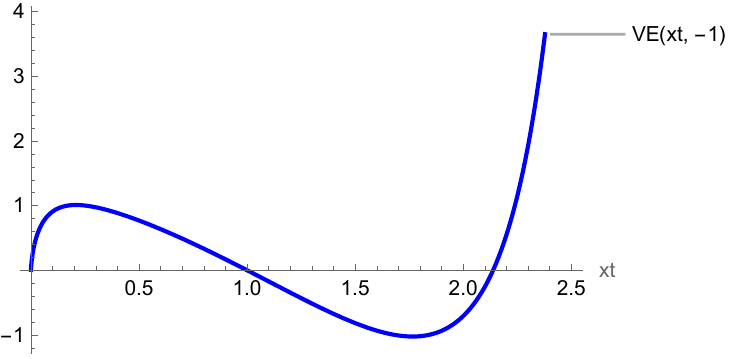}
        \caption{Graph of $(\Delta E^c_\infty)^2$.}
        \label{fig:VEneg_near_0}
    \end{subfigure}
    \hspace{1mm}
    \begin{subfigure}[h]{0.45\textwidth}
        \centering
        \includegraphics[width=\textwidth]{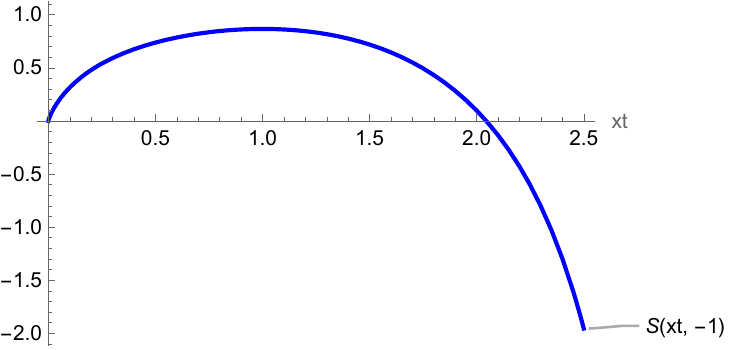}
        \caption{\centering Graph of $S^c_\infty = \frac{C_{V,\infty}}{T}$.}
        \label{fig:Sneg_near_0}
    \end{subfigure}   
    \caption{\centering Graphs of the physical characteristics for $k_BT = 1$ and $\tilde \beta =-1$ (for $\tilde x \leqslant 2.5$).}
    \label{fig:charac_beta_neg_near_0}
\end{figure}

\newpage

\section*{Conclusion}

\addcontentsline{toc}{section}{Conclusion}

Thanks to the method developed by Foda and Wheeler \cite{Foda2012} for the pDWBC, based on computations using the determinant properties, we have found a new expression \eqref{newformulaZ} for the partition function of the MR6V model, introduced in \cite{Belliard2024}. 

This new expression has enabled us to compute the homogeneous limit of this model, using the historical technique with the Taylor expansion, introduced in \cite{Izergin1992}. 
Then, we have computed the thermodynamic limit in two cases. In the case of the infinite lattice (limits in both directions), using some computations of \cite{Korepin2000}, we have found that the free energy of the system and the linked physical characteristics depend on the parameter $\beta$ \eqref{defbeta},  in other words, on the boundary conditions. However, our asymptotic approximation \eqref{ZetF} may miss important effects for strong boundary fields and other orders should be considered.

These results have to be generalized to the trigonometric (XXZ) six vertex model and further to the eight vertex model, related to the XYZ spin chain. The key step is to consider the MABA in these cases to find the associated Bethe vectors, that allows us to obtain an algebraic expression of the partition function for general boundary conditions.
The existence of the elliptic and trigonometric source identities \cite{Motegi2024a}, in the theory of multivariable special functions, strengthens our belief that there exists formulae of these partition functions in terms of Bethe vectors for trigonometric and elliptic cases. Indeed, the authors make a link between the Modified Izergin determinant in the rational case and the rational source identity. Moreover, a formula already exists for the elliptic SOS model with DWBC \cite{Galleas2018,Pakuliak2008,Rosengren2008} (this model is very close to the eight vertex ones). 

Moreover, we suppose that the physical results in the thermodynamic limit will be more interesting. In itself, the results might depend on the boundary conditions. 
Furthermore, for DWBC, the free energy in the thermodynamic limit depends on the phase of the system (see \cite{Korepin2000,ZinnJustin2000}), phases depending on the anisotropy of the system. In particular, for the anti-ferroelectric phase and at zero temperature limit, it appears an arctic curve that separate the central disordered phase of the four ordered corners. This arctic curve appears also for other boundary conditions (for pDWBC, reflective end ... see for example \cite{Lyberg2018} for numerical results and graphics).

Finally, we can also consider the linear system approach to the computation of the partition function of the six vertex model with reflecting boundary conditions (in the rational, trigonometric and elliptic case), related to the reflection algebra \cite{Sklyanin1988,Tsuchiya1998}.



\section*{Acknowledgments}

The authors would like to thank Nikita A. Slavnov for interesting discussions and the referees for reviewing, reporting and interesting comment on the article. M. Cornillault would like to thank also the ENS of Rennes for funding his thesis.


\newpage

\appendix

\section{Proofs of the new formula for the partition function}\label{appendixA}

\subsection{Using the method of Foda and Wheeler}

This proof follows the same reasoniong as in \cite{Foda2012}. We begin to prove the proposition for the case $n \leqslant m$. We prove the following statement by induction over $k \in [\![0,m-n]\!]$
 \begin{align}
    Z_{m-k,m}(\bu|\bv|B|\hat C) = \frac{\mathrm{tr}(B)^{m-k}\mathrm{tr}(\hat C)^m}{(1-\beta)^{m-k}} &\frac{h(\bu,\bv)}{g(\bu,\bv)\Delta(\bu)\Delta'(\bv)} \notag \\ 
    & \times \underset{1 \leqslant i,j \leqslant m}{\mathrm{det}} \left( \left\{ \begin{array}{cc}
        \phi_\beta(u_i-v_j) & \text{if} \;\; i \leqslant m-k \\
        \psi_{m-i}(-v_j) & \text{if} \;\; i \geqslant m-k+1
    \end{array} \right. \right).
\end{align}

For $k = 0$, there is nothing to prove because the explicit formula is already known with this form (see equations \eqref{ZwithK}-\eqref{K3}).
We detail now the case $k = 1$ to see the construction of the determinant. By definition, 
\begin{equation}
  Z_{nm}(\bu|\bv|B|\hat C) = \lim\limits_{u_m \rightarrow \infty} \left( \frac{1}{\mathrm{tr}(B)} \left(\frac{c}{u_m}\right)^m Z_{mm}(\bu\vert\bv|B|\hat{C}) \right)
\end{equation}
and 
\begin{equation}
 Z_{mm}(\bu|\bv|B|\hat C) = \frac{\mathrm{tr}(B)^m\mathrm{tr}(\hat C)^m}{(1-\beta)^m} \frac{h(\bu,\bv)}{g(\bu,\bv)\Delta(\bu)\Delta'(\bv)} \underset{1 \leqslant i,j \leqslant m}{\mathrm{det}} \left( \phi_\beta(u_i-v_j) \right).
\end{equation}
So 
\begin{align}
 \frac{1}{\mathrm{tr}(B)} &\left(\frac{c}{u_m}\right)^m Z_{mm}(\bu\vert\bv|B|\hat{C}) = \frac{\mathrm{tr}(B)^{m-1}\mathrm{tr}(\hat C)^m}{(1-\beta)^m} \frac{h(\bu_m,\bv)}{g(\bu_m,\bv)\Delta(\bu_m)\Delta'(\bv)} \notag \\ 
 & \times \left[ \left( \frac{c}{u_m} \right)^m \prod_{j=1}^m \frac{(u_m-v_j+c)(u_m-v_j)}{c^2} \prod_{i=1}^{m-1} \frac{c}{u_m-u_i}\right] \underset{1 \leqslant i,j \leqslant m}{\mathrm{det}} \left( \phi_\beta(u_i-v_j) \right).
\end{align}
We make the change of variable $u_m = \frac{1}{\varepsilon}$ and we inject the factor between brackets in the last row of the determinant (the row $m$). The coefficients $d_{m,\tilde \jmath}$, for $\tilde \jmath \in [\![1,m]\!]$, become
\begin{align}
 d_{m,\tilde \jmath} &= (\varepsilon c)^m \left( \frac{\varepsilon c}{1-v_{\tilde \jmath} \varepsilon} - \frac{\beta \varepsilon c}{1-(v_{\tilde \jmath}-c)\varepsilon} \right) \prod_{j=1}^m \frac{(1-(v_j-c)\varepsilon)(1-v_j\varepsilon)}{(\varepsilon c)^2} \prod_{i=1}^{m-1} \frac{\varepsilon c}{1-u_i \varepsilon}  \\
 &= \left( \frac{1}{1-v_{\tilde \jmath} \varepsilon} - \frac{\beta}{1-(v_{\tilde \jmath}-c)\varepsilon} \right) \prod_{j=1}^m (1-(v_j-c)\varepsilon)(1-v_j\varepsilon) \prod_{i=1}^{m-1} \frac{1}{1-u_i \varepsilon} 
\end{align}
So $d_{m,\tilde \jmath} \xrightarrow[\varepsilon \rightarrow 0]{} 1 - \beta$. Factorizing out of the determinant by $1-\beta$, we prove the statement for $k=1$.

We suppose now that the statement is true for one $k \in [\![0,m-n-1]\!]$ and we consider the statement for $k+1$. By definition, 
\begin{align}
 Z_{m-(k+1),m}(\bu|\bv|B|\hat C) &= \lim\limits_{u_m, \dots , u_{m-k} \rightarrow \infty} \left( \frac{1}{\mathrm{tr}(B)^{m-n}} \prod_{i=n+1}^m \left(\frac{c}{u_i}\right)^m Z_{mm}(\bu\vert\bv|B|\hat{C}) \right) \\
 &= \lim\limits_{u_{m-k} \rightarrow \infty} \left( \frac{1}{\mathrm{tr}(B)} \left(\frac{c}{u_{m-k}}\right)^m Z_{m-k,m}(\bu|\bv|B|\hat C) \right).
\end{align}
According the induction statement, 
\begin{align}
    Z_{m-k,m}(\bu|\bv|B|\hat C) = \frac{\mathrm{tr}(B)^{m-k}\mathrm{tr}(\hat C)^m}{(1-\beta)^{m-k}} &\frac{h(\bu,\bv)}{g(\bu,\bv)\Delta(\bu)\Delta'(\bv)} \notag \\ 
    & \times \underset{1 \leqslant i,j \leqslant m}{\mathrm{det}} \left( \left\{ \begin{array}{cc}
        \phi_\beta(u_i-v_j) & \text{if} \;\; i \leqslant m-k \\
        \psi_{m-i}(-v_j) & \text{if} \;\; i \geqslant m-k+1
    \end{array} \right. \right).
\end{align}
So
\begin{align}
 \frac{1}{\mathrm{tr}(B)} \left(\frac{c}{u_{m-k}}\right)^m &Z_{m-k,m}(\bu|\bv|B|\hat C) = \frac{\mathrm{tr}(B)^{m-(k+1)}\mathrm{tr}(\hat C)^m}{(1-\beta)^{m-k}} \frac{h(\bu_{m-k},\bv)}{g(\bu_{m-k},\bv)\Delta(\bu_{m-k})\Delta'(\bv)} \notag \\
 & \times \left[ \left( \frac{c}{u_{m-k}} \right)^m \prod_{j=1}^m \frac{(u_{m-k}-v_j+c)(u_{m-k}-v_j)}{c^2} \prod_{i=1}^{m-k-1} \frac{c}{u_{m-k}-u_i} \right] \notag \\
 &\times \underset{1 \leqslant i,j \leqslant m}{\mathrm{det}} \left( \left\{ \begin{array}{cc}
        \phi_\beta(u_i-v_j) & \text{if} \;\; i \leqslant m-k \\
        \psi_{m-i}(-v_j) & \text{if} \;\; i \geqslant m-k+1
    \end{array} \right. \right).
\end{align}
We make the change of variable $u_{m-k} = \frac{1}{\varepsilon}$ and we inject the factor between brackets in the row $m-k$ of the determinant. The coefficients $d_{m-k,\tilde \jmath}$, for $\tilde \jmath \in [\![1,m]\!]$, become 
\begin{align}
 d_{m-k,\tilde \jmath} &= (\varepsilon c)^m \left( \frac{\varepsilon c}{1-v_{\tilde \jmath}\varepsilon} - \frac{\beta \varepsilon c}{1-(v_{\tilde \jmath}-c)\varepsilon} \right) \prod_{j=1}^m \frac{(1-v_j\varepsilon)(1-(v_j-c)\varepsilon)}{(\varepsilon c)^2} \prod_{i=1}^{m-k-1} \frac{\varepsilon c}{1-u_i\varepsilon}  \\
 &= \frac{1}{(\varepsilon c)^k} \left( \frac{1}{1-v_{\tilde \jmath}\varepsilon} - \frac{\beta}{1-(v_{\tilde \jmath}-c)\varepsilon} \right) \prod_{j=1}^m (1-v_j\varepsilon)(1-(v_j-c)\varepsilon) \prod_{i=1}^{m-k-1} \frac{1}{1-u_i\varepsilon} .
\end{align}
We use the elementary $e_l$ and complete symmetric functions $h_l$ defined in the subsection \ref{symmfunc} with their generating function \eqref{genersymmfunc}.
Therefore
\begin{equation}
 \prod_{j=1}^m (1-v_j\varepsilon)(1-(v_j-c)\varepsilon) = \sum_{l=0}^{\infty} (-1)^l e_l\left(\frac{\bv}{c} \bigcup \left( \frac{\bv}{c}-1 \right) \right)(\varepsilon c)^l,
\end{equation}
\begin{equation}
 \prod_{i=1}^{m-k-1} (1-u_i\varepsilon)^{-1} = \sum_{l=0}^{\infty} h_l\left(\frac{\bu_{m-k}}{c}\right)(\varepsilon c)^l,
\end{equation}
\begin{equation}
 \frac{1}{1-v_{\tilde \jmath}\varepsilon}-\frac{\beta}{1-(v_{\tilde \jmath}-c)\varepsilon} = \sum_{l=0}^{\infty} \left( \left(\frac{v_{\tilde \jmath}}{c}\right)^l-\beta \left( \frac{v_{\tilde \jmath}}{c}-1\right)^l \right)(\varepsilon c)^l.
\end{equation}
So
\begin{align}
 d_{m-k,\tilde \jmath} = \sum_{l=0}^{\infty} \left(\sum_{l_1+l_2+l_3=l} f_{l_1,l_2}(\bu_{m-k}\vert\bv) \left( \left(\frac{v_{\tilde \jmath}}{c}\right)^{l_3}-\beta \left( \frac{v_{\tilde \jmath}}{c}-1\right)^{l_3} \right) \right) (\varepsilon c)^{l-k}
\end{align}
with 
\begin{equation}
 f_{l_1,l_2}(\bu_{m-k}\vert\bv) = (-1)^{l_1} e_{l_1}\left(\frac{\bv}{c} \bigcup \left( \frac{\bv}{c}-1 \right) \right) h_{l_2}\left(\frac{\bu_{m-k}}{c}\right).
\end{equation}
However,
\begin{align}
 \left(\frac{v_{\tilde \jmath}}{c}\right)^{l_3} - \beta \left( \frac{v_{\tilde \jmath}}{c}-1\right)^{l_3} = (1-\beta)\left(\frac{v_{\tilde \jmath}}{c}\right)^{l_3} - \beta \sum_{l_4=0}^{l_3-1} \binom{l_3}{l_4} \left(\frac{v_{\tilde \jmath}}{c}\right)^{l_4} (-1)^{l_3-l_4}.
\end{align}
So the terms for $l < k$ in $d_{m-k,\tilde\jmath}$ do not contribute to the determinant because there are a linear combination of the rows below. For $l=k$, only the term $(l_1=0,l_2=0,l_3=k)$ contribute to the determinant for the same reasons. And for $l > k$, the terms tend to $0$ when $\varepsilon \rightarrow 0$. We can therefore write the row $m-k$ with the term $(1-\beta)\left(\frac{v_{\tilde \jmath}}{c}\right)^k$ for all $\tilde \jmath \in [\![1,m]\!]$ when $\varepsilon \rightarrow 0$. Factorizing $1-\beta$ out of the determinant, we obtain 
\begin{align}
    Z_{m-(k+1),m}(\bu|\bv|B|\hat C) = &\frac{\mathrm{tr}(B)^{m-(k+1)}\mathrm{tr}(\hat C)^m}{(1-\beta)^{m-(k+1)}} \frac{h(\bu,\bv)}{g(\bu,\bv)\Delta(\bu)\Delta'(\bv)} \notag \\ 
    & \times \underset{1 \leqslant i,j \leqslant m}{\mathrm{det}} \left( \left\{ \begin{array}{cc}
        \phi_\beta(u_i-v_j) & \text{if} \;\; i \leqslant m-(k+1) \\
        \psi_{m-i}(-v_j) & \text{if} \;\; i \geqslant m-k
    \end{array} \right. \right)
\end{align}
which proves the statement for $k+1$ and finishes the induction proof.

Applying the formula for $k = m-n$, we obtain
\begin{align}
    Z_{n,m}(\bu|\bv|B|\hat C) = \frac{\mathrm{tr}(B)^{n}\mathrm{tr}(\hat C)^m}{(1-\beta)^{n}} &\frac{h(\bu,\bv)}{g(\bu,\bv)\Delta(\bu)\Delta'(\bv)} \notag \\ 
    & \times \underset{1 \leqslant i,j \leqslant m}{\mathrm{det}} \left( \left\{ \begin{array}{cc}
        \phi_\beta(u_i-v_j) & \text{if} \;\; i \leqslant n \\
        \psi_{m-i}(-v_j) & \text{if} \;\; i \geqslant n+1
    \end{array} \right. \right)
\end{align}
which concludes the proof for $n \leqslant m$.

To prove the formula for $n \geqslant m$, we just have to exchange in the previous proof the following things : $n \leftrightarrow m$, $B \leftrightarrow \hat C$, $u_i \leftrightarrow -v_j$ and $i \leftrightarrow j$. Then, the proof follows the same reasoning.

\subsection{An alternative proof}\label{equalitynewknown}

Now we will prove that each formula of \eqref{newformulaZ} is equal to one of the two known formulae (\eqref{ZwithK} with \eqref{K1} or \eqref{K2}). Contrary to the previous proof, this method is original.

To do this, as in \cite{Belliard2024}, we consider that $g(\bu,\bv)\Delta(\bu)\Delta'(\bv)$ is a determinant of some \enquote{Cauchy} matrix and we multiply the matrix in the determinant by the inverse of the \enquote{Cauchy} matrix. Then, we will transform the obtained matrix using the properties of the determinant. 
We denote the Cauchy matrix in quotation marks because, for the real Cauchy matrix, we have the same number of $u_i$ and $v_j$. Here, it is not necessary the case. First of all, we have to define this partial Cauchy matrix and compute its inverse. We name it as partial in reference to the pDWBC partition function.

\subsubsection{The partial Cauchy matrix and its inverse}

To define the partial Cauchy matrix, we use the same form as the one of the partition function because the real Cauchy matrix corresponds to the matrix inside the determinant for $\beta = 0$ (in the case of a square lattice).

\begin{Def}
 We define the partial Cauchy matrix $\mathcal{C}_{nm}(\bu\vert\bv)$ by
 \begin{equation}
  \mathcal{C}_{nm}(\bu\vert\bv) = \left\{ 
  \begin{array}{cc}
     \left( \left\{ \begin{array}{cc} 
             g(u_i-v_j) & \text{if} \; i \leqslant n \\
             \psi_{m-i}(-v_j) & \text{if} \; i \geqslant n+1
       \end{array} \right. \right)_{1 \leqslant i,j \leqslant m}   & \text{for} \; n \leqslant m \\
       \left( \left\{ \begin{array}{cc} 
             g(u_i-v_j) & \text{if} \; j \leqslant m \\
             \psi_{n-j}(u_i) & \text{if} \; j \geqslant m+1
       \end{array} \right. \right)_{1 \leqslant i,j \leqslant n}
       & \text{for} \; n \geqslant m
   \end{array} \right.
 \end{equation}
\end{Def}

\begin{prop}
 The determinant of the partial Cauchy matrix $\mathcal{C}_{nm}(\bu\vert\bv)$ has the same form as the one of the real Cauchy matrix 
 \begin{equation}
  \mathrm{det}\left(\mathcal{C}_{nm}(\bu\vert\bv)\right) = g(\bu,\bv) \Delta(\bu) \Delta'(\bv).
 \end{equation}
\end{prop}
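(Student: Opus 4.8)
The starting observation is that $\mathcal{C}_{nm}(\bu|\bv)$ is precisely the $\beta=0$ specialization of the block matrix in \eqref{newformulaZ}: since $\phi_0(x)=\frac{c}{x}=g(x)$, the top block becomes a genuine Cauchy block while the bottom block is unchanged. The plan is therefore to obtain the identity as a degeneration of the classical Cauchy determinant, reusing the mechanism of Appendix \ref{appendixA}. I would introduce $m-n$ auxiliary parameters $x_{n+1},\dots,x_m$, set $x_i=u_i$ for $i\le n$, write $\bx=\{u_1,\dots,u_n,x_{n+1},\dots,x_m\}$, and start from the classical $m\times m$ Cauchy evaluation $\det\big(g(x_i-v_j)\big)_{1\le i,j\le m}=g(\bx,\bv)\Delta(\bx)\Delta'(\bv)$ (equivalently, the determinant in \eqref{K3} at $\beta=0$, since $\phi_0=g$); the power-of-$c$ count matches because $m^2-2\binom{m}{2}=m$.

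First I would send $x_m,x_{m-1},\dots,x_{n+1}\to\infty$ sequentially, exactly as in Proposition \ref{defZt}. Expanding a Cauchy row in inverse powers, $g(x_i-v_j)=\sum_{k\ge0}c\,v_j^{k}/x_i^{k+1}$, the coefficients of order $k<m-i$ are monomials in $v_j$ of degree below $m-i$; by the time row $i$ is treated, rows $i+1,\dots,m$ have already become $\psi_{m-i-1}(-v_j),\dots,\psi_0(-v_j)$, so these low-order pieces are linear combinations of the rows below and drop out of the determinant. Scaling row $i$ by $(x_i/c)^{m-i+1}$ leaves $\psi_{m-i}(-v_j)=(v_j/c)^{m-i}$ as the surviving term, while the orders $k>m-i$ vanish in the limit. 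Carrying the same scalings through $g(\bx,\bv)\Delta(\bx)\Delta'(\bv)$ collapses the auxiliary variables to leave $g(\bu,\bv)\Delta(\bu)\Delta'(\bv)$, which is the claim; the case $n\ge m$ then follows from the substitution $n\leftrightarrow m$, $u_i\leftrightarrow-v_j$, $i\leftrightarrow j$ used at the end of Appendix \ref{appendixA}.

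A cleaner self-contained alternative is an induction on $n$ (with $d=|n-m|$ fixed) by equality of rational functions. Both $\det\mathcal{C}_{nm}$ and $g(\bu,\bv)\Delta(\bu)\Delta'(\bv)$ are antisymmetric in $\bu$ and in $\bv$, and as functions of $u_1$ they have only simple poles at $u_1=v_j$ (only the first row carries $u_1$, so one expands the determinant along it). By antisymmetry in $\bv$ it is enough to compare the residue at $u_1=v_1$: on the determinant side this residue is $c$ times the $(1,1)$-minor $\det\mathcal{C}_{n-1,m-1}(\bu_1|\bv_1)=g(\bu_1,\bv_1)\Delta(\bu_1)\Delta'(\bv_1)$ by the induction hypothesis, and a short computation gives the same residue on the product side. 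The difference is then a polynomial in $u_1$ that tends to $0$ as $u_1\to\infty$, hence vanishes identically; the base case $n=0$ is the reversed Vandermonde $\det\big((v_j/c)^{m-i}\big)=\Delta'(\bv)$.

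In both routes the main obstacle is the same bookkeeping checkpoint already met in Appendix \ref{appendixA}: verifying that the sub-leading expansion coefficients really lie in the span of the lower monomial rows (so that they contribute nothing), and that the chosen scalings reproduce simultaneously the monomial rows and the collapse of the prefactor. Everything else is routine.
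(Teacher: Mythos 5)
Your first route is essentially the paper's own proof: the paper establishes the identity by induction on the number of removed rows, sending $u_m,\dots,u_{n+1}\to\infty$ sequentially, expanding the affected row via the generating functions of the elementary and complete symmetric functions, and discarding the low-order coefficients because they are linear combinations of the monomial rows already sitting below. Your version differs only in that you keep the determinant and the prefactor $g(\bx,\bv)\Delta(\bx)\Delta'(\bv)$ separate instead of working with their normalized ratio, so the expansion of a row is the plain geometric series $\sum_{k\geqslant 0} c\,v_j^{k}/x_i^{k+1}$ rather than a convolution of $e_l$'s and $h_l$'s --- a mild streamlining of the same bookkeeping, with the same key step (the sub-leading monomials lie in the span of the rows below, which holds precisely because the limits are taken in the order $x_m,\dots,x_{n+1}$). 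Your second route is genuinely different and also correct: you characterize both sides as rational functions of $u_1$ with simple poles only at $u_1=v_j$, match the residue at $u_1=v_1$ (where the $(1,1)$ minor is exactly $\det\mathcal{C}_{n-1,m-1}(\bu_1\vert\bv_1)$, handled by the induction hypothesis), invoke antisymmetry in $\bv$ for the other poles, and use the decay at $u_1\to\infty$ (which requires $m\geqslant n$, true in this case) to kill the polynomial difference, with the reversed Vandermonde $\det\left((v_j/c)^{m-i}\right)=\Delta'(\bv)$ as base case. This is the standard Izergin--Korepin-style uniqueness argument; it is more self-contained, avoids the symmetric-function machinery and the ``span of the rows below'' verification entirely, and would arguably fit the paper better since Proposition \ref{calculsum} already supplies the residue toolkit --- at the cost of running the induction on $\min(n,m)$ with $m-n$ fixed rather than on the number of Vandermonde rows. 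Either route proves the proposition.
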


\begin{proof}
 The proof has the construction as the one of the formula \ref{newformulaZ}. We will focus on the case $n \leqslant m$. We obtain the other case exchanging $n \leftrightarrow m$, $u_i \leftrightarrow -v_j$ and $i \leftrightarrow j$. 
 
 We will prove the following statement by induction over $k \in [\![0,m-n]\!]$ 
 \begin{equation}\label{detCauchypartk}
  \mathrm{det}\left(\mathcal{C}_{m-k,m}(\bu\vert\bv)\right) = g(\bu,\bv) \Delta(\bu) \Delta'(\bv) 
 \end{equation}

 For $k = 0$, the statement is automatically true because the partial Cauchy matrix is the real Cauchy matrix and its determinant is well known : it is $g(\bu,\bv) \Delta(\bu) \Delta'(\bv)$. The interested reader can find a proof, for example, in the french book \cite{Gourdon2021}.
 
 We suppose now that the statement is true for one $k \in [\![0,m-n-1]\!]$ and we want to prove the statement for $k+1$. So we have to eliminate the variable $u_{m-k}$ : we will do this tending it toward infinity. 
 So, according to the induction statement,
 \begin{align}
  1 &= \frac{1}{g(\bu,\bv) \Delta(\bu) \Delta'(\bv)} \mathrm{det}\left(\mathcal{C}_{m-k,m}(\bu\vert\bv)\right) \\
  &= \frac{1}{g(\bar u_{m-k},\bv) \Delta(\bar u_{m-k}) \Delta'(\bv)} \left[ \prod_{j=1}^m \frac{u_{m-k}-v_j}{c} \prod_{i=1}^{m-k-1} \frac{c}{u_{m-k}-u_i} \right] \mathrm{det}\left(\mathcal{C}_{m-k,m}(\bu\vert\bv)\right).
 \end{align}
 We make the change of variable $u_{m-k} = \frac{1}{\varepsilon}$ and we inject the factor between brackets in the row $m-k$ of the determinant. The coefficients $c_{m-k,\tilde \jmath}$, for $\tilde \jmath \in [\![1,m]\!]$, become 
 \begin{equation}
  c_{m-k,\tilde \jmath} = \frac{1}{(\varepsilon c)^{k}} \frac{1}{1 - v_{\tilde \jmath} \varepsilon} \prod_{j=1}^m (1- v_j \varepsilon) \prod_{i=1}^{m-k-1} \frac{1}{1-u_i \varepsilon}.
 \end{equation}
 Using again $e_l$ and $h_l$ and their generating functions, we obtain
 \begin{equation}
  c_{m-k,\tilde \jmath} = \sum_{l=0}^{\infty} \left( \sum_{l_1+l_2+l_3=l} (-1)^{l_1}e_{l_1}\left( \frac{\bv}{c} \right) h_{l_2}\left(\frac{\bu_{m-k}}{c}\right) \left( \frac{v_{\tilde \jmath}}{c} \right)^{l_3} \right) (\varepsilon c)^{l-k}.
 \end{equation}
 Doing the same reasoning that in the previous proof, we can write the row $m-k$ with the term $\left( \frac{v_{\tilde \jmath}}{c} \right)^{k}$ for all $\tilde \jmath \in [\![1,m]\!]$ when $\varepsilon \rightarrow 0$. So we obtain 
 \begin{equation}
  1 = \frac{1}{g(\bu_{m-k},\bv) \Delta(\bu_{m-k}) \Delta'(\bv)} \mathrm{det}\left(\mathcal{C}_{m-(k+1),m}(\bu\vert\bv)\right), 
 \end{equation}
 because $1$ is independent of $u_{m-k}$, which proves the statement for $k+1$.
 
 Finally, by induction over $k \in [\![0,m-n]\!]$, 
 \begin{equation}
  \mathrm{det}\left(\mathcal{C}_{m-k,m}(\bu\vert\bv)\right) = g(\bu,\bv) \Delta(\bu) \Delta'(\bv).
 \end{equation}
For $k = m-n$, we obtain 
\begin{equation}
  \mathrm{det}\left(\mathcal{C}_{nm}(\bu\vert\bv)\right) = g(\bu,\bv) \Delta(\bu) \Delta'(\bv).
 \end{equation}
\end{proof}

\begin{prop}
 The inverse matrix of the partial Cauchy matrix $\mathcal{C}_{nm}(\bu\vert\bv)$ is 
 \begin{equation}
   \mathcal{C}_{nm}(\bu\vert\bv)^{-1} = \left\{ 
        \begin{array}{cc}
            \left( \begin{matrix}
             \left\{ \begin{array}{cc}
             g(u_j,v_i) \frac{g(\bu_j,u_j)g(v_i,\bv_i)}{g(u_j,\bv)g(\bu,v_i)} & \text{if} \; j \leqslant n \\
             \frac{g(v_i,\bv_i)}{g(v_i,\bu)} f_{j-(n+1)}(\bu\vert -\bv_i) & \text{if} \; j \geqslant n +1
             \end{array} \right.
            \end{matrix} \right) & \text{if} \;\; n \leqslant m \\
            \left( \begin{matrix}
             \left\{ \begin{array}{cc}
             g(u_j,v_i) \frac{g(\bu_j,u_j)g(v_i,\bv_i)}{g(u_j,\bv)g(\bu,v_i)} & \text{if} \; i \leqslant m \\
             \frac{g(\bu_j,u_j)}{g(\bv,u_j)} f_{i-(m+1)}(-\bv\vert \bu_j) & \text{if} \; i \geqslant m+1
             \end{array} \right.
            \end{matrix} \right)& \text{if} \;\; n \geqslant m
        \end{array}
   \right.
 \end{equation}
 where 
 \begin{equation}
     f_p(\bar{x}|\bar{y}) = \frac{1}{c^p} \sum_{l=0}^p h_{p-l}(\bar{x})e_l(\bar{y}).
 \end{equation}
\end{prop}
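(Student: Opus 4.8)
The plan is to verify the formula by direct computation of the matrix product $\mathcal{C}_{nm}(\bu|\bv)\,\mathcal{C}_{nm}(\bu|\bv)^{-1}$ and checking that it equals the identity. By the same substitution used throughout ($n\leftrightarrow m$, $u_i\leftrightarrow -v_j$, $i\leftrightarrow j$), it suffices to treat $n\leqslant m$, where the matrix is $m\times m$. Writing the claimed inverse as $M$ with entries $M_{jk}$, I would compute $(\mathcal{C}_{nm}M)_{ik}=\sum_{j=1}^m (\mathcal{C}_{nm})_{ij}M_{jk}$ and show it equals $\delta_{ik}$, splitting into four cases according to whether the row index $i$ of $\mathcal{C}_{nm}$ and the column index $k$ of $M$ lie in the \emph{Cauchy} block ($\leqslant n$) or the \emph{Vandermonde} block ($\geqslant n+1$).

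The observation driving every case is that the factor $g(v_j,\bv_j)=\prod_{l\neq j}\tfrac{c}{v_j-v_l}$ appearing in each entry $M_{jk}$ is exactly a residue weight: setting $V(w)=\prod_{l=1}^m(w-v_l)$, one has $g(v_j,\bv_j)=c^{m-1}\,\mathrm{Res}_{w=v_j}\,V(w)^{-1}$. Hence, after stripping this factor and replacing $v_j$ by a complex variable $w$ wherever it legitimately occurs, each column sum takes the form $\sum_{j}\mathrm{Res}_{w=v_j}\Phi_{ik}(w)$ for an explicit rational $\Phi_{ik}$ whose only finite poles are at $w\in\bv$ and (when $i\leqslant n$ or $k\leqslant n$) at $w\in\bu$. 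I would first run the model computation in the pure-Cauchy case $i,k\leqslant n$: using $1/g(\bu,v_j)=(-1)^nU(v_j)/c^n$ with $U(w)=\prod_l(w-u_l)$, the function $\Phi_{ik}(w)$ is proportional to $U(w)\big/[(u_i-w)(u_k-w)V(w)]$, the apparent poles at $\bu$ simplify against $U$, and a degree count using $n\leqslant m$ shows $\Phi_{ik}=o(1/w)$ at infinity. The residue theorem then gives $\sum_{\text{all poles}}\mathrm{Res}=0$, so $\sum_j\mathrm{Res}_{w=v_j}\Phi_{ik}=-\mathrm{Res}_{w=u_i}\Phi_{ik}$ (which vanishes when $i\neq k$ since no pole at $u_i$ survives), and on the diagonal the residue at $u_i$ cancels the prefactor $g(\bu_i,u_i)/g(u_i,\bv)$, yielding $\delta_{ik}$.

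For the remaining three cases the same residue theorem applies, but two new ingredients are needed. First, a Vandermonde column ($k\geqslant n+1$) carries $f_{k-(n+1)}(\bu|-\bv_j)$, whose dependence on the \emph{deleted} node $\bv_j$ must be reconciled with $w$; I would handle this through the generating function $\sum_{p\geqslant 0}f_p(\bu|\bar y)\,z^p=\prod_{y\in\bar y}(1+yz/c)\big/\prod_{u\in\bu}(1-uz/c)$, a consequence of the generating functions of $e_l$ and $h_l$. Specialized to $\bar y=-\bv_j$ its numerator is $\prod_{l\neq j}(1-v_lz/c)$, which lets me package $g(v_j,\bv_j)\,f_{k-(n+1)}(\bu|-\bv_j)$ as a joint residue in $w$ (for the node) and an auxiliary variable $z$ (for the power index), the pole $w=v_j$ automatically reinstating the missing factor $(1-v_jz/c)$. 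Second, a Vandermonde row ($i\geqslant n+1$) contributes a polynomial factor $(w/c)^{m-i}$ rather than a pole at a finite $u_i$, so the Kronecker delta is produced by the residue at infinity; here the range $0\leqslant m-i\leqslant m-n-1$ is exactly what makes the relevant coefficient extraction return $1$ on the diagonal and $0$ off it.

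The main obstacle is precisely this bookkeeping: tracking the $\bv_j$ versus $w$ dependence in the mixed and Vandermonde blocks, and establishing in each of the four cases that $\Phi_{ik}$ decays fast enough at infinity for ``sum of all residues $=0$'' to hold, so that the residue at infinity (when it is the operative one) is correctly captured. Once the generating-function identity for $f_p$ is in hand and the degree counts are organized by the two block-membership dichotomies, each case collapses to a short residue evaluation; assembling the four cases gives $\mathcal{C}_{nm}\mathcal{C}_{nm}^{-1}=I$, and the $n\geqslant m$ case follows by the stated symmetry.
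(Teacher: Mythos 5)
Your proposal is correct and matches the paper's proof in all essentials: both verify $\mathcal{C}_{nm}(\bu\vert\bv)\,\mathcal{C}_{nm}(\bu\vert\bv)^{-1}=I$ for $n\leqslant m$ by splitting into the four blocks according to Cauchy versus Vandermonde membership, handling the Cauchy columns with the vanishing of the sum of residues of a rational function decaying faster than $1/w$ at infinity (the paper's proposition \ref{calculsum}), and obtaining the $n\geqslant m$ case by the stated exchange symmetry. The only divergence is one of packaging: where you treat the Vandermonde-column entries $f_{k-(n+1)}(\bu\vert-\bv_j)$ through the generating function of the $f_p$ and a joint residue that reinstates the deleted factor at the pole $w=v_j$, the paper expands them explicitly in elementary and complete symmetric functions via \eqref{genersymmfunc} and concludes with the inverse-Vandermonde identity \eqref{eqVand} together with $\sum_{r}(-1)^re_rh_{s-r}=0$ --- an equivalent coefficient extraction.
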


\begin{proof}
 We denote $\mathcal{IC}_{nm}(\bu\vert\bv)$ the previous matrix. We begin with the case $n \leqslant m$ and we will compute the coefficient $m^{(1)}_{ij}$ of the matrix $\mathcal{C}_{nm}(\bu\vert\bv)\mathcal{IC}_{nm}(\bu\vert\bv)$. So we will sum on the elements of $\bv$.
 \paragraph{For $i,j \in [\![1,n]\!]$,}
    \begin{align}
    m^{(1)}_{ij} &= \sum_{k=1}^m \frac{g(u_i,v_k)g(u_j,v_k)g(\bu_j,u_j)g(v_k,\bv_k)}{g(u_j,\bv)g(\bu,v_k)} \\
    &=\frac{g(\bu_j,u_j)}{g(u_j,\bv)} \sum_{k=1}^m \frac{g(u_i,v_k)g(u_j,v_k)g(v_k,\bv_k)}{g(\bu,v_k)}.
    \end{align}
    We denote the complex rational function 
    \begin{equation}
        F_{ij}(z) = \frac{g(u_i,z)g(u_j,z)g(z,\bv)}{g(\bu,z)}.
    \end{equation}
    The degree of the denominator is $m+2$, the one of the numerator is $n$ and we have $m+2 > n +1$.
    \begin{itemize}
        \item If $i \neq j$, the poles of $F_{ij}$ are all the elements of $\bv$, they are of order $1$ and their residue is 
        \begin{equation}
            \mathrm{Res}(F_{ij},v_k) = \frac{g(u_i,v_k)g(u_j,v_k)g(v_k,\bv_k)}{g(\bu,v_k)c}.
        \end{equation}
        So, using the proposition \ref{calculsum}, we obtain 
        \begin{equation}
            \sum_{k=1}^m \frac{g(u_i,v_k)g(u_j,v_k)g(v_k,\bv_k)}{g(\bu,v_k)} = 0.
        \end{equation}
        \item If $i = j$, the poles of $F_{jj}$ are all the elements of $\bv$, at the same order and with the same residue, and $u_j$, at the order $1$ and with the residue
        \begin{equation}
            \mathrm{Res}(F_{jj},u_j) = -\frac{1}{c}\frac{g(u_j,\bv)}{g(\bu_j,u_j)}
        \end{equation}
        So, using the proposition \ref{calculsum},
        \begin{equation}
            \sum_{k=1}^m \frac{g(u_i,v_k)g(u_j,v_k)g(v_k,\bv_k)}{g(\bu,v_k)} = \frac{g(u_j,\bv)}{g(\bu_j,u_j)}.
        \end{equation}
    \end{itemize}
    Therefore, $m^{(1)}_{ij} = \delta_{ij}$.
    
    \paragraph{For $i \in [\![n+1,m]\!]$ and $j \in [\![1,n]\!]$,}
    \begin{align}
     m^{(1)}_{ij} &= \sum_{k=1}^m \left( \frac{v_k}{c} \right)^{m-i} \frac{g(u_j,v_k)g(\bu_j,u_j)g(v_k,\bv_k)}{g(u_j,\bv)g(\bu,v_k)} \\
     &= \frac{g(\bu_j,u_j)}{g(u_j,\bv)} \sum_{k=1}^m \left( \frac{v_k}{c} \right)^{m-i} \frac{g(u_j,v_k)g(v_k,\bv_k)}{g(\bu,v_k)}.
    \end{align}
    We denote the complex rational function
    \begin{equation}
     F_{ij}(z) = \left( \frac{z}{c} \right)^{m-i} \frac{g(u_j,z)g(z,\bv)}{g(\bu,z)}.
    \end{equation}
    The degree of the denominator is $m+1$, the one of the numerator is $n+m-i$ and $m+1 > n+m-i+1$ (because $i > n$). 
    The poles of $F_{ij}$ are all the elements of $\bv$, they are of order $1$ and their residue is 
    \begin{equation}
        \mathrm{Res}(F_{ij},v_k) = \left( \frac{v_k}{c} \right)^{m-i} \frac{g(u_j,v_k)g(v_k,\bv_k)}{g(\bu,v_k) c}.
    \end{equation}
    So, using the proposition \ref{calculsum}, we obtain 
    \begin{equation}
        \sum_{k=1}^m \left( \frac{v_k}{c} \right)^{m-i} \frac{g(u_j,v_k)g(v_k,\bv_k)}{g(\bu,v_k)} = 0.
    \end{equation}
    
    \paragraph{For $i \in [\![1,n]\!]$ and $j \in [\![n+1,m]\!]$,}
    \begin{align}
     m^{(1)}_{ij} &= \sum_{k=1}^m g(u_i,v_k)\frac{g(v_k,\bv_k)}{g(v_k,\bu)}f_{j-(n+1)}(\bu\vert-\bv_k) \\
     &= \sum_{k=1}^m g(u_i,v_k)\frac{g(v_k,\bv_k)}{g(v_k,\bu)} \frac{1}{c^{j-n-1}} \sum_{l=0}^{j-n-1} (-1)^lh_{j-n-1-l}(\bu)e_l(\bv_k) \\
     &= \frac{-1}{c^{j-n-1}} \sum_{l=0}^{j-n-1} h_{j-n-1-l}(\bu) \sum_{k=1}^m (-1)^l e_l(\bv_k) \frac{g(v_k,\bv_k)}{g(v_k,\bu_i)} .
    \end{align}
    And
    \begin{equation}
        \sum_{k=1}^m (-1)^l e_l(\bv_k)  \frac{g(v_k,\bv_k)}{g(v_k,\bu_i)}
        = c^{m-n} \sum_{k=1}^m (-1)^l e_l(\bv_k) \underset{p \neq i}{\prod_{p = 1}^n} (v_k-u_p)  \underset{q \neq k}{\prod_{q = 1}^m} \frac{1}{v_k - v_q}  
    \end{equation}
    with, using the generating function of $e_p$ \eqref{genersymmfunc},
    \begin{equation}
        \underset{p \neq i}{\prod_{p = 1}^n} (v_k-u_p) = v_k^{n-1} E_{-\bu_i}(v_k^{-1}) = \sum_{p = 0}^{n-1} (-1)^p e_p(\bu_i) v_k^{n-1-p}.
    \end{equation}
    Therefore, 
    \begin{equation}
        \sum_{k=1}^m (-1)^l e_l(\bv_k)\frac{g(v_k,\bv_k)}{g(v_k,\bu_i)} 
        = c^{m-n} \sum_{p = 0}^{n-1} (-1)^p e_p(\bu_i) \sum_{k=1}^m v_k^{n-1-p}  (-1)^l e_l(\bv_k) \underset{q \neq k}{\prod_{q = 1}^m} \frac{1}{v_k - v_q}.
    \end{equation}
    We can compute the second sum in the right term using the equation \eqref{eqVand}, with $N = m$, $N-i = l$ and $j = n-p$ :
    \begin{equation}
        \sum_{k=1}^m v_k^{n-1-p} (-1)^l e_l(\bv_k) \underset{q \neq k}{\prod_{q = 1}^m} \frac{1}{v_k - v_q}  = \delta_{m-l,n-p}.
    \end{equation}
    So 
    \begin{equation}
     m^{(1)}_{ij} = \frac{-1}{c^{j-m-1}} \sum_{l=0}^{j-n-1} h_{j-n-1-l}(\bu) \sum_{p = 0}^{n-1} e_p(-\bu_i) \delta_{m-l,n-p}.
    \end{equation}
    However, $0 \leqslant p \leqslant n-1$ so $1 \leqslant n-p \leqslant n$ and $0 \leqslant l \leqslant j-n-1 \leqslant m-n-1$ so $n+1 \leqslant m-l \leqslant m$. So, $\delta_{m-l,n-p} = 0$, for all $p$ and $l$, and $m^{(1)}_{ij} = 0$.
    
    \paragraph{For $i,j \in [\![n+1,m]\!]$,}
    \begin{align}
        m^{(1)}_{ij} &= \sum_{k=1}^m \left( \frac{v_k}{c} \right)^{m-i} \frac{g(v_k,\bv_k)}{g(v_k,\bu)}f_{j-(n+1)}(\bu\vert-\bv_k) \\
        &= \frac{1}{c^{j-n-1}} \left( \frac{v_k}{c} \right)^{m-i}\sum_{l=0}^{j-n-1} h_{j-n-1-l}(\bu) \sum_{k=1}^m (-1)^l e_l(\bv_k) \frac{g(v_k,\bv_k)}{g(v_k,\bu)}\\
        &= \frac{1}{c^{j-i}} \sum_{l=0}^{j-n-1} h_{j-n-1-l}(\bu) \sum_{p=0}^n (-1)^p e_p(\bu) \sum_{k=1}^m v_k^{m-i+n-p} (-1)^l e_l(\bv_k) \underset{q \neq k}{\prod_{q=1}^m} \frac{1}{v_k-v_q} 
    \end{align}
    using the same computations as just before. And, as just before also, 
    \begin{equation}
        \sum_{k=1}^m v_k^{m-i+n-p} (-1)^l e_l(\bv_k) \underset{q \neq k}{\prod_{q=1}^m} \frac{1}{v_k-v_q} = \delta_{m-l,m-i+n-p} = \delta_{p,n+1+l-i}.
    \end{equation}
    Therefore, 
    \begin{align}
        m^{(1)}_{ij} &= \frac{1}{c^{j-i}} \sum_{l=0}^{j-n-1} h_{j-n-1-l}(\bu) \sum_{p=0}^n (-1)^p e_p(\bu) \delta_{p,n+1+l-i} \\
        &= \frac{1}{c^{j-i}} \sum_{l=0}^{j-n-1} (-1)^{n+1+l-i} h_{j-n-1-l}(\bu) e_{n+1+l-i}(-\bu) \mathds{1}_{0 \leqslant n+1+l-i \leqslant n} \\
        &= \frac{1}{c^{j-i}} \sum_{l=0}^{j-n-1} (-1)^{j-i-l} h_{l}(\bu) e_{j-i-l}(-\bu) \mathds{1}_{0 \leqslant j-i-l \leqslant n}
    \end{align}
    making the change of index $l'=j-n-1-l$. In addition, we know that $e_{j-i-l}(-\bu) = 0$ if $j-i-l > n$ by definition of $e_r$ (see definition \ref{defsymmfunc}). So 
    \begin{equation}
        m^{(1)}_{ij} = \frac{1}{c^{j-i}} \sum_{l=0}^{j-n-1} h_{l}(\bu) e_{j-i-l}(-\bu) \mathds{1}_{l \leqslant j-i} = \frac{1}{c^{j-i}} \sum_{l=0}^{j-i} h_{l}(\bu) e_{j-i-l}(-\bu)
    \end{equation}
    because $j-i \leqslant j-n-1$.
    \begin{itemize}
        \item If $i > j$, the sum is empty and $m^{(1)}_{ij} = 0$.
        \item If $i = j$, the sum contains only one term : $m_{ii} = h_0(\bu)e_0(\bu) = 1$.
        \item If $i < j$, according to the equation \eqref{identitysymmfunc}, $m^{(1)}_{ij} = 0$.
    \end{itemize}
    So $m^{(1)}_{ij} = \delta_{ij}$.
 
 \paragraph{} To conclude this part, for all $i,j \in [\![1,m]\!]$, $m^{(1)}_{ij} = \delta_{ij}$ and $\mathcal{IC}_{nm}(\bu\vert\bv) = \mathcal{C}_{nm}(\bu\vert\bv)^{-1}$.
 
 The proof for the case $n \leqslant m$ is similar to the previous case. 
 This time, we compute the coefficient of the matrix $\mathcal{IC}_{nm}(\bu\vert\bv)\mathcal{C}_{nm}(\bu\vert\bv)$ (we sum on the elements of $\bu)$. 
 The following part is indentical to the previous case, exchanging $n \leftrightarrow m$, $i \leftrightarrow j$ and $\bu \leftrightarrow -\bv$.
 \end{proof}
 
 \begin{remark}
  The matrices
  \begin{equation}
    \tilde{\mathcal{C}}_{nm}(\bu\vert\bv) = \left\{ 
  \begin{array}{cc}
     \left( \left\{ \begin{array}{cc} 
             h(u_i-v_j)^{-1} & \text{if} \; i \leqslant n \\
             \psi_{m-i}(-v_j) & \text{if} \; i \geqslant n+1
       \end{array} \right. \right)_{1 \leqslant i,j \leqslant m}   & \text{for} \; n \leqslant m \\
       \left( \left\{ \begin{array}{cc} 
             h(u_i-v_j)^{-1} & \text{if} \; j \leqslant m \\
             \psi_{n-j}(u_i) & \text{if} \; j \geqslant m+1
       \end{array} \right. \right)_{1 \leqslant i,j \leqslant n}
       & \text{for} \; n \geqslant m
   \end{array} \right.
  \end{equation}
 corresponds also to partial Cauchy matrices : $\mathcal{C}_{nm}(\bu+c\vert\bv)$ in the first case and $\mathcal{C}_{nm}(\bu\vert\bv-c)$ in the second one because 
 \begin{equation}
  h(u,v) = g(u+c,v)^{-1} = g(u,v-c)^{-1}.
 \end{equation}
 We translate by $\pm c$, in the first case, the elements of $\bu$ and, in the second case, the elements of $\bv$ because we do not want to put $c$ in the Vandermonde part of the matrix (but it is possible). Its determinant therefore is
 \begin{equation}
  \mathrm{det} \left( \tilde{\mathcal{C}}_{nm}(\bu\vert\bv) \right) = \frac{\Delta(\bu)\Delta'(\bv)}{h(\bu,\bv)}
 \end{equation}
 and its inverse is
 \begin{equation}\label{inverseCauchyparttrans}
   \tilde{\mathcal{C}}_{nm}(\bu\vert\bv)^{-1} = \left\{ \begin{array}{cc}
        \mathcal{C}_{nm}(\bu+c\vert\bv)^{-1} & \text{if} \;\; n \leqslant m \\
        \mathcal{C}_{nm}(\bu\vert\bv-c)^{-1} & \text{if} \;\; n \geqslant m
        \end{array}
   \right. .
 \end{equation}
 \end{remark}

 \subsubsection{Transformations from the new formula to known ones}
 
 \paragraph{For $n \leqslant m$} We start from the formula \eqref{newformulaZ} and we will obtain the formula \eqref{ZwithK}-\eqref{K2}. 
We consider therefore that $g(\bu,\bv)\Delta(\bu)\Delta'(\bv)$ is the determinant of the partial Cauchy matrix $\mathcal{C}_{nm}(\bu\vert\bv)$. 
So we inject this expression in the determinant of $Z_{nm}(\bu|\bv|B|\hat C)$, multiplying the matrix inside by $\mathcal{C}_{nm}(\bu\vert\bv)^{-1}$ from the right (we sum over the elements of $\bv$). We compute the obtained coefficient $m^{(2)}_{ij}$.
\subparagraph{For $i,j \in [\![1,n]\!]$,}
 \begin{align}
  m^{(2)}_{ij} &= \sum_{k=1}^m \left( g(u_i,v_k) - \frac{\beta}{h(u_i,v_k)} \right) g(u_j,v_k) \frac{g(\bu_j,u_j)g(v_k,\bv_k)}{g(u_j,\bv)g(\bu,v_k)} \\
  &= m^{(1)}_{ij} - \beta \frac{g(\bu_j,u_j)}{g(u_j,\bv)}\sum_{k=1}^m  \frac{g(u_j,v_k)g(v_k,\bv_k)}{h(u_i,v_j)g(\bu,v_k)}.
 \end{align}
However, according to the previous part, $m^{(1)}_{ij} = \delta_{ij}$. And we denote the complex rational function 
\begin{equation}
 F_{ij}(z) = \frac{g(u_j,z)g(z,\bv)}{h(u_i,v_k)g(\bu,z)}.
\end{equation}
The degree of the denominator is $m+2$, the one of the numerator is $n$ and $m+2 > n+1$. Its poles are all the elements of $\bv$ and $u_i+c$, they are of order $1$ and their residue is
\begin{equation}
 \mathrm{Res}(F_{ij},v_k) = \frac{1}{c} \frac{g(u_j,v_k)g(v_k,\bv_k)}{h(u_i,v_k)g(\bu,v_k)},
\end{equation}
\begin{equation}
 \mathrm{Res}(F_{ij},u_i+c) = -\frac{1}{c} \frac{g(u_j,u_i+c)g(u_i+c,\bv)}{g(\bu,u_i+c)} = \frac{(-1)^n}{c} \frac{h(u_i,\bu)}{h(u_i,u_j)h(u_i,\bv)}.
\end{equation}
So, using the proposition \ref{calculsum},
\begin{equation}
 \sum_{k=1}^m  \frac{g(u_j,v_k)g(v_k,\bv_k)}{h(u_i,v_j)g(\bu,v_k)} = - (-1)^n \frac{h(u_i,\bu)}{h(u_i,u_j)h(u_i,\bv)} = (-1)^{n-1} \frac{h(u_i,\bu)}{h(u_i,u_j)h(u_i,\bv)}
\end{equation}
and
\begin{equation}
  m^{(2)}_{ij} = \delta_{ij} - \beta \frac{(-1)^{n-1}g(\bu_j,u_j)}{g(u_j,\bv)} \frac{h(u_i,\bu)}{h(u_i,u_j)h(u_i,\bv)} = \delta_{ij} - \beta \frac{h(u_i,\bu)g(u_j,\bu_j)}{h(u_i,u_j)g(u_j,\bv)h(u_i,\bv)}.
\end{equation}

\subparagraph{For $i \in [\![n+1,m]\!]$ and $j \in [\![1,n]\!]$,}
\begin{equation}
 m^{(2)}_{ij} = \sum_{k=1}^m \left( \frac{v_k}{c} \right)^{m-i} \frac{g(u_j,v_k)g(\bu_j,u_j)g(v_k,\bv_k)}{g(u_j,\bv)g(\bu,v_k)} = m^{(1)}_{ij} = 0.
\end{equation}

 \subparagraph{For $i,j \in [\![n+1,m]\!]$,}
\begin{equation}
 m^{(2)}_{ij} = \sum_{k=1}^m \left( \frac{v_k}{c} \right)^{m-i} \frac{g(v_k,\bv_k)}{g(v_k,\bu)}f_{j-(n+1)}(\bu\vert-\bv_k) = m^{(1)}_{ij} = \delta_{ij}.
\end{equation}

\subparagraph{For $i \in [\![n+1,m]\!]$ and $j \in [\![1,n]\!]$,} we do not need to compute this last block.

\subparagraph{} The obtained matrix is therefore an upper triangular one 
\begin{equation}
 \left( \begin{array}{c|c}
         \delta_{ij} - \beta \frac{g(u_j,\bu_j)h(u_i,\bu)}{h(u_i,u_j)g(u_j,\bv)h(u_i,\bv)} & ??? \\
         \hline
         (0) & \delta_{ij} 
        \end{array} \right).
\end{equation}
and
$Z_{nm}(\bu|\bv|B|\hat C)$ has the form $\frac{\mathrm{tr}(B)^n\mathrm{tr}(\hat C)^m}{(1-\beta)^n} h(\bu,\bv) \underset{n}{\mathrm{det}}\left( I_n - \beta M_1(\bu\vert\bv)\right)$ with 
\begin{equation}
     M_1(\bu\vert\bv)_{ij} = \frac{g(u_j,\bu_j)h(u_i,\bu)}{h(u_i,u_j)g(u_j,\bv)h(u_i,\bv)}.
\end{equation}
Now, we will transform the matrix inside the determinant to obtain the wanted formula. We remark that 
\begin{equation}
    \frac{g(u_j,\bu_j)h(u_i,\bu)}{h(u_i,u_j)g(u_j,\bv)h(u_i,\bv)} = \frac{g(u_j,\bu_j)}{g(u_j,\bv)} \frac{f(u_i,\bu_i)}{f(u_i,\bv)h(u_i,u_j)} \frac{g(u_i,\bv)}{g(u_i,\bu_i)}
\end{equation}
(with $h(u,u) = 1$) so, using the determinant property $\mathrm{det}(AB) = \mathrm{det}(A)\mathrm{det}(B)$ and the equality
\begin{equation}
    I_n - \beta M_1(\bu\vert\bv) = D_1(\bu\vert\bv) ( I_n - \beta M_2(\bu\vert\bv)) D_1(\bu\vert\bv)^{-1},
\end{equation}
with 
\begin{equation}
 M_2(\bu\vert\bv)_{ij} = \frac{f(u_i,\bu_i)}{f(u_i,\bv)h(u_i,u_j)} \;\;\; \text{and} \;\;\; D_1(\bu\vert\bv)_{ij} = \delta_{ij} \frac{g(u_j,\bu_j)}{g(u_j,\bv)},
\end{equation}
we obtain
\begin{equation}
    Z_{nm}(\bu|\bv|B|\hat C) = \frac{\mathrm{tr}(B)^n\mathrm{tr}(\hat C)^m}{(1-\beta)^n} h(\bu,\bv) \underset{n}{\mathrm{det}}\left( I_n - \beta M_2(\bu\vert\bv)\right).
\end{equation}
So, injecting the factor $f(\bu,\bv)$ in the determinant, we obtain the first equality \eqref{ZwithK}-\eqref{K2}.
 
 \paragraph{For $n \geqslant m$} We start also from the formula \eqref{newformulaZ} and we will obtain this time the formula  \eqref{ZwithK}-\eqref{K1}. 
We consider this time that $\Delta(\bu)\Delta'(\bv)/h(\bu,\bv)$ is the determinant of the partial Cauchy matrix $\tilde{\mathcal{C}}_{nm}(\bu\vert\bv)$. We inject this expression in the determinant of $Z_{nm}(\bu|\bv|B|\hat C)$, multiplying this time the matrix inside by $\tilde{\mathcal{C}}_{nm}(\bu\vert\bv)^{-1}$ \eqref{inverseCauchyparttrans} from the left (we sum over the elements of $\bu$). Doing the same reasoning as before, we obtained a lower triangular matrix 
\begin{equation}
 \left( \begin{array}{c|c}
         - \beta \delta_{ij} + \frac{g(\bu,v_j)h(\bu,v_i)g(v_i,\bv_i)h(v_j,\bv)}{h(v_j,v_i)} & (0) \\
         \hline 
         ??? & \delta_{ij}
        \end{array} \right)
\end{equation}
and 
\begin{equation}
 Z_{nm}(\bu|\bv|B|\hat C) = \frac{\mathrm{tr}(B)^n\mathrm{tr}(\hat C)^m}{(1-\beta)^m} \frac{1}{g(\bu,\bv)}
\underset{1 \leqslant i,j \leqslant m}{\mathrm{det}} \left( - \beta \delta_{ij} + \frac{g(\bu,v_j)h(\bu,v_i)g(v_i,\bv_i)h(v_j,\bv)}{h(v_j,v_i)} \right).
\end{equation}
Finally, remarking that 
\begin{equation}
 \frac{g(\bu,v_j)h(\bu,v_i)g(v_i,\bv_i)h(v_j,\bv)}{h(v_j,v_i)} = h(\bu,v_i)g(v_i,\bv_i) \frac{f(\bu,v_j)f(v_j,\bv_j)}{h(v_j,v_ii)} \frac{1}{h(\bu,v_j)g(v_j,\bv_j)}
\end{equation}
and doing the same type of transformation as before, we obtain the second equality \eqref{ZwithK}-\eqref{K1}.

\paragraph{} The figure \ref{fig:equality3formulae} summarizes the different ways of proofs between the three formulae of the modified Izergin determinant.

\begin{figure}[ht]
    \centering
    \includegraphics[width=\textwidth]{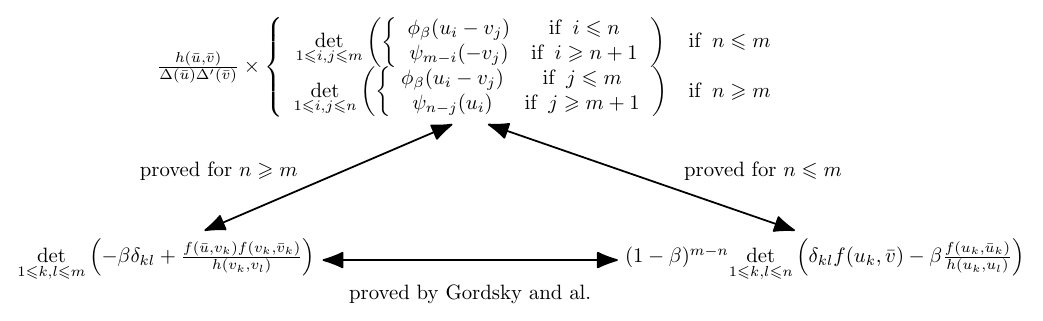}
    \caption{\centering Ways of proofs between the three formulae for $K_{nm}^{(\beta)}(\bu|\bv)$. For the down arrow, see the proposition 4.1 of \cite{Gorsky2014}}
    \label{fig:equality3formulae}
\end{figure}

\section{Correspondence with the pDWBC model}\label{pDWBC}

In \cite{Foda2012}, the authors defined the partial domain wall boundary conditions for a $n \times m$ rectangular lattice, with $n \leqslant m$. They used another representation for the lattice : they used arrows instead of spins on the edges of the lattice. 

First, we will explicit this representation to avoid any ambiguity. There exists a correspondence between spins and arrows on edges (as we see in the figure \ref{fig:correspondance}) : an up spin corresponds to an arrow in the same direction of the orientation of the line and a down spin corresponds to an arrow in the opposite direction of the orientation of the line.
So we can represent the six different vertices with arrows as in the figure \ref{fig:six_vertex_arrow}. We observe that all the vertices conserve the arrow flow : two arrows point toward the vertex and two point away from it. Except for our convention, we find the same vertices.

\begin{figure}[ht]
 \centering
 \begin{subfigure}[h]{0.25\textwidth}
  \centering
  \includegraphics[width = \textwidth]{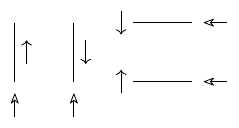}
  \caption{\centering Spins}
  \label{fig:correspondance_spin}
 \end{subfigure}
 \hspace{15mm}
 \begin{subfigure}[h]{0.25\textwidth}
  \centering
  \includegraphics[width = \textwidth]{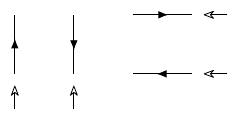}
  \caption{\centering Arrows}
  \label{fig:correspondance_arrows}
 \end{subfigure}
 \caption{\centering Correspondance between both representations.}
 \label{fig:correspondance}
\end{figure}

\begin{figure}[ht]
 \centering
 \includegraphics[width=.7\textwidth]{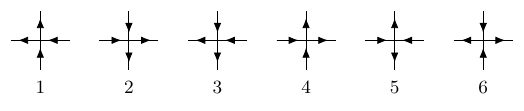}
 \caption{\centering The six different vertices with arrows}
 \label{fig:six_vertex_arrow}
\end{figure}

Using our notations and our convention, we recall the definition and the expression of the corresponding partition function given in \cite{Foda2012}.

\begin{Def}\label{defpDWBC}
 The partial domain wall boundary conditions are defined as follows.
 \begin{enumerate}
  \item All arrows on the western and eastern boundaries point inwards.
  \item $k \in [\![0,m-n]\!]$ arrows on the northern boundary and $m-n-k$ arrows on the southern boundary also point inwards.
  \item The remaining $m+n$ arrows on the northern and southern boundaries point outwards.
  \item With $k$ fixed, the locations of the inward-pointing arrows on the northern and southern boundaries are summed over.
 \end{enumerate}
\end{Def}

\begin{remark}
 To write more explicitly the point 4, for one $k \in [\![0,m-n]\!]$, we consider all the different possible configurations of the lattice, included the northern and southern boundaries, and we sum all the expressions to obtain $Z_{nm}(\bu\vert\bv\vert k)$.
\end{remark}
 
\begin{prop}\label{ZpDWBC}
 The partition function of the inhomogeneous six-vertex model with partial domain wall boundary conditions, considering that the northern boundary counts $k \in [\![0,m-n]\!]$ arrows pointing inwards, is 
 \begin{equation}
    Z_{nm}(\bar u \vert \bar v \vert k) = \binom{m-n}{k} \frac{h(\bu,\bv)}{g(\bu,\bv)\Delta(\bu)\Delta'(\bv)} 
    \underset{1 \leqslant i,j \leqslant m}{\mathrm{det}} \left( \left\{ \begin{array}{cc}
               \phi_1(u_i-v_j) & \text{if} \;\; i \leqslant n \\
                \psi_{m-i}(-v_j) & \text{if} \;\; i \geqslant n+1
           \end{array} \right. \right)
 \end{equation}
\end{prop}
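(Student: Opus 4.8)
The plan is to realise the partial domain wall partition function as a specialisation of the general boundary partition function of Definition \ref{defZ2}, and then to read off $Z_{nm}(\bar u|\bar v|k)$ by a binomial comparison. Concretely, I would pin the western and eastern boundaries to single spin states while leaving the northern and southern vectors free: take $\ket{e} = \ket{1}$, $\bra{w} = \bra{2}$, $\bra{n} = n_1\bra{1} + n_2\bra{2}$ and $\ket{s} = s_1\ket{1}+s_2\ket{2}$. Under the arrow/spin dictionary of Figure \ref{fig:correspondance}, this pins all western and eastern arrows to point inwards, whereas $n_1,n_2,s_1,s_2$ weight the two orientations of each northern and southern edge. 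The first thing to verify is that this choice yields $\mathrm{tr}(B) = \braket{w|e} = 0$, hence $\beta = 1$ by \eqref{defbeta}, together with $\mathrm{tr}(\hat C) = \braket{n|s} = n_1 s_1 + n_2 s_2$ and $\mathrm{tr}(B\hat C) = \braket{n|e}\braket{w|s} = n_1 s_2$.

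Next I would expand $\bra{N} = \bigotimes_j \bra{n}$ and $\ket{S} = \bigotimes_j \ket{s}$ in the canonical basis inside the expression of Definition \ref{defZ2}. Each boundary configuration contributes the product of the corresponding components times the fixed-boundary vertex sum, and that sum vanishes unless the ice rule is satisfied. With the sides pinned, only configurations with $m-n$ inward vertical arrows in total survive, split into $k$ on the north and $m-n-k$ on the south with $k\in[\![0,m-n]\!]$ --- exactly the configurations of Definition \ref{defpDWBC}. Since an inward (resp. outward) northern edge always carries the weight $n_2$ (resp. $n_1$) independently of its position, and likewise on the south, every configuration with northern count $k$ shares one common weight. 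Summing over the positions of the inward arrows, which is precisely the operation defining $Z_{nm}(\bar u|\bar v|k)$, gives
\begin{equation}
 Z_{nm}(\bar u|\bar v|B|\hat C) = \sum_{k=0}^{m-n} n_1^{m-k}n_2^{k}\,s_1^{m-n-k}s_2^{n+k}\, Z_{nm}(\bar u|\bar v|k).
\end{equation}

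On the other hand I would evaluate the left-hand side with the new formula \eqref{newformulaZ} (recall $n\leqslant m$, so $\min(n,m)=n$). Substituting $1-\beta = \mathrm{tr}(B)\mathrm{tr}(\hat C)/\mathrm{tr}(B\hat C)$ from \eqref{defbeta}, the apparently singular prefactor simplifies algebraically,
\begin{equation}
 \frac{\mathrm{tr}(B)^n\mathrm{tr}(\hat C)^m}{(1-\beta)^{n}} = \mathrm{tr}(\hat C)^{m-n}\mathrm{tr}(B\hat C)^{n},
\end{equation}
so no genuine limit is needed and the value stays finite at $\mathrm{tr}(B)=0$. Because $\beta=1$ turns $\phi_\beta$ into $\phi_1$, the surviving determinant is exactly the one in \eqref{newformulaZ} at $\beta=1$; denoting it by $D$ and inserting the traces computed above,
\begin{equation}
 Z_{nm}(\bar u|\bar v|B|\hat C) = (n_1 s_2)^n\,(n_1 s_1 + n_2 s_2)^{m-n}\,\frac{h(\bu,\bv)}{g(\bu,\bv)\Delta(\bu)\Delta'(\bv)}\, D.
\end{equation}

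Finally I would expand $(n_1 s_1 + n_2 s_2)^{m-n}$ by the binomial theorem: its $k$-th term, multiplied by $(n_1 s_2)^n$, produces exactly $\binom{m-n}{k}\,n_1^{m-k}n_2^{k}s_1^{m-n-k}s_2^{n+k}$, the same monomial as in the decomposition above. Both expressions for $Z_{nm}(\bar u|\bar v|B|\hat C)$ are polynomials in $n_1,n_2,s_1,s_2$ agreeing on the generic locus where the new formula applies, hence as polynomials; since the monomials indexed by $k$ are linearly independent and $Z_{nm}(\bar u|\bar v|k)$ does not depend on the boundary components, matching coefficients gives $Z_{nm}(\bar u|\bar v|k) = \binom{m-n}{k}\,\frac{h(\bu,\bv)}{g(\bu,\bv)\Delta(\bu)\Delta'(\bv)}\,D$, which is the claim. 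The main obstacle is the decomposition display: one must fix the arrow conventions so that the three traces come out as stated and, above all, justify that once the northern inward count $k$ is chosen the ice rule forces the southern count to be $m-n-k$, so that all surviving configurations carry a single position-independent weight. This is what converts the boundary-vector expansion into the Foda--Wheeler position sum and makes the binomial factor appear.
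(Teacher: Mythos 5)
Your argument is correct in its essentials, but note that it proves something the paper itself does not: the paper gives no proof of Proposition \ref{ZpDWBC} at all --- it imports the result from \cite{Foda2012} (see the remark immediately following the statement, which points to equation (24) and the end of section 1 of that reference) and then uses it only to check that the new formula \eqref{newformulaZ} is \emph{consistent} with the pDWBC picture. You run the same computation in the opposite direction: you first establish the decomposition $Z_{nm}(\bu|\bv|B|\hat C)=\sum_{k=0}^{m-n} n_1^{m-k}n_2^{k}s_1^{m-n-k}s_2^{n+k}\,Z_{nm}(\bu|\bv|k)$ directly from Definition \ref{defZ2}, by expanding $\bra{N}$ and $\ket{S}$ in the canonical basis and invoking arrow conservation (each $R$-matrix entry vanishes unless the vertex is two-in--two-out, so with all $2n$ side arrows pinned inwards exactly $m-n$ vertical arrows must point inwards, forcing the southern count to be $m-n-k$ once the northern count $k$ is fixed and making the boundary weight depend only on $k$); you then extract $Z_{nm}(\bu|\bv|k)$ by comparing with the binomial expansion of the prefactor of \eqref{newformulaZ}. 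Since \eqref{newformulaZ} is proved independently in Appendix \ref{appendixA}, there is no circularity, and your route turns the paper's consistency check into a self-contained proof where the paper has only a citation. The one step to tighten is the passage to the locus $\mathrm{tr}(B)=0$: with $\ket{e}=\ket{1}$ and $\bra{w}=\bra{2}$ fixed you have $\beta=1$ identically, so there is no generic sublocus of $(n_1,n_2,s_1,s_2)$ on which the raw form of \eqref{newformulaZ}, with its factor $(1-\beta)^{-n}$, literally applies; you should rewrite the prefactor as $\mathrm{tr}(\hat C)^{m-n}\mathrm{tr}(B\hat C)^{n}$ for generic $\ket{e},\bra{w}$ and then specialize by continuity, exactly as the paper does implicitly. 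With that adjustment the matching of coefficients against the linearly independent monomials is sound and yields the stated formula.
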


\begin{remark}
 The formula for $k = m-n$, also equal to $Z_{nm}(\bu\vert\bv\vert 0)$, is obtained at the equation (3.14) of \cite{Foda2012} and the binomial coefficient before is given at the end of the subsection 2.11.
\end{remark}

Now, we will show that our formula \eqref{newformulaZ} generalizes the previous result of \cite{Foda2012}.
We consider the inhomogeneous six-vertex model on the $n \times m$ rectangular lattice with the general boundary conditions defined in the figure \ref{fig:bords_generaux}. We fix the western and eastern boundaries like in the definition \ref{defpDWBC}, all this arrows point inward, but we do not fix the northern and southern boundaries. So, with our convention and the equivalence between spins and arrows,

\begin{equation}
    \bra{n} = \begin{pmatrix} n_1 & n_2 \end{pmatrix}, \;\;
    \ket{s} = \begin{pmatrix} s_1 \\ s_2 \end{pmatrix}, \;\;
    \ket{e} = \ket{1} = \begin{pmatrix} 1 \\ 0 \end{pmatrix} \;\; \text{and} \;\; \bra{w} = \bra{2} = \begin{pmatrix} 0 & 1 \end{pmatrix}.
\end{equation}
So 
\begin{equation}
 \mathrm{tr}(B) = \braket{w\vert e} = 0, \;\; \mathrm{tr}(\hat C) = \braket{n \vert s} = n_1s_1+n_2s_2, \;\; \mathrm{tr}(B \hat C) = \braket{n\vert e}\braket{w\vert s} = n_1s_2
\end{equation}
and $\beta = 1$. Therefore, we can write the new formula for the partition function as 
\begin{equation}
 Z_{nm}(\bu\vert\bv\vert B \vert \hat C) = \mathrm{tr}(B \hat C)^n\mathrm{tr}(\hat C)^{m-n} Z_{nm}(\bu\vert\bv\vert 0).
\end{equation}
We can develop the prefactor :
\begin{align}
 \mathrm{tr}(B \hat C)^n\mathrm{tr}(\hat C)^{m-n} &= (n_1s_2)^n(n_1s_1+n_2s_2)^{m-n} \\
 &= (n_1s_2)^n \sum_{k=0}^{m-n} \binom{m-n}{k} (n_1s_1)^{m-n-k} (n_2s_2)^k \\
 &= \sum_{k=0}^{m-n} \binom{m-n}{k} n_1^{m-k} n_2^k s_1^{m-n-k} s_2^{n+k}.
\end{align}
So
\begin{align}
 Z_{nm}(\bu\vert\bv\vert B \vert \hat C) &= \sum_{k=0}^{m-n} \binom{m-n}{k} n_1^{m-k} n_2^k s_1^{m-n-k} s_2^{n+k} Z_{nm}(\bu\vert\bv\vert 0) \\
 &= \sum_{k=0}^{m-n} n_1^{m-k} n_2^k s_1^{m-n-k} s_2^{n+k} Z_{nm}(\bu\vert\bv\vert k).
\end{align}
The prefactor $n_1^{m-k} n_2^k s_1^{m-n-k} s_2^{n+k}$, for $k \in [\![0,m-n]\!]$, corresponds to the different possible forms of the northern and southern boundaries, consistent with the points 2 and 3 of the definition \ref{defpDWBC} of the partial domain wall boundary conditions. Indeed, for $k \in [\![0,m-n]\!]$, 
\begin{itemize}
 \item $n_2^k$ indicates that there are $k$ arrows pointing inwards the lattice (or, equivalently, $k$ downs spins) on the northern boundary,
 \item $n_1^{m-k}$ indicates that there are $m-k$ arrows pointing outwards the lattice (or, equivalently, $m-k$ up spins) on the northern boundary,
 \item $s_1^{m-n-k}$ indicates that there are $m-n-k$ arrows pointing inwards the lattice (or, equivalently $m-n-k$ up spins) on the southern boundary
 \item and $s_2^{n+k}$ indicates that there are $n+k$ arrows pointing outwards the lattice (or, equivalently $n+k$ down spins) on the southern boundary.
\end{itemize}
Therefore, the modified rational six-vertex model enables us to obtain all the different possible configurations of the rational six-vertex model with partial domain wall boundary conditions.

\section{Computations for the infinite lattice}\label{appendixC}

\subsection{A matrix with binomial coefficients}

\begin{Def}
 For $d \in \NN$ and $n \in \NN^*$, we define the matrix
 \begin{equation}
  M_n(d) := \left( \binom{d+i-1+j-1}{i-1} \right)_{1 \leqslant i,j \leqslant n}.
 \end{equation}
\end{Def}

\begin{prop}\label{propdet}
    For all $d \in \NN$ and $n \in \NN^*$,
    \begin{equation}
        D_n(d) := \mathrm{det}(M_n(d)) = 1.
    \end{equation}
\end{prop}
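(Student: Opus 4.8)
The plan is to prove the statement by induction on $n$, the key mechanism being elementary column operations together with Pascal's rule. Writing the $(i,j)$ entry of $M_n(d)$ as $\binom{d+i+j-2}{i-1}$, I expect to derive the recurrence
\begin{equation}
    D_n(d) = D_{n-1}(d+1),
\end{equation}
after which iterating down to the trivial $1\times 1$ case finishes everything. The base case is immediate: for $n=1$ the single entry is $\binom{d}{0}=1$, so $D_1(d)=1$ for every $d\in\NN$.

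For the inductive step I would perform, on $M_n(d)$, the column operations $C_j \mapsto C_j - C_{j-1}$ for $j=n,n-1,\dots,2$ (taken in decreasing order so that each uses the original columns); these leave the determinant unchanged. Pascal's rule in the form $\binom{a}{b}-\binom{a-1}{b}=\binom{a-1}{b-1}$, applied with $a=d+i+j-2$ and $b=i-1$, turns the new $(i,j)$ entry (for $j\geqslant 2$) into $\binom{d+i+j-3}{i-2}$, while the first column is untouched. The crucial consequence is the first row: its entry in column $1$ is $\binom{d}{0}=1$, and for $j\geqslant 2$ it becomes $\binom{d+j-2}{-1}=0$, so the first row is $(1,0,\dots,0)$. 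A Laplace expansion along this row gives $D_n(d)=\det M'$, where $M'=\bigl(\binom{d+i+j-3}{i-2}\bigr)_{2\leqslant i,j\leqslant n}$ is the minor obtained by deleting the first row and column. Reindexing by $p=i-1$, $q=j-1$ with $1\leqslant p,q\leqslant n-1$ shows the entry equals $\binom{(d+1)+p+q-2}{p-1}$, which is exactly the $(p,q)$ entry of $M_{n-1}(d+1)$; hence $M'=M_{n-1}(d+1)$ and the recurrence follows. Chaining it yields $D_n(d)=D_{n-1}(d+1)=\cdots=D_1(d+n-1)=1$.

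The only delicate point is the bookkeeping of indices in the Pascal step and in the reindexing that identifies $M'$ with $M_{n-1}(d+1)$; there is no genuine obstacle, as the binomial $\binom{d+j-2}{-1}$ vanishing (negative lower index) is what cleanly produces the triangular first row. As an alternative that avoids induction, one could instead invoke the Vandermonde--Chu convolution to write $M_n(d)=A\,B^{\mathsf T}$ with $A_{ik}=\binom{i-1}{k}$ and $B_{jk}=\binom{d+j-1}{k}$ for $k=0,\dots,n-1$; here $A$ is lower triangular with unit diagonal so $\det A=1$, and $\det B$ reduces to a Vandermonde evaluation since its $k$-th column is a degree-$k$ polynomial in $j$ with leading coefficient $1/k!$, giving $\det B=\prod_{k=0}^{n-1}\tfrac{1}{k!}\cdot\prod_{1\leqslant a<b\leqslant n}(b-a)=1$. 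Either route gives $D_n(d)=1$, but I would present the column-operation induction as the primary argument for its brevity.
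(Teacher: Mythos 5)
Your primary argument is correct and is essentially the paper's own proof: the same induction on $n$ via the column operations $C_j\mapsto C_j-C_{j-1}$, Pascal's rule, and Laplace expansion along the resulting first row $(1,0,\dots,0)$, yielding the recurrence $D_n(d)=D_{n-1}(d+1)$. The alternative Vandermonde--Chu factorization $M_n(d)=AB^{\mathsf T}$ you sketch is a valid bonus, but it does not change the fact that your main route coincides with the paper's.
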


\begin{proof}
 We prove the following statement by induction over $n \in \NN^*$
 \begin{equation}
  \forall d \in \NN, \;\; D_n(d) = 1.
 \end{equation}

 For $n = 1$, $D_1(d) = \binom{d}{0} = 1$ so the base case is proved. We suppose now that the statement is true for one $n \in \NN^*$ and we consider $D_{n+1}(d)$. So we compute this quantity.
 \begin{equation}
  D_{n+1}(d) = \left \vert 
        \begin{matrix}
         \binom{d}{0} & \binom{d+1}{0} & \binom{d+2}{0} & \cdots & \binom{d+n}{0} \\
         \binom{d+1}{1} & \binom{d+2}{1} & \binom{d+3}{1} & \cdots & \binom{d+n+1}{1} \\
         \binom{d+2}{2} & \binom{d+3}{2} & \binom{d+4}{2} & \cdots & \binom{d+n+2}{2} \\
         \vdots & \vdots & \vdots & \ddots & \vdots \\
         \binom{d+n}{n} & \binom{d+n+1}{n} & \binom{d+n+2}{n} & \dots & \binom{d+2n}{n} \\
        \end{matrix}
\right \vert .
 \end{equation}
 We replace each column $C_j$ for $j \in [\![2,n+1]\!]$ by $C_j - C_{j-1}$ and the first row by its value
 \begin{equation}\label{propdetcalcul1}
  D_{n+1}(d) = \left \vert 
        \begin{matrix}
         1 & 0 & 0 & \cdots & 0 \\
         \binom{d+1}{1} & \binom{d+2}{1}-\binom{d+1}{1} & \binom{d+3}{1}-\binom{d+2}{1} & \cdots & \binom{d+n+1}{1}-\binom{d+n}{1} \\
         \binom{d+2}{2} & \binom{d+3}{2}-\binom{d+2}{2} & \binom{d+4}{2}-\binom{d+3}{2} & \cdots & \binom{d+n+2}{2}-\binom{d+n+1}{2} \\
         \vdots & \vdots & \vdots & \ddots & \vdots \\
         \binom{d+n}{n} & \binom{d+n+1}{n}-\binom{d+n}{n} & \binom{d+n+2}{n}-\binom{d+n+1}{n} & \dots & \binom{d+2n}{n}-\binom{d+2n-1}{n} \\
        \end{matrix}
\right \vert .
 \end{equation}
Using the Pascal's formula $\binom{p+1}{q+1} - \binom{p}{q+1} = \binom{p}{q}$,
\begin{equation}
  D_{n+1}(d) = \left \vert 
        \begin{matrix}
         1 & 0 & 0 & \cdots & 0 \\
         \binom{d+1}{1} & \binom{d+1}{0} &\binom{d+2}{0} & \cdots & \binom{d+n}{0} \\
         \binom{d+2}{2} & \binom{d+2}{1} & \binom{d+3}{1} & \cdots & \binom{d+n+1}{1} \\
         \vdots & \vdots & \vdots & \ddots & \vdots \\
         \binom{d+n}{n} & \binom{d+n}{n-1} & \binom{d+n+1}{n-1} & \dots & \binom{d+2n-1}{n-1} \\
        \end{matrix}
\right \vert .
 \end{equation}
Using the Laplace expansion along the first row and rewriting the binomial coefficient,
\begin{equation}
  D_{n+1}(d) = \left \vert 
        \begin{matrix}
         \binom{d+1}{0} &\binom{d+1+1}{0} & \cdots & \binom{d+1+n-1}{0} \\
         \binom{d+1+1}{1} & \binom{d+1+2}{1} & \cdots & \binom{d+1+n}{1} \\
         \vdots & \vdots & \ddots & \vdots \\
         \binom{d+1+n-1}{n-1} & \binom{d+1+n}{n-1} & \dots & \binom{d+1+2(n-1)}{n-1} \\
        \end{matrix}
\right \vert
 \end{equation}
so $D_{n+1}(d) = D_n(d+1) = 1$ using the statement for $n$. Therefore, the statement is true for $n+1$. So the statement is true for any $n \in \NN^*$.
\end{proof}

\begin{Def}
 For $d \in \NN$, $n \in \NN^*$ and $k,l \in [\![1,n]\!]$, we consider $D_{n-1}^{(k,l)}(d)$ the $(k,l)$ minor of $M_n(d)$. So its expression is
 \begin{equation}
  D_{n-1}^{(k,l)}(d) := \underset{1 \leqslant i,j \leqslant n-1}{\mathrm{det}} \left( \left\{ \begin{matrix}
                    \binom{d+i-1+j-1}{i-1} & \text{if} \; i < k \; \text{and} \; j < l \\
                    \binom{d+i-1+j}{i-1} & \text{if} \; i < k \; \text{and} \; j \geqslant l \\
                    \binom{d+i+j-1}{i} & \text{if} \; i \geqslant k \; \text{and} \; j < l \\
                    \binom{d+i+j}{i} & \text{if} \; i \geqslant k \; \text{and} \; j \geqslant l \\
                \end{matrix} \right. \right).
 \end{equation}
\end{Def}

\begin{remarks}
 For $n = 1$, the only defined determinant is $D_0^{(1,1)}$ which is a determinant of size $0$. By convention, we take it equal to $1$.
 
 In what follows, we could make appear some $D_{n-1}^{(k,l)}(d)$ with $k$ or $l$ that do not belong to the interval $[\![1,n]\!]$. In this case, we will consider that $D_{n-1}^{(k,l)}(d) = 0$ (because the corresponding minors does not exist) and we obtain the following extended definition.
\end{remarks}

The goal of what follows is to compute this quantity.
 
\begin{Def}\label{defDnkletendu}
 For $d \in \NN$, $n \in \NN^*$, $k \in \NN^*$ and $l \in \mathbb{Z}$, we define $D_{n-1}^{(k,l)}(d)$ by
 \begin{equation}
  D_{n-1}^{(k,l)}(d) := \left \{ \begin{array}{cc}
                              \underset{1 \leqslant i,j \leqslant n-1}{\mathrm{det}} \left( \left\{ \begin{matrix}
                    \binom{d+i-1+j-1}{i-1} & \text{if} \; i < k \; \text{and} \; j < l \\
                    \binom{d+i-1+j}{i-1} & \text{if} \; i < k \; \text{and} \; j \geqslant l \\
                    \binom{d+i+j-1}{i} & \text{if} \; i \geqslant k \; \text{and} \; j < l \\
                    \binom{d+i+j}{i} & \text{if} \; i \geqslant k \; \text{and} \; j \geqslant l \\
                \end{matrix} \right. \right) & \text{if } k,l \in [\![1,n]\!] \\
                & \\
                              0 & \text{else}
                             \end{array} \right. .
 \end{equation}
\end{Def}

\begin{prop}
 For $d \in \NN$, $n > 1$, $k > 1$ and $l \in \mathbb{Z}$, $D_{n-1}^{(k,l)}(d)$ satisfies the following inductive relation 
 \begin{equation}\label{relrecDnkl}
  D_{n-1}^{(k,l)}(d) = D_{n-2}^{(k-1,l-1)}(d+1) + D_{n-2}^{(k-1,l)}(d+1).
 \end{equation}
\end{prop}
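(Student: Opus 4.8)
The plan is to reproduce, for the minor $D_{n-1}^{(k,l)}(d)$, the same sequence of column operations used in the proof of Proposition \ref{propdet}, and then to exploit the multilinearity of the determinant at a single distinguished column. The essential observation is that, because $k>1$, the first row of the matrix in Definition \ref{defDnkletendu} is indexed by $i=1<k$, so every one of its entries is a binomial of the form $\binom{\cdot}{0}=1$; the first row is $(1,1,\dots,1)$ whatever the position of $l$. This is exactly the feature that makes the hypothesis $k>1$ necessary.

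First I would subtract from each column the preceding one, $C_j \leftarrow C_j - C_{j-1}$ for $j\geqslant 2$, exactly as in \eqref{propdetcalcul1}; this does not change the determinant and turns the first row into $(1,0,\dots,0)$. For the remaining rows $i\geqslant 2$ I would apply Pascal's formula $\binom{p+1}{q+1}-\binom{p}{q+1}=\binom{p}{q}$ to each difference. Away from the column $j=l$ (that is, for $j<l$ or $j>l$) the two subtracted entries have top indices differing by one, so Pascal collapses each difference to a single binomial, and these reproduce the entries of a minor of $M_{n-1}(d+1)$. At the junction column $j=l$, however, a ``$<l$'' entry is subtracted from a ``$\geqslant l$'' entry and the top indices differ by two; applying Pascal twice writes each entry of column $l$ as a \emph{sum} of two binomials, the first-row entry $1-1=0$ splitting consistently as $0+0$.

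By linearity of the determinant in the column $l$, this splits $D_{n-1}^{(k,l)}(d)$ into a sum of two determinants that differ only in that column, each still having first row $(1,0,\dots,0)$. Expanding both along the first row (the cofactor sign is $(-1)^{1+1}=+1$) and reindexing $i\mapsto i-1$, $j\mapsto j-1$ sends the row threshold $k$ to $k-1$ and raises the parameter $d$ to $d+1$. The key step is to check that, after this reindexing, the first summand reproduces exactly the matrix of $D_{n-2}^{(k-1,l-1)}(d+1)$ (its junction column falling in the ``$\geqslant l-1$'' regime) and the second reproduces that of $D_{n-2}^{(k-1,l)}(d+1)$ (junction column in the ``$<l$'' regime), all other columns agreeing in both definitions. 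This bookkeeping is what I expect to be the main obstacle, since one must track the four cases for the entries and verify that the two split binomials land in the correct blocks once $d$, $k$ and the column threshold are shifted.

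Finally I would dispose of the boundary values of $l$ using the zero convention of Definition \ref{defDnkletendu}: for $l=1$ the junction column is precisely the one removed by the Laplace expansion and $D_{n-2}^{(k-1,0)}(d+1)=0$; for $l=n$ no ``$\geqslant l$'' column is present among $j\in[\![1,n-1]\!]$ and $D_{n-2}^{(k-1,n)}(d+1)=0$; and for $l\leqslant 0$ or $l\geqslant n+1$ both sides vanish. In each of these cases the single surviving determinant matches, so \eqref{relrecDnkl} holds for all $l\in\ZZ$.
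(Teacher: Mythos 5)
Your proposal is correct and follows essentially the same route as the paper's own proof: the column operations $C_j\leftarrow C_j-C_{j-1}$, a single application of Pascal's formula away from the junction column and a double application at $j=l$ (producing the sum of two binomials that splits the determinant by multilinearity), Laplace expansion along the first row with the shift $d\mapsto d+1$, $k\mapsto k-1$, and the zero convention of Definition \ref{defDnkletendu} to absorb the boundary cases $l=1$, $l=n$ and $l\notin[\![1,n]\!]$. The only cosmetic difference is that the paper handles $n=2$ as a separate base case before running the column manipulations, which your general description would need to degenerate to.
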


\begin{proof}
 If $k > n$, $l > n$ or $l < 1$, each term of the previous relation is equal to $0$ because, in the first case, we have also $k-1 > n-1$, in the second case, we have also $l-1 > n-1$ and $l > n-1$ and in the last case, we have also $l-1 <1$.
 The inductive relation is therefore satisfied.

Now, we suppose that $1 < k \leqslant n$. First, we treat the case $n = 2$. We have, for all $d \in \NN$, $D_1^{(2,1)}(d) = D_1^{(2,2)}(d) = 1$ and, by convention, $D_0^{(1,1)}(d) = 1$. And, according to the extended definition \ref{defDnkletendu}, $D_0^{(1,2)}(d) = D_0^{(1,0)}(d) = 0$. So the inductive relation is automatically satisfied.

Now, we consider the case $n > 2$. To make the computations, we suppose also that $1 < l < n$. We will consider the cases $l=1$ and $l=n$ at the end of the proof.
To have this relation, we follow the same reasoning as in the proof of the proposition \ref{propdet}. We replace each column $C_j$ for $j \in [\![2,n-1]\!]$ by $C_j - C_{j-1}$ and we obtain a determinant with the same form as \eqref{propdetcalcul1} : each column $C_j$, for $j \in [\![2,n-1]\!]$ and $\neq l$, has now the form
\begin{equation}
 \left( \begin{matrix}
  0 \\
  \binom{d+1+j-1}{1} - \binom{d+1+j-1-1}{1} \\
  \vdots \\
  \binom{d+k-2+j-1}{k-2} - \binom{d+k-2+j-1-1}{k-2}\\
  \binom{d+k+j-1}{k} - \binom{d+k+j-1-1}{k}\\
  \vdots \\
  \binom{d+n-1+j-1}{n-1} - \binom{d+n-1+j-1-1}{n-1}
 \end{matrix} \right) \;\;\; \text{or} \;\;\; 
  \left( \begin{matrix}
  0 \\
  \binom{d+1+j}{1} - \binom{d+1+j-1}{1} \\
  \vdots \\
  \binom{d+k-2+j}{k-2} - \binom{d+k-2+j-1}{k-2}\\
  \binom{d+k+j}{k} - \binom{d+k+j-1}{k}\\
  \vdots \\
  \binom{d+n-1+j}{n-1} - \binom{d+n-1+j-1}{n-1}
 \end{matrix} \right),
\end{equation}
depending on wether $j < l$ or $j > l$, and $C_l$ has the form
\begin{equation}
\left( \begin{matrix}
  0 \\
  \binom{d+1+l}{1} - \binom{d+1+l-1-1}{1} \\
  \vdots \\
  \binom{d+k-2+l}{k-2} - \binom{d+k-2+l-1-1}{k-2}\\
  \binom{d+k+l}{k} - \binom{d+k+l-1-1}{k}\\
  \vdots \\
  \binom{d+n-1+l}{n-1} - \binom{d+n-1+l-1-1}{n-1}
 \end{matrix} \right).
\end{equation}
As in the proof of the proposition \ref{propdet}, we use the Pascal's formula $\binom{p+1}{q+1} - \binom{p}{q+1} = \binom{p}{q}$ to simplify the expression of the column $C_j$, $j \neq l$ and $1$, and we obtain 
\begin{equation}
 \left( \begin{matrix}
  0 \\
  \binom{d+0+j-1}{0} \\
  \vdots \\
  \binom{d+k-3+j-1}{k-3}\\
  \binom{d+k-1+j-1}{k-1}\\
  \vdots \\
  \binom{d+n-2+j-1}{n-2}
 \end{matrix} \right) \;\;\; \text{or} \;\;\; 
  \left( \begin{matrix}
  0 \\
  \binom{d+0+j}{0} \\
  \vdots \\
  \binom{d+k-3+j}{k-3}\\
  \binom{d+k-1+j}{k-1}\\
  \vdots \\
  \binom{d+n-2+j}{n-2}
 \end{matrix} \right).
\end{equation}
However, for $j=l$, we have to use the formula $\binom{p+2}{q+1} - \binom{p}{q+1} = \binom{p+1}{q} + \binom{p}{q}$, also derived from the Pascal's one. We obtain therefore 
\begin{equation}
\left( \begin{matrix}
  0 \\
  \binom{d+0+l}{0} + \binom{d+0+l-1}{0} \\
  \vdots \\
  \binom{d+k-3+l}{k-3} + \binom{d+k-3+l-1}{k-3}\\
  \binom{d+k-1+l}{k-1} + \binom{d+k-1+l-1}{k-1}\\
  \vdots \\
  \binom{d+n-2+l}{n-2} + \binom{d+n-2+l-1}{n-2}
 \end{matrix} \right).
\end{equation}
Because the first row contains one $1$ at the first position and $0$ at the others positions, we expand $D_{n-1}^{(k,l)}(d)$ along the first row and we obtain a determinant of size $n-2$ whose columns $C_j$, $j \in [\![1,n-2]\!]$, are, respectively for $j < l-1, j = l-1$ and $j > l-1$,
\begin{equation}
 \left( \begin{matrix}
  \binom{d+1+0+j-1}{0} \\
  \vdots \\
  \binom{d+1+k-2+j-2}{k-3}\\
  \binom{d+1+k-1+j-1}{k-1}\\
  \vdots \\
  \binom{d+1+n-2+j-1}{n-2}
 \end{matrix} \right), \;
 \left( \begin{matrix}
  \binom{d+1+0+l-1}{0} + \binom{d+1+0+l-1-1}{0} \\
  \vdots \\
  \binom{d+1+k-3+l-1}{k-3} + \binom{d+1+k-3+l-1-1}{k-3}\\
  \binom{d+1+k-1+l-1}{k-1} + \binom{d+1+k-1+l-1-1}{k-1}\\
  \vdots \\
  \binom{d+1+n-2+l-1}{n-2} + \binom{d+1+n-2+l-1-1}{n-2}
 \end{matrix} \right), \; 
 \left( \begin{matrix}
  \binom{d+1+0+j}{0} \\
  \vdots \\
  \binom{d+1+k-3+j}{k-3}\\
  \binom{d+1+k-1+j}{k-1}\\
  \vdots \\
  \binom{d+1+n-2+j}{n-2}
 \end{matrix} \right).
\end{equation}
With the rewriting of the columns, we remark that $d$ has been increased by $1$ and the cut between rows takes place now between rows $k-2$ and $k-1$. Moreover, we can separate this new determinant in two ones, because of the sum in $C_{l-1}$, but the cut between columns takes place at two different positions : in the left term, it occurs between $C_{l-2}$ and $C_{l-1}$ because it has the same form that $C_j$ for $j > l-1$ and in the right term, it occurs between $C_{l-1}$ and $C_{l}$ because it has the same form that $C_j$ for $j < l-1$. We obtain finally the wanted inductive relation 
\begin{equation}
 D_{n-1}^{(k,l)}(d) = D_{n-2}^{(k-1,l-1)}(d+1) + D_{n-2}^{(k-1,l)}(d+1).
\end{equation}

Finally, we consider now the cases $l = 1$ and $l = n$. Indeed, for both cases, there is no cut between columns and we only have to use the Pascal's formula $\binom{p+1}{q+1} - \binom{p}{q+1} = \binom{p}{q}$ to simplify the expression of $D_{n-1}^{(k,l)}(d)$. So we obtain 
\begin{equation}
 D_{n-1}^{(k,1)}(d) = D_{n-2}^{(k-1,1)}(d+1) \;\; \text{and} \;\; D_{n-1}^{(k,n)}(d) = D_{n-2}^{(k-1,n-1)}(d+1).
\end{equation}
And, according to the extended definition \ref{defDnkletendu}, $D_{n-2}^{(k-1,0)}(d+1) = D_{n-2}^{(k-1,n)}(d+1) = 0$. So the inductive relation is satisfied again, which concludes the proof.
\end{proof}

\begin{prop}
 For $d \in \NN$, $n \in \NN^*$, $k \in [\![1,n]\!]$ and $l \in \mathbb{Z}$,
 \begin{equation}\label{DnklenfonctionDn1l}
  D_{n-1}^{(k,l)}(d) = \sum_{p=0}^{k-1} \binom{k-1}{p} D_{n-k}^{(1,l-p)}(d+k-1).
 \end{equation}
\end{prop}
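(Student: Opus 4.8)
The plan is to prove \eqref{DnklenfonctionDn1l} by induction on $k$, treating $n\in\NN^*$, $d\in\NN$ and $l\in\ZZ$ as free parameters throughout, so that the induction hypothesis may be applied at a smaller size and at shifted arguments. For the base case $k=1$ the right-hand side collapses to the single term $\binom{0}{0}D_{n-1}^{(1,l)}(d)=D_{n-1}^{(1,l)}(d)$, so the identity holds trivially. For the inductive step I would assume the formula for some $k\geqslant 1$ (for all admissible $n,d,l$) and establish it for $k+1$; since the statement is only asserted for $k+1\leqslant n$, this forces $n>1$, and because $k+1>1$ the inductive relation \eqref{relrecDnkl} from the previous proposition applies, giving
\begin{equation}
 D_{n-1}^{(k+1,l)}(d) = D_{n-2}^{(k,l-1)}(d+1) + D_{n-2}^{(k,l)}(d+1).
\end{equation}

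The next step is to expand each summand. Both are minors of size $n-2=(n-1)-1$, so the induction hypothesis applies with the size parameter $n$ replaced by $n-1$ and the argument $d$ by $d+1$ (note $(d+1)+k-1=d+k$), yielding
\begin{align}
 D_{n-2}^{(k,l-1)}(d+1) &= \sum_{p=0}^{k-1} \binom{k-1}{p} D_{n-1-k}^{(1,l-1-p)}(d+k), \\
 D_{n-2}^{(k,l)}(d+1) &= \sum_{p=0}^{k-1} \binom{k-1}{p} D_{n-1-k}^{(1,l-p)}(d+k).
\end{align}
I would then reindex the first sum via $p\mapsto p-1$ so that both run over the same family $D_{n-1-k}^{(1,l-p)}(d+k)$, and collect the coefficient attached to each value of $p\in[\![0,k]\!]$. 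At the endpoints $p=0$ and $p=k$ only one sum contributes, giving $\binom{k-1}{0}=\binom{k}{0}$ and $\binom{k-1}{k-1}=\binom{k}{k}$ respectively, while for $1\leqslant p\leqslant k-1$ the two contributions combine through Pascal's rule $\binom{k-1}{p-1}+\binom{k-1}{p}=\binom{k}{p}$. This produces
\begin{equation}
 D_{n-1}^{(k+1,l)}(d) = \sum_{p=0}^{k} \binom{k}{p} D_{n-1-k}^{(1,l-p)}(d+k),
\end{equation}
which, since $n-(k+1)=n-1-k$ and $d+(k+1)-1=d+k$, is precisely \eqref{DnklenfonctionDn1l} with $k$ replaced by $k+1$, closing the induction.

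The computation is elementary; the delicate points are purely organizational. First, the induction genuinely must be uniform in $(n,d,l)$, because \eqref{relrecDnkl} simultaneously lowers the size and shifts $d\mapsto d+1$, $l\mapsto l-1$, so the hypothesis is invoked at different parameters than those of the goal; I would therefore state the inductive claim with explicit universal quantifiers on $n,d,l$. Second, I expect the main (if modest) obstacle to be ensuring compatibility with the extended Definition \ref{defDnkletendu}, which assigns the value $0$ to out-of-range minors. This convention is exactly what lets the reindexing and the Pascal telescoping go through at the endpoints $p=0,k$ without any special treatment of $l$ near $1$ or $n$, since any term for which $l-p$ or $l-1-p$ leaves the admissible range simply vanishes on both sides. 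Verifying that the constraint $k+1\leqslant n$ keeps every invoked quantity within the regime where \eqref{relrecDnkl} and the hypothesis are valid is the one place where care is required.
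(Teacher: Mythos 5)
Your proposal is correct and follows essentially the same route as the paper's proof: induction on $k$ with the recursion \eqref{relrecDnkl}, a reindexing of one sum, and Pascal's rule, relying on the extended Definition \ref{defDnkletendu} to absorb the boundary terms. In fact your bookkeeping is slightly more careful than the paper's: you make the universal quantification over $n$ explicit (the paper invokes its inductive hypothesis at size $n-1$ although it only stated it at size $n$), and you correctly record the shifted argument as $d+k$ where the paper's intermediate displays carry a typographical $d+k-1$.
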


\begin{remarks}
 We consider only the case $k \leqslant n$. Else, we will make appear some determinant of negative size, which is not well defined.
 
 This relation will give us an explicit formula for $D_{n-1}^{(k,l)}(d)$ when we will have $D_{n-1}^{(1,l)}(d)$.
\end{remarks}

\begin{proof}
 Let be $n \in \NN^*$. For $n = 1$, the equality is automatically satisfied because the sum contains only one term. Now, for $n >1$, we will prove by induction over $k \in [\![1,n]\!]$ the following statement
 \begin{equation}
  \forall d \in \NN, l \in \mathbb{Z}, \;\; D_{n-1}^{(k,l)}(d) = \sum_{p=0}^{k-1} \binom{k-1}{p} D_{n-k}^{(1,l-p)}(d+k-1).
 \end{equation}
For $k=1$, the equality is automatically satisfied and the base case is proved. 
Now, we suppose that the statement is true for one $k \in \NN^*$ and we consider $D_{n-1}^{(k+1,l)}(d)$. By the inductive relation \eqref{relrecDnkl},
\begin{equation}
 D_{n-1}^{(k+1,l)}(d) = D_{n-2}^{(k,l-1)}(d+1) + D_{n-2}^{(k,l)}(d+1)
\end{equation}
and using the inductive statement twice,
\begin{equation}
 D_{n-1}^{(k+1,l)}(d) = \sum_{p=0}^{k-1} \binom{k-1}{p} D_{n-1-k}^{(1,l-1-p)}(d+k-1) + \sum_{p=0}^{k-1} \binom{k-1}{p} D_{n-1-k}^{(1,l-p)}(d+k-1).
\end{equation}
By changing the index of the first sum ($p'=p+1$), we can rewrite the previous expression 
\begin{align}
 D_{n-1}^{(k+1,l)}(d) &= \sum_{p=1}^{k} \binom{k-1}{p-1} D_{n-1-k}^{(1,l-p)}(d+k-1) + \sum_{p=0}^{k-1} \binom{k-1}{p} D_{n-1-k}^{(1,l-p)}(d+k-1) \\
 &= D_{n-1-k}^{(1,l-k)}(d+k-1) + \sum_{p=1}^{k-1} \left[ \binom{k-1}{p-1} + \binom{k-1}{p} \right] D_{n-1-k}^{(1,l-p)}(d+k-1) \notag \\ 
 &+ D_{n-1-k}^{(1,l)}(d+k-1) \\
 &=D_{n-1-k}^{(1,l-k)}(d+k-1) + \sum_{p=1}^{k-1} \binom{k}{p} D_{n-1-k}^{(1,l-p)}(d+k-1) + D_{n-1-k}^{(1,l)}(d+k-1)
\end{align}
using the Pascal's formula. We obtain finally the inductive statement for $k+1$
\begin{equation}
 D_{n-1}^{(k+1,l)}(d) = \sum_{p=0}^{k} \binom{k}{p} D_{n-1-k}^{(1,l-p)}(d+k-1)
\end{equation}
which concludes the proof.
\end{proof}

\begin{prop}
  For $d \in \NN$, $n \in \NN^*$, $k,l \in [\![1,n]\!]$, $D_{n-1}^{(k,l)}(d)$ is defined uniquely with the equality 
 \begin{equation}
  \forall k,j \in [\![1,n]\!], \;\; \sum_{l=1}^n (-1)^{k+l} D_{n-1}^{(k,l)}(d) \binom{d+l-1+j-1}{j-1} = \delta_{kj}.
 \end{equation}
\end{prop}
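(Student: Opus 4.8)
The plan is to recognize the displayed identity as nothing more than the classical row-cofactor (adjugate) relation for the matrix $M_n(d)$, together with the evaluation $\det(M_n(d)) = 1$ furnished by Proposition \ref{propdet}. The whole argument rests on two identifications. First, the binomial coefficient weighting the sum is an entry of $M_n(d)$ itself: since the general entry is $M_n(d)_{il} = \binom{d+i-1+l-1}{i-1}$, setting the row index equal to $j$ gives $M_n(d)_{jl} = \binom{d+j-1+l-1}{j-1} = \binom{d+l-1+j-1}{j-1}$, which is precisely the factor appearing in the sum. Here one must be careful: although the top of the binomial is symmetric in the two indices, the bottom $j-1$ pins down the \emph{row} index, so this is the $(j,l)$ entry and not the $(l,j)$ one. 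Second, by the very definition of $D_{n-1}^{(k,l)}(d)$ as the $(k,l)$ minor of $M_n(d)$, the signed quantity $(-1)^{k+l} D_{n-1}^{(k,l)}(d)$ is exactly the $(k,l)$ cofactor of $M_n(d)$.

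Granting these two identifications, the left-hand side of the claimed equality is $\sum_{l=1}^n M_n(d)_{jl}\,(-1)^{k+l}D_{n-1}^{(k,l)}(d)$, i.e. the sum over $l$ of the entries of row $j$ multiplied by the cofactors associated with row $k$. This is the standard identity $\sum_{l} a_{jl}\,C_{kl} = \delta_{kj}\det(A)$: for $j=k$ it is the Laplace expansion of $\det(M_n(d))$ along row $k$, while for $j \neq k$ it is the Laplace expansion along row $k$ of the matrix obtained from $M_n(d)$ by overwriting row $k$ with a copy of row $j$ (the cofactors of row $k$ being insensitive to the entries of row $k$); that matrix has two equal rows, hence vanishing determinant. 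Invoking Proposition \ref{propdet} to replace $\det(M_n(d))$ by $1$ yields exactly $\delta_{kj}$, which proves the relation.

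For the uniqueness part, I would fix $k$ and read the $n$ equations, one for each $j$, as a single linear system $M_n(d)\,x = e_k$ in the unknown vector $x$ with components $x_l = (-1)^{k+l} D_{n-1}^{(k,l)}(d)$, where $e_k$ is the $k$-th standard basis vector. Because $\det(M_n(d)) = 1 \neq 0$ by Proposition \ref{propdet}, the matrix $M_n(d)$ is invertible, so this system has a unique solution; equivalently, the family $(-1)^{k+l}D_{n-1}^{(k,l)}(d)$ is the $k$-th column of $M_n(d)^{-1}$, which is just the adjugate reformulation already established. The substantive part of the proof is therefore entirely conceptual rather than computational, and the only point demanding genuine care is the index bookkeeping in the first step — confirming that $l$ is the column index summed over, that row $j$ supplies the weights, and that row $k$ is the one deleted in forming the minors — after which the result follows from the cofactor identity with no further calculation.
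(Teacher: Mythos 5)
Your proof is correct and follows essentially the same route as the paper: both arguments identify $(-1)^{k+l}D_{n-1}^{(k,l)}(d)$ as a cofactor of $M_n(d)$, invoke the classical adjugate/cofactor identity together with $\det(M_n(d))=1$ from Proposition \ref{propdet}, and deduce uniqueness from the invertibility of $M_n(d)$ (the paper phrases this as $\mathrm{Com}(M_n(d))=(M_n(d)^T)^{-1}$, which is exactly your linear-system reformulation). Your entry-by-entry index check of which binomial coefficient is the $(j,l)$ entry is a welcome explicit verification of a point the paper leaves implicit.
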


\begin{proof}
 Because $D_{n-1}^{(k,l)}(d)$ is the $(k,l)$ minor of $M_n(d)$, $(-1)^{k+l} D_{n-1}^{(k,l)}(d)$ is one coefficient of the comatrix of $M_n(d)$, $\mathrm{Com}(M_n(d))$. 
 Moreover, we know that 
 \begin{align}
  (M_n(d)^T)^{-1} &= \frac{1}{\mathrm{det}(M_n(d)^T)} (\mathrm{Com}(M_n(d)^T))^T = \frac{1}{D_n(d)} \mathrm{Com}(M_n(d)) = \mathrm{Com}(M_n(d))
 \end{align}
using the proposition \ref{propdet}. So $\mathrm{Com}(M_n(d))$ can be defined uniquely as the inverse matrix of $M_n(d)^T$ and their coefficient can be defined uniquely also with the relations
\begin{equation}
  \forall k,j \in [\![1,n]\!], \;\; \sum_{l=1}^n (-1)^{k+l} D_{n-1}^{(k,l)}(d) \binom{d+l-1+j-1}{j-1} = \delta_{kj}.
 \end{equation}
\end{proof}

\begin{prop}
 For $d \in \NN$, $n \in \NN^*$, $k \in \NN^*$ and $l \in \mathbb{Z}$, $D_{n-1}^{(k,l)}(d)$ is given by 
 \begin{itemize}
  \item $D_{n-1}^{(1,l)}(d) = \frac{\mathds{1}_{1\leqslant l \leqslant n}}{(l-1)!(n-l)!} \prod_{1 \leqslant q \leqslant n, \, q \neq l} (d + q)$,
  \item $D_0^{(k,l)}(d) = \mathds{1}_{k=l=1}$ 
  \item and, for $n > 1$, $D_{n-1}^{(k,l)}(d) = D_{n-2}^{(k-1,l-1)}(d+1) + D_{n-2}^{(k-1,l)}(d+1)$
 \end{itemize}
 where $\mathds{1}$ is the usual indicator function.
\end{prop}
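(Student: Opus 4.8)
The plan is to dispatch the three stated formulae separately. The third one is nothing but the inductive relation \eqref{relrecDnkl}, already proved above, so I would simply invoke it. The second is immediate: for $n=1$ the matrix $M_1(d)$ is the scalar $\binom{d}{0}=1$, so its only genuine minor is the empty determinant $D_0^{(1,1)}(d)=1$, while every other $(k,l)$ lies outside $[\![1,1]\!]$ and hence vanishes by the extended Definition \ref{defDnkletendu}; this is exactly $D_0^{(k,l)}(d)=\mathds{1}_{k=l=1}$. The substantive work is therefore the closed form for $D_{n-1}^{(1,l)}(d)$ in the first bullet.

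The observation I would build on is that the entries of $M_n(d)$ are scaled rising factorials. Writing $(x)_m:=x(x+1)\cdots(x+m-1)$, one has
\[
 M_n(d)_{ij}=\binom{d+i+j-2}{i-1}=\frac{1}{(i-1)!}\,(d+j)_{i-1},
\]
so the $i$-th row is the polynomial $x\mapsto \frac{1}{(i-1)!}(x)_{i-1}$, of degree $i-1$ and leading coefficient $\frac{1}{(i-1)!}$, evaluated at the nodes $x_j:=d+j$. Thus $M_n(d)$ is a generalized Vandermonde matrix. As a side check this reproves Proposition \ref{propdet}: the product of leading coefficients $\prod_{i=1}^n\frac{1}{(i-1)!}$ cancels the nodal Vandermonde $\prod_{1\le j<j'\le n}(x_{j'}-x_j)=\prod_{k=1}^{n-1}k!$ of the nodes $1,\dots,n$, leaving $1$.

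For $k=1$ the condition $i<k$ never holds, so $D_{n-1}^{(1,l)}(d)$ is literally the determinant of the submatrix of $M_n(d)$ obtained by deleting row $1$ (the constant row, degree $0$) and column $l$ (the node $x_l$). What remains is $\det\big(\frac{1}{a!}(y_b)_a\big)_{1\le a,b\le n-1}$, with polynomials of degrees $1,\dots,n-1$ evaluated at the surviving nodes $\{y_b\}=\{d+q:1\le q\le n,\ q\ne l\}$. I would factor out the leading coefficients $\prod_{a=1}^{n-1}\frac1{a!}$, then use that each $(y)_a$ with $a\ge1$ is monic and divisible by $y$: elementary row operations (using the lower-degree rows) replace $(y_b)_a$ by $y_b^{\,a}$ without changing the determinant, and pulling a factor $y_b$ out of each column reduces the rest to an ordinary Vandermonde. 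This gives
\[
 D_{n-1}^{(1,l)}(d)=\Big(\prod_{a=1}^{n-1}\tfrac{1}{a!}\Big)\Big(\prod_{\substack{1\le q\le n\\ q\ne l}}(d+q)\Big)\prod_{\substack{q<q'\\ q,q'\ne l}}(q'-q).
\]

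The final step is pure factorial bookkeeping: deleting $l$ from $\{1,\dots,n\}$ removes from the full Vandermonde $\prod_{k=1}^{n-1}k!$ exactly the factors $\prod_{a=1}^{l-1}(l-a)=(l-1)!$ and $\prod_{b=l+1}^{n}(b-l)=(n-l)!$, so $\prod_{q<q',\ q,q'\ne l}(q'-q)=\frac{\prod_{k=1}^{n-1}k!}{(l-1)!(n-l)!}$. This cancels $\prod_{a=1}^{n-1}\frac1{a!}$ and leaves $\frac{1}{(l-1)!(n-l)!}\prod_{q\ne l}(d+q)$, with the indicator $\mathds{1}_{1\le l\le n}$ covering the out-of-range cases of Definition \ref{defDnkletendu}. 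The one point to watch, rather than a genuine obstacle, is orientation: the minor keeps the surviving columns in increasing order, so all node differences are $q'-q$ with $q<q'$ and no sign correction arises; and the degree-$1,\dots,n-1$ (rather than $0,\dots,n-1$) structure of the surviving rows is precisely what produces the extra factor $\prod_{q\ne l}(d+q)$ that distinguishes this minor from the full determinant.
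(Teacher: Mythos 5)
Your proof is correct, but the route you take for the substantive first bullet is genuinely different from the paper's. The paper does not evaluate the $(1,l)$ minor directly: it leans on the preceding proposition, which characterizes the minors as the unique solution of $\sum_{l=1}^n(-1)^{k+l}D_{n-1}^{(k,l)}(d)\binom{d+l-1+j-1}{j-1}=\delta_{kj}$, and then shows by induction on $k$ that the quantities built from the three bullets satisfy this orthogonality; the $k=1$ base case is where the work lies, with the $j=1$ identity recognized as the partial-fraction expansion of $\prod_{q=1}^n(d+q)^{-1}$ and the $j\geqslant 2$ identities obtained by writing the sum as the $(j-2)$-th derivative of $x^{d+j-1}(1-x)^{n-1}$ at $x=1$, which vanishes because $1$ is a root of order $n-1>j-2$. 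You instead compute the minor head-on as a polynomial Vandermonde determinant, using $M_n(d)_{ij}=(d+j)_{i-1}/(i-1)!$; the unitriangular row reduction from rising factorials to monomials, the extraction of $\prod_{q\neq l}(d+q)$ coming from the absence of constant terms in the surviving rows of degrees $1,\dots,n-1$, the superfactorial bookkeeping $\prod_{q<q',\,q,q'\neq l}(q'-q)=\prod_{k=1}^{n-1}k!/((l-1)!(n-l)!)$, and the sign/orientation remark are all sound, and your handling of the second and third bullets (convention plus the already-proved recursion) matches the logical role they play in the paper. Your route is shorter and more self-contained for this bullet --- it bypasses the uniqueness lemma entirely and reproves $\det M_n(d)=1$ in passing --- while the paper's route treats all $k$ simultaneously inside one induction; since the proposition only asserts the closed form at $k=1$ together with the recursion (the general-$k$ formula being deferred to the next proposition), your scope is exactly right.
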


\begin{proof}
 We will prove by induction over $k \in \NN^*$, using the previous expressions, the following statement
 \begin{equation}
  \forall d \in \NN, \, n \in \NN^*, \, j \in [\![1,n]\!], \; \;\; \sum_{l=1}^n (-1)^{k+l} D_{n-1}^{(k,l)}(d) \binom{d+l-1+j-1}{j-1} = \delta_{kj}.
 \end{equation}
 
For $k = 1$, let be $d \in \NN$ and $n \in \NN^*$. We have to prove 
\begin{equation}
 \sum_{l=1}^n \left( \frac{(-1)^{1+l}}{(l-1)!(n-l)!} \underset{q\neq l}{\prod_{q = 1}^n} (d + q) \right) = 1
\end{equation}
and, if $n > 1$,
\begin{equation}
 \forall j \in [\![2,n]\!], \;\; \sum_{l=1}^n \left( \frac{(-1)^{1+l}}{(l-1)!(n-l)!} \underset{q\neq l}{\prod_{q = 1}^n} (d + q) \binom{d+l-1+j-1}{j-1} \right) = 0.
\end{equation}
For the first equation, it is equivalent to prove that 
\begin{equation}
  \prod_{q=1}^n \frac{1}{d + q} = \sum_{l=1}^n \left( \frac{(-1)^{1+l}}{(l-1)!(n-l)!} \frac{1}{d + l} \right)
\end{equation}
which corresponds to the partial fraction expansion of the rational fraction $\prod_{q=1}^n \frac{1}{d + q}$. We will compute this expansion. According to the theory, the partial fraction expansion of this rational fraction has the form 
\begin{equation}
 \prod_{q=1}^n \frac{1}{d + q} = \sum_{l=1}^n \frac{\gamma_l}{d + l} 
\end{equation}
where
\begin{equation}
 \gamma_l = \lim\limits_{d \rightarrow -l} \underset{q\neq l}{\prod_{q = 1}^n} \frac{1}{d + q} = \prod_{q=1}^{l-1} \frac{-1}{l-q} \prod_{q=l+1}^n \frac{1}{q-l} = \frac{(-1)^{l-1}}{(l-1)!(n-l)!}
\end{equation}
which proves the first equation because $(-1)^{l-1} = (-1)^{l+1}$. Now, for the second equation (and $n > 1$), it is equivalent to prove that 
\begin{equation}
 \forall j \in [\![2,n]\!], \;\; \sum_{l=1}^n \left( \frac{(-1)^{1+l}}{(l-1)!(n-l)!} \frac{1}{d+l} \binom{d+l-1+j-1}{j-1} \right) = 0
\end{equation}
or
\begin{equation}
 \forall j \in [\![2,n]\!], \;\; \sum_{l=1}^n \left( \frac{(-1)^{1+l}}{(l-1)!(n-l)!} \prod_{q=1}^{j-2} (d + l + q) \right) = 0
\end{equation}
developing the binomial coefficient. Multiplying this sum by $(n-1)!$ and making the index change $l'=l-1$, it is equivalent to prove that
\begin{equation}
 \forall j \in [\![2,n]\!], \;\; \sum_{l=0}^{n-1} \left( (-1)^{l} \binom{n-1}{l} \prod_{q=1}^{j-2} (d + l + 1 + q) \right) = 0.
\end{equation}
However, for $j \in [\![2,n]\!]$,
\begin{equation}
 \prod_{q=1}^{j-2} (d + l + 1 + q) = \frac{\mathrm{d}}{\mathrm{d}x^{j-2}} \left.\left( x^{d+l+j-1} \right)\right\vert_{x=1}
\end{equation}
so 
\begin{align}
 \sum_{l=0}^{n-1} \left( (-1)^{l} \binom{n-1}{l} \prod_{q=1}^{j-2} (d + l + 1 + q) \right) &= \frac{\mathrm{d}}{\mathrm{d}x^{j-2}} \left.\left( x^{d+j-1} \sum_{l=0}^{n-1} \binom{n-1}{l} (-x)^l \right)\right\vert_{x=1} \\
 &= \frac{\mathrm{d}}{\mathrm{d}x^{j-2}} \left.\left( x^{d+j-1} (1-x)^{n-1} \right)\right\vert_{x=1}.
\end{align}
$1$ is a order $n-1$ root of the polynom $X^{d + j-1}(1-X)^{n-1}$. But $0 \leqslant j-2 < n-1$ because $j \in [\![2,n]\!]$. So $1$ is still a root of its $(j-2)^{\text{th}}$ derivative and the second equation is true. The base case is now proved.

We consider now that the statement is true for one $k \in \NN^*$. Let be $d \in \NN$ and $n \in \NN^*$. If $n = 1$, because the case $k=1$ has been proved just before, the equality is automatically verified for $k+1$, for any $k\in \NN^*$. We suppose now that $n > 1$ and we consider the sum for $k+1$. For $j \in [\![1,n]\!]$, using the inductive relation,
\begin{align}
   \sum_{l=1}^n (-1)^{k+1+l} &D_{n-1}^{(k+1,l)}(d) \binom{d+l-1+j-1}{j-1} \notag \\
   &= \sum_{l=1}^n (-1)^{k+1+l} \left( D_{n-2}^{(k,l-1)}(d+1) + D_{n-2}^{(k,l)}(d+1) \right) \binom{d+l-1+j-1}{j-1}\\
   &= \sum_{l=2}^n (-1)^{k+1+l} D_{n-2}^{(k,l-1)}(d+1) \binom{d+l-1+j-1}{j-1} \notag \\ 
   &+ \sum_{l=1}^{n-1} (-1)^{k+1+l} D_{n-2}^{(k,l)}(d+1) \binom{d+l-1+j-1}{j-1}
\end{align}
because for all $d \in \NN$, $D_{n-2}^{(i,n)}(d) = D_{n-2}^{(i,0)}(d) = 0$. So, making a change of index in the first sum, we have
\begin{align}
 \sum_{l=1}^n (-1)^{k+1+l} &D_{n-1}^{(k+1,l)}(d) \binom{d+l-1+j-1}{j-1} \notag \\
 &= \sum_{l=1}^{n-1} (-1)^{k+l} D_{n-2}^{(k,l)}(d+1) \left[ \binom{d+l+j-1}{j-1} - \binom{d+l-1+j-1}{j-1} \right].    
\end{align}
If $j=1$, this sum is therefore equal to $0$, that is ww want to prove because $k+1 >1$. Else, 
\begin{align}
 \sum_{l=1}^n (-1)^{k+1+l} &D_{n-1}^{(k+1,l)}(d) \binom{d+l-1+j-1}{j-1} \notag \\
 &= \sum_{l=1}^{n-1} (-1)^{k+l} D_{n-2}^{(k,l)}(d+1) \binom{d+l+j-2}{j-2} 
\end{align}
using the Pascal's formula. Remarking that $\binom{d+l+j-2}{j-2}  = \binom{d+1+l-1+j-2}{j-2}$, we obtain, using the inductive statement for $k$, 
\begin{align}
 \sum_{l=1}^n &(-1)^{k+1+l} D_{n-1}^{(k+1,l)}(d) \binom{d+l-1+j-1}{j-1} \notag \\
 &= \sum_{l=1}^{n-1} (-1)^{k+l} D_{n-2}^{(k,l)}(d+1) \binom{d+1+l-1+j-2}{j-2} = \delta_{k,j-1} = \delta_{k+1,j}
\end{align}
which concludes the proof.
\end{proof} 

Now, we use all the previous propositions to obtain an explicit expression to $D_{n-1}^{(k,l)}(d)$.

\begin{prop}
 For $d \in \NN$, $n \in \NN^*$, $k \in \NN^*$ and $l \in \mathbb{Z}$, 
 \begin{equation}
  D_{n-1}^{(k,l)}(d) = \frac{\mathds{1}_{1 \leqslant k,l \leqslant n}}{(n-k)!} \prod_{q=k}^n (d+q) \sum_{p = \mathrm{max}(l,k)}^{\mathrm{min}(k+l-1,n)} \binom{k-1}{p-l} \binom{n-k}{p-k}  \frac{1}{d+p}.
 \end{equation}
\end{prop}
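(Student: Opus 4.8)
The plan is to combine the reduction formula \eqref{DnklenfonctionDn1l}, which writes $D_{n-1}^{(k,l)}(d)$ as a binomial combination of first-row minors $D_{n-k}^{(1,\cdot)}$, with the explicit value of $D_{n-1}^{(1,l)}(d)$ obtained in the preceding proposition; the whole statement then collapses to a single reindexing of the sum. Before computing I would clear the degenerate ranges. If $k>n$, the right-hand side carries the factor $\mathds{1}_{1\leqslant k,l\leqslant n}=0$ and the left-hand side vanishes by the extended Definition~\ref{defDnkletendu}; if $l\notin[\![1,n]\!]$, both sides vanish again (the left by the extended definition, the right because the two binomial factors $\binom{k-1}{p-l}$ and $\binom{n-k}{p-k}$ have disjoint supports). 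Hence I may assume $1\leqslant k\leqslant n$ and $1\leqslant l\leqslant n$, the regime in which \eqref{DnklenfonctionDn1l} is valid.

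Second, I would substitute the closed form into
\[
  D_{n-1}^{(k,l)}(d) = \sum_{s=0}^{k-1}\binom{k-1}{s}\,D_{n-k}^{(1,l-s)}(d+k-1),
\]
noting that $D_{n-k}^{(1,l-s)}(d+k-1)$ is a first-row minor of size $n-k+1$ at the shifted argument $d+k-1$, namely
\[
  D_{n-k}^{(1,l-s)}(d+k-1) = \frac{\mathds{1}_{1\leqslant l-s\leqslant n-k+1}}{(l-s-1)!\,(n-k+1-(l-s))!}\prod_{\substack{1\leqslant q\leqslant n-k+1\\ q\neq l-s}}(d+k-1+q).
\]
The decisive simplification is the product: with $r=d+k-1+q$ it runs over the consecutive integers $d+k,\dots,d+n$ with exactly the factor $r=d+k-1+l-s$ deleted, so it telescopes to $\frac{\prod_{q=k}^{n}(d+q)}{d+k-1+l-s}$, and the global factor $\prod_{q=k}^{n}(d+q)$ can be pulled out of the sum.

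Third comes the index change $p=k-1+l-s$, equivalently $s=k-1+l-p$, which sends $0\leqslant s\leqslant k-1$ to $l\leqslant p\leqslant k+l-1$. Under it $\binom{k-1}{s}=\binom{k-1}{p-l}$, the two factorials become $(p-k)!$ and $(n-p)!$, the indicator becomes $\mathds{1}_{k\leqslant p\leqslant n}$, and the deleted factor becomes $d+p$. Rewriting $\frac{1}{(p-k)!\,(n-p)!}=\frac{1}{(n-k)!}\binom{n-k}{p-k}$ and observing that the product $\binom{k-1}{p-l}\binom{n-k}{p-k}$ is supported exactly on $\max(l,k)\leqslant p\leqslant\min(k+l-1,n)$ --- so the indicator and the nominal bounds $l\leqslant p\leqslant k+l-1$ are absorbed --- yields precisely the claimed formula.

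The computation uses no input beyond \eqref{DnklenfonctionDn1l} and the base case, so the only real risk is bookkeeping: correctly tracking the shifted argument $d+k-1$ and the reduced size $n-k+1$ when invoking the base-case formula, telescoping the product without an off-by-one slip, and checking that the reindexed binomials cut the summation down to the sharp bounds $\max(l,k)$ and $\min(k+l-1,n)$ rather than a wider interval. I expect these two verifications --- the telescoping and the support of the binomial pair --- to be where care is needed; once settled, the identity follows immediately.
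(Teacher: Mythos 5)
Your proposal is correct and follows essentially the same route as the paper's proof: restrict to $1\leqslant k,l\leqslant n$, substitute the closed form of $D_{n-k}^{(1,l-s)}(d+k-1)$ into the reduction formula \eqref{DnklenfonctionDn1l}, telescope the product to $\prod_{q=k}^{n}(d+q)/(d+p)$, and reindex by $p=k-1+l-s$ so that the indicator and binomial supports cut the sum to $\max(l,k)\leqslant p\leqslant\min(k+l-1,n)$. The only cosmetic difference is that you perform the reindexing in one step where the paper chains three successive changes of index, and you spell out the degenerate cases slightly more explicitly.
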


\begin{proof}
 We consider directly $k,l \in [\![1,n]\!]$. Else, we already know that $D_{n-1}^{(k,l)}(d) = 0$. Using the equation \eqref{DnklenfonctionDn1l} and the expression of $D_{n-1}^{(1,l)}(d)$, we obtain
 \begin{equation}
  D_{n-1}^{(k,l)}(d) = \sum_{p=0}^{k-1} \binom{k-1}{p} \frac{\mathds{1}_{1 \leqslant l-p \leqslant n-k+1}}{(l-p-1)!(n-k+1-l+p)!} \underset{q \neq l-p}{\prod_{q = 1}^{n-k+1}} (d + k -1 + q).
 \end{equation}
 Making successively changes of indices $q'=q+k-1$, $p'=k-1-p$ and then $p' = l+p$ (with a rewriting before), we obtain
 \begin{align}
  D_{n-1}^{(k,l)}(d) &= \sum_{p=0}^{k-1} \binom{k-1}{p} \frac{\mathds{1}_{1 \leqslant l-p \leqslant n-k+1}}{(l-p-1)!(n-k+1-l+p)!} \underset{q \neq l-p+k-1}{\prod_{q = k}^{n}} (d + q) \\
  &= \sum_{p=0}^{k-1} \binom{k-1}{p} \frac{\mathds{1}_{1 \leqslant l+p-k+1 \leqslant n-k+1}}{(l+p-k)!(n-l-p)!} \underset{q \neq l+p}{\prod_{q = k}^{n}} (d + q) \\
  &= \sum_{p=0}^{k-1} \binom{k-1}{p} \frac{\mathds{1}_{k-l \leqslant p \leqslant n-l}}{(p-(k-l))!(n-l-p)!} \underset{q \neq l+p}{\prod_{q = k}^{n}} (d + q) \\
  &= \sum_{p=l}^{k+l-1} \binom{k-1}{p-l} \frac{\mathds{1}_{k \leqslant p \leqslant n}}{(p-k)!(n-p)!} \underset{q \neq p}{\prod_{q = k}^{n}} (d + q).
 \end{align}
Injecting the bounds of the indicator function with its of the sum and rewriting again the expression, we obtain the wanted expression.
\end{proof}

\begin{examples}
 In our computations, we have to obtain the expression of $D_{n-1}^{(k,1)}(d)$ for some $d$ and $k$. The sum contains one term, which is $\frac{1}{d+k}$ and it appears that its expression is very simple :
 \begin{align}
  D_{n-1}^{(k,1)}(d) = \mathds{1}_{1 \leqslant k \leqslant n}  \binom{d+n}{d+k}.
 \end{align}
 Therefore,
 \begin{equation}\label{110}
  D_{n-1}^{(1,1)}(0) = \binom{n}{1} = n,
 \end{equation}
 \begin{equation}\label{210}
  D_{n-1}^{(2,1)}(0) = \binom{n}{2} = \frac{n(n-1)}{2},
 \end{equation}
 \begin{equation}\label{111}
   D_{n-1}^{(1,1)}(1) = \binom{n+1}{2} = \frac{n(n+1)}{2}.
 \end{equation}
\end{examples}

\subsection{Detailed computations}\label{derivativeZ}

We consider the equation \eqref{Zhomofin} where we factorize the term $\left(1+\frac{x}{c}\right)^{d+i+j-1}$ out of the determinant
\begin{equation}
 Z_{nm}(x\vert B \vert \hat C) = Z^0_{nm}(B \vert \hat C) \left(1+\frac{x}{c}\right)^{nm}\underset{1\leqslant j \leqslant \min(n,m)}{\mathrm{det}} \left(C_j(x)\right)
\end{equation}
with 
\begin{equation}
 C_j(x) = \left( \binom{d +i+j-2}{i-1} \left(1-\beta\left(\frac{x}{x+c}\right)^{d+i+j-1} \right)\right)_{1\leqslant i \leqslant \min(n,m)}.
\end{equation}
So 
\begin{align}
 \frac{Z_{nm}'(0\vert B \vert \hat C)}{Z^0_{nm}(B \vert \hat C)} &= \frac{nm}{c} \underset{1\leqslant j \leqslant \min(n,m)}{\mathrm{det}} \left(C_j(0)\right) \notag \\ 
 &+ \sum_{j=1}^{\min(n,m)} \mathrm{det}\left(C_1(0), \dots, C_{j-1}(0), C_j'(0), C_{j+1}(0), \dots , C_{\min(n,m)}(0)\right)
\end{align}
and
\begin{align}
 \frac{Z_{nm}''(0\vert B \vert \hat C)}{Z^0_{nm}(B \vert \hat C)} &= \frac{nm(nm-1)}{c^2} \underset{1\leqslant j \leqslant \min(n,m)}{\mathrm{det}} \left(C_j(0)\right) \notag \\
 &+ 2\frac{nm}{c} \sum_{j=1}^{\min(n,m)} \mathrm{det}\left(C_1(0), \dots, C_{j-1}(0), C_j'(0), C_{j+1}(0), \dots , C_{\min(n,m)}(0)\right) \notag \\
 &+ 2 \sum_{1 \leqslant j_1 < j_2 \leqslant\min(n,m)} \mathrm{det}\left(C_1(0), \dots, C_{j_1}'(0), \dots, C_{j_2}'(0), \dots , C_{\min(n,m)}(0)\right) \notag \\
 &+ \sum_{j=1}^{\min(n,m)} \mathrm{det}\left(C_1(0), \dots, C_{j-1}(0), C_j''(0), C_{j+1}(0), \dots , C_{\min(n,m)}(0)\right)
\end{align}
with 
\begin{align}
 C_j'(x) = \left( \binom{d +i+j-2}{i-1} (-\beta) (d +i+j-1) \frac{cx^{d +i+j-2}}{(x+c)^{d +i+j}} \right)_{1\leqslant i \leqslant \min(n,m)}
\end{align}
and 
\begin{align}
 C_j''(x) &= \left( \binom{d +i+j-2}{i-1} \right.   (-\beta) (d +i+j-1)c \notag \\
 & \times \left. \left[ (d +i+j-2) \frac{x^{d +i+j-3}}{(x+c)^{d +i+j}}  - (d +i+j) \frac{x^{d +i+j-2}}{(x+c)^{d +i+j+1}} \right] \right)_{1\leqslant i \leqslant \min(n,m)}.
\end{align}
Using the proposition \ref{propdet}, we know that $\underset{1\leqslant j \leqslant \min(n,m)}{\mathrm{det}} \left(C_j(0)\right) = 1$ for all $n,m$. However, the results of both previous computation are different according to the value of $d$. 

 \paragraph{For $d > 1$} For all $j \in [\![1,\min(n,m)]\!]$, $C_j'(0) = C_j''(0) = 0$. So 
\begin{equation}
 \frac{Z_{nm}'(0\vert B \vert \hat C)}{Z^0_{nm}(B \vert \hat C)} = \frac{nm}{c}, 
\end{equation}
\begin{equation}
 \frac{Z_{nm}''(0\vert B \vert \hat C)}{Z^0_{nm}(B \vert \hat C)} = \frac{nm(nm-1)}{c^2}
\end{equation}
and 
\begin{equation}
 \frac{Z_{nm}''(0\vert B \vert \hat C)}{Z^0_{nm}(B \vert \hat C)} - \left( \frac{Z_{nm}'(0\vert B \vert \hat C)}{Z^0_{nm}(B \vert \hat C)}\right)^2 = -\frac{nm}{c^2}.
\end{equation}
Recalling that
\begin{equation}
 \frac{Z_{nm}'(0\vert B \vert \hat C)}{Z^0_{nm}(B\vert \hat C)} = -nm\frac{F'(0)}{k_BT} + o(nm)
\end{equation}
and
\begin{equation}
 \frac{Z_{nm}''(0\vert B \vert \hat C)}{Z^0_{nm}(B \vert \hat C)} - \left(\frac{Z_{nm}'(0\vert B \vert \hat C)}{Z^0_{nm}(B \vert \hat C)}\right)^2 = -nm\frac{F''(0)}{k_BT} + o(nm),
\end{equation}
we obtain therefore 
\begin{equation}
\frac{F'(0)}{k_BT} = -\frac{1}{c} \qquad \text{and} \qquad \frac{F''(0)}{k_BT} = \frac{1}{c^2}.
\end{equation}
So $\alpha = 0$.

\paragraph{For $d = 1$} For all $j \in [\![2,\min(n,m)]\!]$, $C_1'(0) = C_j'(0) = C_j''(0) = 0$ but 
\begin{equation}
 C_1''(0) = \left( \begin{matrix} -\frac{2\beta}{c^2} \\ 0 \\ \vdots \\ 0 \end{matrix} \right).
\end{equation}
So 
\begin{equation}
 \frac{Z_{nm}'(0\vert B \vert \hat C)}{Z^0_{nm}(B \vert \hat C)} = \frac{nm}{c} \quad \text{which implies} \quad \frac{F'(0)}{k_BT} = -\frac{1}{c}.
\end{equation}
And
\begin{equation}
 \frac{Z_{nm}''(0\vert B \vert \hat C)}{Z^0_{nm}(B \vert \hat C)} = \frac{nm(nm-1)}{c^2} + \mathrm{det}\left(C_1''(0), C_2(0), \dots, C_{\min(n,m)}(0)\right).
\end{equation}
We expand the determinant along the first column 
\begin{align}
 \mathrm{det}\left(C_1''(0), C_2(0), \dots, C_{\min(n,m)}(0)\right) &= -\frac{2\beta}{c^2} \underset{1 \leqslant i,j \leqslant \min(n,m)-1}{\mathrm{det}} \left( \binom{i+j+1}{i} \right) \\
 &= -\frac{\beta}{c^2} nm.
\end{align}
using the formula \eqref{111} for $D_{\min(n,m)-1}^{(1,1)}(1)$. Therefore, 
\begin{equation}
 \frac{Z_{nm}''(0\vert B \vert \hat C)}{Z^0_{nm}(B \vert \hat C)} - \left( \frac{Z_{nm}'(0\vert B \vert \hat C)}{Z^0_{nm}(B \vert \hat C)} \right)^2 = -nm \frac{1+\beta}{c^2}.
\end{equation}
We obtain 
\begin{equation}
 \frac{F''(0)}{k_BT} = \frac{1+\beta}{c^2}
\end{equation}
so $\alpha^2 = \frac{3\beta}{c^2}$.

\paragraph{For $d = 0$ ($n = m$)} For all $j \in [\![3,n]\!]$, $C_2'(0) = C_j'(0) = C_j''(0) = 0$ but
\begin{equation}
 C_1'(0) = \left( \begin{matrix} -\frac{\beta}{c} \\ 0 \\ \vdots \\ 0 \end{matrix} \right), \qquad
 C_1''(0) = \left( \begin{matrix} \frac{2\beta}{c^2} \\ -\frac{2\beta}{c^2} \\ 0 \\ \vdots \\ 0 \end{matrix} \right) \qquad \text{and} \qquad 
 C_2''(0) = \left( \begin{matrix} -\frac{2\beta}{c^2} \\ 0 \\ \vdots \\ 0 \end{matrix} \right) .
\end{equation}
So 
\begin{equation}
 \frac{Z_{nn}'(0\vert B \vert \hat C)}{Z^0_{nn}(B \vert \hat C)} = \frac{n^2}{c} + \mathrm{det}\left(C_1'(0), C_2(0), \dots, C_n(0)\right)
\end{equation}
and
\begin{align}
 \frac{Z_{nn}''(0\vert B \vert \hat C)}{Z^0_{nn}(B \vert \hat C)} &= \frac{n^2(n^2-1)}{c^2} +\frac{2n^2}{c} \mathrm{det}\left(C_1'(0), C_2(0), \dots, C_n(0)\right) \notag \\
 &+ \mathrm{det}\left(C_1''(0), C_2(0), \dots, C_n(0)\right) + \mathrm{det}\left(C_1(0), C_2''(0), C_3(0), \dots, C_n(0)\right).
\end{align}
We expand the determinants along the first or second column
\begin{equation}
 \mathrm{det}\left(C_1'(0), C_2(0), \dots, C_n(0)\right) = -\frac{\beta}{c}\underset{1\leqslant i,j \leqslant n-1}{\mathrm{det}}\left( \binom{i+j}{i} \right) = -\frac{\beta}{c}n
\end{equation}
\begin{align}
 \mathrm{det}\left(C_1''(0), C_2(0), \dots, C_n(0)\right) &= \frac{2\beta}{c^2} \left( \underset{1\leqslant i,j \leqslant n-1}{\mathrm{det}}\left( \binom{i+j}{i} \right) \right. \notag \\
 &+ \left. \underset{1\leqslant i,j \leqslant n-1}{\mathrm{det}}\left( \left\{
 \begin{array}{cc}
    1 & \text{if} \; i = 1 \\
    \binom{i+j}{i} & \text{if} \; i \geqslant 2
    \end{array} \right.
  \right) \right) \\
  &=  \frac{2\beta}{c^2} \left( n + \frac{n(n-1)}{2} \right)
\end{align}
\begin{align}
 \mathrm{det}\left(C_1(0), C_2''(0), C_3(0), \dots, C_n(0)\right) &=  \frac{2\beta}{c^2} \underset{1\leqslant i,j \leqslant n-1}{\mathrm{det}}\left( \left\{\begin{array}{cc}
    1 & \text{if} \; j = 1 \\
    \binom{i+j}{i} & \text{if} \; j \geqslant 2                                          
    \end{array} \right.
  \right) \\
  &= \frac{2\beta}{c^2} \frac{n(n-1)}{2}
\end{align}
using the formulae \eqref{110}-\eqref{210} for $D_{n-1}^{(1,1)}(0)$ and $D_{n-1}^{(2,1)}(0)$ respectively .
So 
\begin{equation}
 \frac{Z_{nn}'(0\vert B \vert \hat C)}{Z_{nn}(0\vert B \vert \hat C)} = \frac{n^2-\beta n}{c}
\end{equation}
and
\begin{equation}
 \frac{Z_{nn}''(0\vert B \vert \hat C)}{Z_{nn}(0\vert B \vert \hat C)} = \frac{n^2(n^2-1)}{c^2} - \frac{2\beta}{c^2}n^2(n-1).
\end{equation}
The first equation still implies 
\begin{equation}
 \frac{F'(0)}{k_BT} = -\frac{1}{c^2}
\end{equation}
because $\beta$ is in the linear term. However, with the second equation, we obtain
\begin{align}
 -n^2\frac{F''(0)}{k_BT} + o(n^2) = \frac{n^2(n^2-1)}{c^2} - \frac{2\beta}{c^2}n^2(n-1) - \frac{(n^2-n\beta)^2}{c^2} = -n^2\frac{(\beta-1)^2}{c^2}
\end{align}
that implies $\alpha^2 = \frac{3\beta(\beta-2)}{c^2}$.

\section{Some useful mathematical tools}\label{mathematicaltools}

\subsection{Basics on elementary and complete symmetric functions}\label{symmfunc}

Informations of this subsection come from \cite{MacDonald1995}. The proofs are inside the book or are easy to write.

\begin{Def}\label{defsymmfunc}
 We consider $x_1, \dots x_N$ $N$ independent variables and $\bar{x}$ the set of this variables. The $r$-th elementary symmetric function $e_r(\bar{x})$ is the sum of all products of $r$ distinct variables $x_i$. Therefore, $e_0(\bar{x}) = 1$ and for $r \in \NN^*$,
 \begin{equation}
  e_r(\bar{x}) = \sum_{1 \leqslant i_1 < \cdots < i_r \leqslant N} x_{i_1} \dots x_{i_r} \;\; \text{if} \; r \leqslant N \;\; \text{and} \;\; e_r(\bar{x}) = 0 \;\; \text{else.}
 \end{equation}
 
 The $r$-th complete symmetric function $h_r(\bar{x})$ is the sum of all monomials of total degree $r$ in the variables $x_i$. Therefore, $h_0(\bar{x}) = 1$ and for $r \in \NN^*$,
 \begin{equation}
  h_r(\bar{x}) = \sum_{1 \leqslant i_1 \leqslant \cdots \leqslant i_r \leqslant N} x_{i_1} \dots x_{i_r}.
 \end{equation}
 \end{Def}

 \begin{prop}
   The generating functions for respectively the $e_r(\bar{x})$ and the $h_r(\bar{x})$ are 
   \begin{equation}\label{genersymmfunc}
    E_{\bar{x}}(t) = \sum_{r = 0}^N e_r(\bar{x}) t^r = \prod_{i=1}^N (1+x_it) \;\;\; \text{and} \;\;\; H_{\bar{x}}(t) = \sum_{r \geqslant 0} h_r(\bar{x}) t^r = \prod_{i=1}^N (1-x_it)^{-1}.
   \end{equation}
   We observe that $H_{\bar{x}}(t)E_{\bar{x}}(-t) = 1$ so that, for all $n \in \NN^*$,
   \begin{equation}\label{identitysymmfunc}
    \sum_{r=0}^n (-1)^r e_r(\bar{x}) h_{n-r}(\bar{x}) = 0.
   \end{equation} 
 \end{prop}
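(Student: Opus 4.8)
The plan is to establish the two generating-function identities by directly expanding the defining products, and then to deduce both the relation $H_{\bar x}(t)E_{\bar x}(-t)=1$ and the convolution identity \eqref{identitysymmfunc} by comparing coefficients of powers of $t$. All four assertions reduce to elementary manipulations, so the work is mainly bookkeeping.

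First I would prove the formula for $E_{\bar x}(t)$. Expanding the finite product $\prod_{i=1}^N(1+x_it)$, each of the $N$ factors contributes either $1$ or $x_it$, so the coefficient of $t^r$ is the sum of all products $x_{i_1}\cdots x_{i_r}$ taken over subsets $\{i_1<\cdots<i_r\}$ of size $r$; by Definition \ref{defsymmfunc} this is exactly $e_r(\bar x)$. Since there are only $N$ variables, no subset of size larger than $N$ exists, so the series terminates at $r=N$, giving the first identity in \eqref{genersymmfunc}.

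Next, for $H_{\bar x}(t)$ I would expand each factor as a geometric series $(1-x_it)^{-1}=\sum_{k\geqslant 0}x_i^k t^k$ and multiply the $N$ series together. The coefficient of $t^r$ is then $\sum_{k_1+\cdots+k_N=r}x_1^{k_1}\cdots x_N^{k_N}$, which is precisely the sum of all monomials of total degree $r$ in the $x_i$, that is $h_r(\bar x)$. This proves the second identity in \eqref{genersymmfunc}.

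Finally, substituting $-t$ into the first identity gives $E_{\bar x}(-t)=\prod_{i=1}^N(1-x_it)$, whence $H_{\bar x}(t)E_{\bar x}(-t)=\prod_{i=1}^N(1-x_it)^{-1}(1-x_it)=1$. Writing both factors as power series and collecting the coefficient of $t^n$ produces $\sum_{r=0}^n(-1)^r e_r(\bar x)h_{n-r}(\bar x)$, which must match the coefficient of $t^n$ in the constant series $1$; this coefficient is $0$ for every $n\geqslant 1$, establishing \eqref{identitysymmfunc}. The one point deserving mild care is the term-by-term multiplication of the (a priori infinite) series defining $H_{\bar x}$, but this is immediate once one works in the ring of formal power series in $t$ over $\CC[x_1,\dots,x_N]$, or alternatively by invoking absolute convergence for sufficiently small $|t|$; there is no genuine obstacle beyond this justification.
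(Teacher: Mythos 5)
Your proposal is correct and is exactly the standard argument; the paper itself does not spell out a proof but simply defers to \cite{MacDonald1995} with the remark that the proofs ``are inside the book or are easy to write'', and what you wrote is precisely that easy textbook proof (expand the products to identify the coefficients with $e_r$ and $h_r$, then compare coefficients of $t^n$ in $H_{\bar x}(t)E_{\bar x}(-t)=1$). Your closing remark about working in the ring of formal power series is the right way to dispose of the only point needing care.
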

 
 \subsection{The Vandermonde matrix and its inverse}
 
 \begin{Def}
   The Vandermonde matrix $V(\bar{x})$ of the $N$ independent variables $x_1, \dots x_N$ is $V(\bar{x}) = (x_i^{j-1})_{1 \leqslant i,j \leqslant N}$.
 \end{Def}
 
 \begin{prop}
  The determinant of the Vandermonde matrix is 
  \begin{equation}
      \mathrm{det}(V(\bar{x})) = \prod_{1 \leqslant i,j \leqslant N} (x_j-x_i).
  \end{equation}
  It corresponds to $\Delta(\bar{x})$ considering $c=1$. If all the $x_i$ are different, $V(\bar{x})$ is invertible and its inverse matrix is
  \begin{equation}
   V(\bar{x})^{-1} = \left( \frac{(-1)^{N-i} e_{N-i}(\bar{x_j})}{\prod_{k \neq j} (x_j-x_k)} \right)_{1 \leqslant i,j \leqslant N}.
  \end{equation}
  The interested reader can find a proof in \cite{Gourdon2021} for the determinant and in \cite{Janjic2013} for the inverse matrix.
 \end{prop}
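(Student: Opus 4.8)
The plan is to establish the two formulae separately: the determinant by a vanishing-plus-degree argument, and the inverse by a direct verification that $V(\bar x)\,V(\bar x)^{-1}$ has entries $\delta_{lj}$, the key tool being the generating function \eqref{genersymmfunc}.

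For the determinant, I would regard $\mathrm{det}(V(\bar x))$ as a polynomial in the independent variables $x_1,\dots,x_N$. Whenever $x_i=x_j$ with $i\neq j$, two rows of $V(\bar x)$ coincide and the determinant vanishes; hence each factor $(x_j-x_i)$ with $i<j$ divides $\mathrm{det}(V(\bar x))$, and since these factors are pairwise coprime their product $\prod_{1\leqslant i<j\leqslant N}(x_j-x_i)$ divides it as well. A degree count then forces equality up to a scalar: every term of the Leibniz expansion of $\mathrm{det}(x_i^{j-1})$ has total degree $0+1+\cdots+(N-1)=\binom{N}{2}$, exactly the degree of the product. To pin the constant to $1$, I would compare the coefficient of the monomial $\prod_{i=1}^N x_i^{i-1}$: on the determinant side it arises only from the identity permutation and equals $1$, while on the product side it is the coefficient obtained by selecting $x_j$ in each factor $(x_j-x_i)$, again $1$. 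This gives the stated formula, which coincides with $\Delta(\bar x)$ when $c=1$ since then $g(x_j,x_i)^{-1}=x_j-x_i$.

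For the inverse, I write $V(\bar x)_{lk}=x_l^{k-1}$ and denote the claimed inverse by $W$ with entries
\begin{equation}
W_{kj}=\frac{(-1)^{N-k}\,e_{N-k}(\bar x_j)}{\prod_{p\neq j}(x_j-x_p)},
\end{equation}
where $n=N$ and $\bar x_j=\bar x\setminus x_j$. The essential identity, obtained by collecting the coefficient of $t^{k-1}$ in the expansion of the $N-1$ factors and rewritten through \eqref{genersymmfunc}, is
\begin{equation}
\prod_{p\neq j}(t-x_p)=\sum_{k=1}^N (-1)^{N-k}\,e_{N-k}(\bar x_j)\,t^{k-1}.
\end{equation}
Substituting $t=x_l$ then yields
\begin{equation}
\big(V(\bar x)\,W\big)_{lj}=\frac{1}{\prod_{p\neq j}(x_j-x_p)}\sum_{k=1}^N x_l^{k-1}(-1)^{N-k}e_{N-k}(\bar x_j)=\frac{\prod_{p\neq j}(x_l-x_p)}{\prod_{p\neq j}(x_j-x_p)}.
\end{equation}
For $l=j$ numerator and denominator agree and the ratio is $1$; for $l\neq j$ the index $l$ lies among the $p\neq j$, so the numerator contains the factor $(x_l-x_l)=0$. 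Hence $\big(V(\bar x)\,W\big)_{lj}=\delta_{lj}$, and since the $x_i$ are distinct the denominators are nonzero, so $V(\bar x)$ is invertible with $V(\bar x)^{-1}=W$.

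These computations are routine once the right objects are assembled; the only points needing care are the sign and index bookkeeping in the expansion of $\prod_{p\neq j}(t-x_p)$ in terms of the $e_r(\bar x_j)$, and the pinning of the overall constant in the determinant. Neither is a genuine obstacle, which is why the statement merely records the result and defers to \cite{Gourdon2021,Janjic2013}; nonetheless the generating-function verification above is, in my view, the cleanest route to the inverse, avoiding any appeal to cofactor expansions or Lagrange interpolation.
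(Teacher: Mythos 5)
Your proposal is correct, and it is worth noting that the paper gives no proof of this proposition at all: it simply records the result and defers to \cite{Gourdon2021} for the determinant and to \cite{Janjic2013} for the inverse. Your argument is therefore a self-contained substitute rather than a variant of an in-house proof. Both halves are sound. The divisibility-plus-degree-count argument for the determinant is the classical one; the only step you gloss is the uniqueness of the selection producing $\prod_{i=1}^N x_i^{i-1}$ on the product side, which is settled by the usual greedy observation that the maximal power of $x_N$ forces choosing $x_N$ from every factor containing it, and so on downwards. For the inverse, your key identity $\prod_{p\neq j}(t-x_p)=\sum_{k=1}^N(-1)^{N-k}e_{N-k}(\bar x_j)\,t^{k-1}$ is exactly $t^{N-1}E_{\bar x_j}(-t^{-1})$ in the notation of \eqref{genersymmfunc}, and substituting $t=x_l$ correctly yields $(V(\bar x)W)_{lj}=\delta_{lj}$; since a right inverse of a square matrix over a field is the inverse, the conclusion follows. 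Pleasingly, this is the very expansion the paper itself deploys in Appendix A when inverting the partial Cauchy matrix (there written $\prod_{p\neq i}(v_k-u_p)=v_k^{n-1}E_{-\bar u_i}(v_k^{-1})$), so your route is consistent with the paper's own toolkit, and the remark following the proposition (equation \eqref{eqVand}) is precisely your computation read off from $V(\bar x)^{-1}V(\bar x)$ instead of $V(\bar x)V(\bar x)^{-1}$.

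Two small points that you silently, and rightly, corrected: the determinant formula as printed in the proposition runs over $1\leqslant i,j\leqslant N$, which would include the vanishing factors $x_i-x_i$ and both orderings of each pair; it must read $1\leqslant i<j\leqslant N$, as your degree count assumes. Likewise the index $n$ appearing in the stated inverse should be $N$, which your convention $n=N$ handles.
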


 \begin{remark}
  We obtain directly the equation
  \begin{equation}\label{eqVand}
   \forall i,j \in [\![1,N]\!], \;\;\; \sum_{k=1}^N x_k^{j-1} (-1)^{N-i}e_{N-i}(\bar{x}_k) \underset{q \neq k}{\prod_{q=1}^N} \frac{1}{x_k-x_q}  = \delta_{ij}.
  \end{equation}
  with the coefficient $(i,j)$ of the matrix $V(\bar{x})^{-1} V(\bar{x})$.
 \end{remark}

 \subsection{Complex analysis}

\begin{prop}\label{calculsum}
    We consider $F$ a complex rational function $F(z) = P(z)/Q(z)$ where $P$ and $Q$ are complex polynomials with $\mathrm{deg} \, Q > \mathrm{deg} \, P + 1$. We denote $a_1, \dots, a_q $ the poles of $F$ (in other words the zeros of $Q$). We assume that there are only one order pole. Then
    \begin{equation}
        \sum_{k=1}^q \mathrm{Res}(F,a_k) = 0.
    \end{equation}
\end{prop}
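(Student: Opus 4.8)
The plan is to prove this via a contour integral over a large circle together with the residue theorem, exploiting the degree gap between $Q$ and $P$. Concretely, I would fix a radius $R$ large enough that all the poles $a_1,\dots,a_q$ lie strictly inside the circle $\gamma_R=\{|z|=R\}$ (possible since $Q$ has finitely many zeros), and apply the residue theorem to write
\begin{equation}
\oint_{\gamma_R} F(z)\,\mathrm{d}z = 2\pi i \sum_{k=1}^q \mathrm{Res}(F,a_k).
\end{equation}
The right-hand side is independent of $R$ as soon as $\gamma_R$ encloses all the poles, so it suffices to show that the left-hand side tends to $0$ as $R\to\infty$.

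The second step is the growth estimate. Writing $p=\deg P$ and $d=\deg Q$, the hypothesis $d > p+1$ means $d\geqslant p+2$. Hence for $|z|$ large there is a constant $C>0$ with $|F(z)|=|P(z)|/|Q(z)| \leqslant C/|z|^2$, since the numerator grows like $|z|^{p}$ and the denominator like $|z|^{d}\geqslant |z|^{p+2}$. Parametrising $\gamma_R$ and using the standard $ML$-bound,
\begin{equation}
\left| \oint_{\gamma_R} F(z)\,\mathrm{d}z \right| \leqslant \frac{C}{R^2}\cdot 2\pi R = \frac{2\pi C}{R} \xrightarrow[R\to\infty]{} 0.
\end{equation}

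Combining the two displays forces $2\pi i \sum_{k=1}^q \mathrm{Res}(F,a_k)=0$, and therefore $\sum_{k=1}^q \mathrm{Res}(F,a_k)=0$, which is the claim. I would remark that the simple-pole hypothesis is not in fact needed for the conclusion; it is only the degree condition $\deg Q \geqslant \deg P + 2$ that matters, and the argument is unchanged for poles of higher order (the residue theorem applies verbatim). The assumption is harmless and merely reflects the situation in which the proposition is applied in the body of the paper.

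The only genuinely delicate point is the growth estimate, and even that is routine: one must simply justify the bound $|F(z)|=O(|z|^{-2})$ uniformly on large circles from the degree gap. An alternative, essentially equivalent, route would be to invoke the residue at infinity, $\mathrm{Res}(F,\infty)=-\sum_{k=1}^q \mathrm{Res}(F,a_k)$, and observe that it vanishes precisely because the Laurent expansion of $F$ at infinity has no $1/z$ term when $\deg Q \geqslant \deg P + 2$; I would favour the explicit contour estimate above since it is self-contained and avoids introducing the residue at infinity as extra machinery.
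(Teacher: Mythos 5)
Your proof is correct and follows essentially the same route as the paper: both apply the residue theorem on a large circle $|z|=R$ and kill the contour integral as $R\to\infty$ via the $ML$-bound, with the degree gap $\deg Q \geqslant \deg P + 2$ supplying the decay; the paper merely makes the estimate $|F(z)|=O(|z|^{-2})$ explicit by factorizing $P$ and $Q$ into linear factors. Your remark that the simple-pole hypothesis is superfluous is also accurate.
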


\begin{proof}
    We take $R > \mathrm{max} |a_k| $. There exists an open set $U$ of $\CC$ that contains the circle of radius $R$ and center $z = 0$ but not the poles of $F$. $F$ is continuous on $U$ so, according to the theorem 2.3 of the section III §2 of \cite{Lang1999}, 
    \begin{equation}
        \left| \oint_{|z|=R} F(z) \deriv z \right| \leqslant \underset{|z|=R}{\text{max}} |F(z)| \times 2\pi R.
    \end{equation}
    We can factorize $P$ and $Q$ in irreducible elements 
    \begin{equation}
        P(z) = A \prod_{j=1}^p (z-b_j) \;\; \text{with } A\neq 0 \;\; \text{and} \;\; Q(z) = \prod_{k=1}^q (z-a_k).
    \end{equation}
    and using the triangle inequality, for $|z| = R$ we have
    \begin{equation}
        |z-b_j| \leqslant |z|+|b_j| = R + |b_j| \;\;\; \text{and} \;\;\; |z-a_k| \geqslant ||z|-|a_k|| = R - |a_k|. 
    \end{equation}
    So
    \begin{equation}
        \left| \oint_{|z|=R} F(z) \deriv z \right| \leqslant |A| \frac{\prod_{j=1}^p(R + |b_j|)}{\prod_{k=1}^q (R - |a_k|)} \times 2\pi R.
    \end{equation}
    By hypothesis, $q > p+1$ so $q-p-1 > 0$ and the right term goes to zero when $R$ goes to infinity.
    In addition, by the residue theorem, 
    \begin{equation}
        \oint_{|z|=R} F(z) \deriv z = 2i\pi \sum_{k=1}^q \mathrm{Res}(F,a_k)
    \end{equation}
    and the sum is independent of $R$ as long as $R > \mathrm{max} |a_k|$. So $\sum_{k=1}^q \mathrm{Res}(F,a_k) = 0$.
\end{proof}

\newpage

\addcontentsline{toc}{section}{References}

\printbibliography

@Article{Belliard2024,
  author    = {Belliard, Samuel and Pimenta, Rodrigo Alves and Slavnov, Nikita A.},
  journal   = {SciPost Physics},
  title     = {Modified rational six vertex model on the rectangular lattice},
  year      = {2024},
  issn      = {2542-4653},
  month     = jan,
  number    = {1},
  volume    = {16},
  doi       = {10.21468/scipostphys.16.1.009},
  file      = {:Belliard2024 - Modified Rational Six Vertex Model on the Rectangular Lattice.pdf:PDF},
  groups    = {Main papers, Modified Bethe ansatz, Recent generalization},
  publisher = {Stichting SciPost},
}

@Article{Foda2012,
  author    = {Foda, O. and Wheeler, M.},
  journal   = {Journal of High Energy Physics},
  title     = {Partial domain wall partition functions},
  year      = {2012},
  issn      = {1029-8479},
  month     = jul,
  number    = {7},
  volume    = {2012},
  doi       = {10.1007/jhep07(2012)186},
  file      = {:Foda2012 - Partial Domain Wall Partition Functions.pdf:PDF},
  groups    = {Main papers, Recent generalization},
  publisher = {Springer Science and Business Media LLC},
}

@Article{Izergin1992,
  author    = {Izergin, A G and Coker, D A and Korepin, V E},
  journal   = {Journal of Physics A: Mathematical and General},
  title     = {Determinant formula for the six-vertex model},
  year      = {1992},
  issn      = {1361-6447},
  month     = aug,
  number    = {16},
  pages     = {4315--4334},
  volume    = {25},
  doi       = {10.1088/0305-4470/25/16/010},
  file      = {:Izergin1992 - Determinant Formula for the Six Vertex Model.pdf:PDF},
  groups    = {Main papers, History of the six vertex model, With the QISM},
  publisher = {IOP Publishing},
}

@Book{Chari2000,
  author    = {Chari, Vyjayanthi and Pressley, Andrew},
  publisher = {Cambridge Univ. Press},
  title     = {A guide to quantum groups},
  year      = {2000},
  address   = {Cambridge},
  edition   = {Reprint},
  isbn      = {0521558840},
  groups    = {Applications in maths and physics},
  pagetotal = {651},
  ppn_gvk   = {1616060751},
}

@Article{Belliard2018,
  author    = {Belliard, S and Slavnov, N A and Vallet, B},
  journal   = {Journal of Statistical Mechanics: Theory and Experiment},
  title     = {Scalar product of twisted XXX modified Bethe vectors},
  year      = {2018},
  issn      = {1742-5468},
  month     = sep,
  number    = {9},
  pages     = {093103},
  volume    = {2018},
  doi       = {10.1088/1742-5468/aaddac},
  file      = {:Belliard2018 - Scalar Product of Twisted XXX Modified Bethe Vectors.pdf:PDF:http\://arxiv.org/pdf/1805.11323},
  groups    = {Modified Bethe ansatz},
  publisher = {IOP Publishing},
}

@Article{Izergin1987,
  author  = {A. G. Izergin},
  journal = {Doklady Akademii Nauk SSSR},
  title   = {Partition function of a six-vertex model in a finite volume},
  year    = {1987},
  month   = nov,
  number  = {2},
  pages   = {331--333},
  volume  = {297},
  file    = {:https\://www.mathnet.ru/php/archive.phtml?wshow=paper&jrnid=dan&paperid=7902&option_lang=eng:;:https\://www.mathnet.ru/php/archive.phtml?wshow=paper&jrnid=dan&paperid=7902&option_lang=eng:},
  groups  = {With the QISM},
  url     = {https://www.mathnet.ru/php/archive.phtml?wshow=paper&jrnid=dan&paperid=7902&option_lang=eng},
}

@Book{Baxter1982,
  author    = {Baxter, Rodney J.},
  publisher = {Dover Publications, Inc},
  title     = {Exactly solved models in statistical mechanics},
  year      = {1982},
  address   = {Mineola, N.Y},
  isbn      = {0486462714},
  note      = {"This Dover edition, first published in 2007, is an unabridged republication of the third edition of the work originally published by Academic Press, London, in 1982. A new chapter, "Subsequent Developments," has been prepared by the author for the present edition."},
  series    = {Dover books on physics},
  file      = {:Baxter1982 - Exactly Solved Models in Statistical Mechanics.pdf:PDF:https\://physics.anu.edu.au/research/ftp/_files/Exactly.pdf},
  groups    = {History of the six vertex model},
  pagetotal = {498},
  ppn_gvk   = {543966232},
}

@Article{Janjic2013,
  author        = {Janjic, Milan},
  title         = {Some Determinantal Identities},
  year          = {2013},
  month         = feb,
  abstract      = {Some applications of a result, which is proved recently, is considered. We first prove three determinantal identities concerning the binomial coefficient and Stirling numbers of the first and the second kind. We also easily obtain the inverse of the Vandermonde matrix. Then we derive a recurrence formula for sums of powers, which is similar to the well-known Newton identity. In the last section, we consider some sequences given by a homologous linear recurrence equation. A determinantal identity for the Fibonacci numbers of higher order is proved. We finish with an expression of the generalized Vandermonde determinant in terms of the standard Vandermonde determinant and elementary symmetric polynomials.},
  archiveprefix = {arXiv},
  copyright     = {arXiv.org perpetual, non-exclusive license},
  doi           = {10.48550/ARXIV.1302.2504},
  eprint        = {1302.2504},
  file          = {:Janjic2013 - Some Determinantal Identities.pdf:PDF:http\://arxiv.org/pdf/1302.2504v1},
  groups        = {Mathematics tools},
  keywords      = {Combinatorics (math.CO), FOS: Mathematics},
  primaryclass  = {math.CO},
  publisher     = {arXiv},
}

@Article{Pronko2019,
  author    = {Pronko, A. G. and Pronko, G. P.},
  journal   = {Journal of Mathematical Sciences},
  title     = {Off-Shell Bethe States and the Six-Vertex Model},
  year      = {2019},
  issn      = {1573-8795},
  month     = sep,
  number    = {5},
  pages     = {742--752},
  volume    = {242},
  doi       = {10.1007/s10958-019-04511-7},
  file      = {:Pronko2019 - Off Shell Bethe States and the Six Vertex Model.pdf:PDF:https\://link.springer.com/content/pdf/10.1007/s10958-019-04511-7.pdf},
  groups    = {Thermodynamic limit},
  publisher = {Springer Science and Business Media LLC},
}

@Article{Minin2021,
  author    = {Minin, Mikhail D. and Pronko, Andrei G.},
  journal   = {Symmetry, Integrability and Geometry: Methods and Applications},
  title     = {Boundary One-Point Function of the Rational Six-Vertex Model with Partial Domain Wall Boundary Conditions: Explicit Formulas and Scaling Properties},
  year      = {2021},
  issn      = {1815-0659},
  month     = dec,
  doi       = {10.3842/sigma.2021.111},
  file      = {:Minin2021 - Boundary One Point Function of the Rational Six Vertex Model with Partial Domain Wall Boundary Conditions_ Explicit Formulas and Scaling Properties.pdf:PDF:https\://www.emis.de/journals/SIGMA/2021/111/sigma21-111.pdf},
  groups    = {Thermodynamic limit},
  publisher = {SIGMA (Symmetry, Integrability and Geometry: Methods and Application)},
}

@Article{Korepin2000,
  author    = {Korepin, V and Zinn-Justin, P},
  journal   = {Journal of Physics A: Mathematical and General},
  title     = {Thermodynamic limit of the six-vertex model with domain wall boundary conditions},
  year      = {2000},
  issn      = {1361-6447},
  month     = sep,
  number    = {40},
  pages     = {7053--7066},
  volume    = {33},
  doi       = {10.1088/0305-4470/33/40/304},
  file      = {:Korepin2000 - Thermodynamic Limit of the Six Vertex Model with Domain Wall Boundary Conditions.pdf:PDF:https\://iopscience.iop.org/article/10.1088/0305-4470/33/40/304/pdf},
  groups    = {Thermodynamic limit},
  publisher = {IOP Publishing},
}

@Article{ZinnJustin2000,
  author    = {Zinn-Justin, P.},
  journal   = {Physical Review E},
  title     = {Six-vertex model with domain wall boundary conditions and one-matrix model},
  year      = {2000},
  issn      = {1095-3787},
  month     = sep,
  number    = {3},
  pages     = {3411--3418},
  volume    = {62},
  doi       = {10.1103/physreve.62.3411},
  file      = {:ZinnJustin2000 - Six Vertex Model with Domain Wall Boundary Conditions and One Matrix Model.pdf:PDF:http\://arxiv.org/pdf/math-ph/0005008v2},
  groups    = {Thermodynamic limit},
  publisher = {American Physical Society (APS)},
}

@Book{Korepin1993,
  author    = {Korepin, V E and Izergin, A G and Bogoliubov, N M},
  publisher = {Cambridge University Press},
  title     = {Quantum inverse scattering method and correlation functions},
  year      = {1993},
  address   = {Cambridge},
  isbn      = {9780511628832},
  note      = {Title from publisher's bibliographic system (viewed on 05 Oct 2015)},
  series    = {Cambridge monographs on mathematical physics},
  groups    = {With the QISM},
  pagetotal = {1555},
  ppn_gvk   = {883376725},
}

@Article{Korepin1982,
  author    = {Korepin, V. E.},
  journal   = {Communications in Mathematical Physics},
  title     = {Calculation of norms of Bethe wave functions},
  year      = {1982},
  issn      = {1432-0916},
  month     = sep,
  number    = {3},
  pages     = {391--418},
  volume    = {86},
  doi       = {10.1007/bf01212176},
  file      = {:Korepin1982 - Calculation of Norms of Bethe Wave Functions.pdf:PDF:https\://link.springer.com/content/pdf/10.1007/BF01212176.pdf},
  groups    = {With the QISM},
  publisher = {Springer Science and Business Media LLC},
}

@Article{Tsuchiya1998,
  author    = {Tsuchiya, Osamu},
  journal   = {Journal of Mathematical Physics},
  title     = {Determinant formula for the six-vertex model with reflecting end},
  year      = {1998},
  issn      = {1089-7658},
  month     = nov,
  number    = {11},
  pages     = {5946--5951},
  volume    = {39},
  doi       = {10.1063/1.532606},
  file      = {:Tsuchiya1998 - Determinant Formula for the Six Vertex Model with Reflecting End.pdf:PDF:http\://arxiv.org/pdf/solv-int/9804010},
  groups    = {Boundary conditions},
  publisher = {AIP Publishing},
}

@Article{Kostov2012,
  author    = {Kostov, Ivan},
  journal   = {Physical Review Letters},
  title     = {Classical Limit of the Three-Point Function ofN=4Supersymmetric Yang-Mills Theory from Integrability},
  year      = {2012},
  issn      = {1079-7114},
  month     = jun,
  number    = {26},
  pages     = {261604},
  volume    = {108},
  doi       = {10.1103/physrevlett.108.261604},
  file      = {:Kostov2012 - Classical Limit of the Three Point Function OfN=4Supersymmetric Yang Mills Theory from Integrability.pdf:PDF:http\://arxiv.org/pdf/1203.6180},
  groups    = {Recent generalization},
  publisher = {American Physical Society (APS)},
}

@Article{Kostov2012a,
  author    = {Kostov, Ivan},
  journal   = {Journal of Physics A: Mathematical and Theoretical},
  title     = {Three-point function of semiclassical states at weak coupling},
  year      = {2012},
  issn      = {1751-8121},
  month     = nov,
  number    = {49},
  pages     = {494018},
  volume    = {45},
  doi       = {10.1088/1751-8113/45/49/494018},
  file      = {:Kostov2012a - Three Point Function of Semiclassical States at Weak Coupling.pdf:PDF:http\://arxiv.org/pdf/1205.4412},
  groups    = {Recent generalization},
  publisher = {IOP Publishing},
}

@Article{Belliard2015,
  author    = {Belliard, Samuel and Pimenta, Rodrigo A.},
  journal   = {Symmetry, Integrability and Geometry: Methods and Applications},
  title     = {Slavnov and Gaudin-Korepin Formulas for Models without U(1) Symmetry: the Twisted XXX Chain},
  year      = {2015},
  issn      = {1815-0659},
  month     = dec,
  doi       = {10.3842/sigma.2015.099},
  file      = {:Belliard2015 - Slavnov and Gaudin Korepin Formulas for Models without U(1) Symmetry_ the Twisted XXX Chain.pdf:PDF:http\://www.emis.de/journals/SIGMA/2015/099/sigma15-099.pdf},
  groups    = {Modified Bethe ansatz},
  publisher = {SIGMA (Symmetry, Integrability and Geometry: Methods and Application)},
}

@Article{Sogo1993,
  author    = {Sogo, Kiyoshi},
  journal   = {Journal of the Physical Society of Japan},
  title     = {Time-Dependent Orthogonal Polynomials and Theory of Soliton –Applications to Matrix Model, Vertex Model and Level Statistics},
  year      = {1993},
  issn      = {1347-4073},
  month     = jun,
  number    = {6},
  pages     = {1887--1894},
  volume    = {62},
  doi       = {10.1143/jpsj.62.1887},
  file      = {:JPSJ.62.1887.pdf:PDF},
  groups    = {Thermodynamic limit},
  publisher = {Physical Society of Japan},
}

@Book{Macdonald1995,
  author    = {Macdonald, Ian G.},
  publisher = {Clarendon Press, Oxford University Press},
  title     = {Symmetric functions and Hall polynomials},
  year      = {1995},
  address   = {New York},
  edition   = {2nd},
  isbn      = {0198534892},
  note      = {Includes index},
  series    = {Oxford mathematical monographs},
  file      = {:Macdonald1995 - Symmetric Functions and Hall Polynomials.pdf:PDF:https\://math.berkeley.edu/~corteel/MATH249/macdonald.pdf},
  groups    = {Mathematics tools},
  pagetotal = {475},
  ppn_gvk   = {399555633},
}

@Book{Lang1999,
  author    = {Lang, Serge},
  publisher = {Springer New York},
  title     = {Complex Analysis},
  year      = {1999},
  isbn      = {9781475730838},
  doi       = {10.1007/978-1-4757-3083-8},
  file      = {:Lang1999 - Complex Analysis.pdf:PDF},
  groups    = {Mathematics tools},
  issn      = {0072-5285},
  journal   = {Graduate Texts in Mathematics},
}

@Book{Gourdon2021,
  author  = {Gourdon, Xavier},
  editor  = {Ellipses},
  title   = {Algèbre Probabilités},
  year    = {2021},
  edition = {3rd},
  isbn    = {978-2-3400-5676-3},
  groups  = {Mathematics tools},
}

@Article{Gorsky2014,
  author    = {Gorsky, A. and Zabrodin, A. and Zotov, A.},
  journal   = {Journal of High Energy Physics},
  title     = {Spectrum of quantum transfer matrices via classical many-body systems},
  year      = {2014},
  issn      = {1029-8479},
  month     = jan,
  number    = {1},
  volume    = {2014},
  doi       = {10.1007/jhep01(2014)070},
  file      = {:Gorsky2014 - Spectrum of Quantum Transfer Matrices Via Classical Many Body Systems.pdf:PDF:https\://link.springer.com/content/pdf/10.1007/JHEP01(2014)070.pdf},
  groups    = {Modified Bethe ansatz},
  publisher = {Springer Science and Business Media LLC},
}

@Article{Lyberg2018,
  author    = {Lyberg, I. and Korepin, V. and Ribeiro, G. A. P. and Viti, J.},
  journal   = {Journal of Mathematical Physics},
  title     = {Phase separation in the six-vertex model with a variety of boundary conditions},
  year      = {2018},
  issn      = {1089-7658},
  month     = may,
  number    = {5},
  volume    = {59},
  doi       = {10.1063/1.5018324},
  file      = {:Lyberg2018 - Phase Separation in the Six Vertex Model with a Variety of Boundary Conditions.pdf:PDF:https\://arxiv.org/pdf/1711.07905},
  groups    = {Thermodynamic limit},
  publisher = {AIP Publishing},
}

@Article{Motegi2024a,
  author        = {Motegi, Kohei and Ohkawa, Ryo},
  title         = {Algebraic formulas and Geometric derivation of Source Identities},
  year          = {2024},
  month         = apr,
  abstract      = {Source identities are fundamental identities between multivariable special functions. We give a geometric derivation of rational and trigonometric source identities. We also give a systematic derivation and extension of various determinant representations for source functions which appeared in previous literature as well as introducing the elliptic version of the determinants, and obtain identities between determinants. We also show several symmetrization formulas for the rational version.},
  archiveprefix = {arXiv},
  copyright     = {Creative Commons Attribution Share Alike 4.0 International},
  doi           = {10.48550/ARXIV.2404.08141},
  eprint        = {2404.08141},
  file          = {:Motegi2024a - Algebraic Formulas and Geometric Derivation of Source Identities.pdf:PDF:http\://arxiv.org/pdf/2404.08141v2},
  groups        = {Recent generalization},
  keywords      = {Algebraic Geometry (math.AG), Mathematical Physics (math-ph), Quantum Algebra (math.QA), FOS: Mathematics, FOS: Physical sciences},
  primaryclass  = {math.AG},
  publisher     = {arXiv},
}

@Article{Sklyanin1988,
  author    = {Sklyanin, E K},
  journal   = {Journal of Physics A: Mathematical and General},
  title     = {Boundary conditions for integrable quantum systems},
  year      = {1988},
  issn      = {1361-6447},
  month     = may,
  number    = {10},
  pages     = {2375--2389},
  volume    = {21},
  doi       = {10.1088/0305-4470/21/10/015},
  file      = {:Sklyanin1988 - Boundary conditions for integrable quantum systems.pdf:PDF},
  groups    = {Boundary conditions},
  publisher = {IOP Publishing},
}

@Article{Galleas2018,
  author        = {Galleas, W.},
  title         = {New determinants in the 8VSOS model with domain-wall boundaries},
  year          = {2018},
  month         = dec,
  abstract      = {In this letter we show the partition function of the 8VSOS model with domain-wall boundaries satisfies the same type of functional equations as its six-vertex model counterpart. We then use these refined functional equations to obtain novel determinantal representations for the aforementioned partition function.},
  archiveprefix = {arXiv},
  copyright     = {arXiv.org perpetual, non-exclusive license},
  doi           = {10.48550/ARXIV.1812.10781},
  eprint        = {1812.10781},
  file          = {:Galleas2018 - New Determinants in the 8VSOS Model with Domain Wall Boundaries.pdf:PDF:http\://arxiv.org/pdf/1812.10781v2},
  groups        = {8VSOS with DWBC},
  keywords      = {Mathematical Physics (math-ph), Exactly Solvable and Integrable Systems (nlin.SI), FOS: Physical sciences},
  primaryclass  = {math-ph},
  publisher     = {arXiv},
}

@Article{Pakuliak2008,
  author        = {Pakuliak, S. and Rubtsov, V. and Silantyev, A.},
  journal       = {Journal of Physics A: Mathematical and Theoretical},
  title         = {SOS model partition function and the elliptic weight functions},
  year          = {2008},
  issn          = {1751-8121},
  month         = jun,
  number        = {29},
  pages         = {295204},
  volume        = {41},
  abstract      = {We generalize a recent observation [arXiv:math/0610433] that the partition function of the 6-vertex model with domain-wall boundary conditions can be obtained by computing the projections of the product of the total currents in the quantum affine algebra $U_{q}(\hat{\mathfrak{sl}}_{2})$ in its current realization. A generalization is proved for the the elliptic current algebra [arXiv:q-alg/9703018,arXiv:q-alg/9601022]. The projections of the product of total currents are calculated explicitly and are represented as integral transforms of the product of the total currents. We prove that the kernel of this transform is proportional to the partition function of the SOS model with domain-wall boundary conditions.},
  archiveprefix = {arXiv},
  copyright     = {Assumed arXiv.org perpetual, non-exclusive license to distribute this article for submissions made before January 2004},
  doi           = {10.1088/1751-8113/41/29/295204},
  eprint        = {0802.0195},
  file          = {:Pakuliak2008 - SOS Model Partition Function and the Elliptic Weight Functions.pdf:PDF:http\://arxiv.org/pdf/0802.0195v1},
  groups        = {8VSOS with DWBC},
  keywords      = {Quantum Algebra (math.QA), FOS: Mathematics, 17B37; 81R10; 81R50},
  primaryclass  = {math.QA},
  publisher     = {IOP Publishing},
}

@Article{Rosengren2008,
  author        = {Rosengren, Hjalmar},
  journal       = {Adv. Appl. Math. 43 (2009), 137-155},
  title         = {An Izergin-Korepin-type identity for the 8VSOS model, with applications to alternating sign matrices},
  year          = {2008},
  month         = jan,
  abstract      = {We obtain a new expression for the partition function of the 8VSOS model with domain wall boundary conditions, which we consider to be the natural extension of the Izergin-Korepin formula for the six-vertex model. As applications, we find dynamical (in the sense of the dynamical Yang-Baxter equation) generalizations of the enumeration and 2-enumeration of alternating sign matrices. The dynamical enumeration has a nice interpretation in terms of three-colourings of the square lattice.},
  archiveprefix = {arXiv},
  copyright     = {Assumed arXiv.org perpetual, non-exclusive license to distribute this article for submissions made before January 2004},
  doi           = {10.48550/ARXIV.0801.1229},
  eprint        = {0801.1229},
  file          = {:Rosengren2008 - An Izergin Korepin Type Identity for the 8VSOS Model, with Applications to Alternating Sign Matrices.pdf:PDF:http\://arxiv.org/pdf/0801.1229v2},
  groups        = {8VSOS with DWBC},
  keywords      = {Combinatorics (math.CO), Mathematical Physics (math-ph), Exactly Solvable and Integrable Systems (nlin.SI), FOS: Mathematics, FOS: Physical sciences, 05A15; 82B20; 82B23},
  primaryclass  = {math.CO},
  publisher     = {arXiv},
}

\end{document}